\documentclass[10pt,a4paper]{article}

\usepackage[utf8]{inputenc}
\usepackage{blindtext, graphicx, subfigure, amsmath, wrapfig, siunitx, tabto, amssymb, cite, ragged2e, multicol, tikz, physics, color, tcolorbox, amsfonts, algorithm, algpseudocode, amsthm, hyperref}
\usetikzlibrary{shapes.geometric, arrows}

\newtheorem{theorem}{Theorem}

\newtheorem{proposition}{Proposition}
\newtheorem{assumption}{Assumption}

\theoremstyle{definition}
\newtheorem{definition}{Definition}
\theoremstyle{remark}
\newtheorem{remark}{Remark}

\usepackage[top=0.75in, bottom=1in, left=0.75in, right=0.75in]{geometry}

\tikzstyle{startstop} = [rectangle, rounded corners, minimum width=1cm, minimum height=1cm,text centered, draw=black, fill=red!30]
\tikzstyle{io} = [rectangle, minimum width=1.5cm, minimum height=1cm, text centered, draw=black, fill=green!10]
\tikzstyle{process} = [rectangle, minimum width=1.5cm, minimum height=1cm, text width=3.5cm, text centered, draw=black, fill=orange!30]
\tikzstyle{summer} = [circle, minimum width=0.75cm, minimum height=0.75cm, draw=black, fill=green!1]
\tikzstyle{route} = [circle, fill, inner sep=1.5pt]
\tikzstyle{arrow} = [thick,->,>=stealth]

\begin{document}
	
\begin{center}
	\Large
	\textbf{Robust Model Reference Adaptive Control with Combined Adaptation under Finite Excitation Condition}
\end{center}
	
\begin{center}
	Manish Patel and Arnab Maity \\
	\textit{
		Indian Institute of Technology Bombay, Mumbai, India
	}	
\end{center}

\section*{Abstract}
In adaptive control, parametric uncertainties in linear-in-parameter form consist of unknown parameters and known regressor signals. Convergence of the unknown parameters to their ideal values requires the regressor to satisfy a persistent excitation (PE) condition, which depends on future data and is therefore infeasible to guarantee online. Memory-based parameter update laws address this by enabling ideal parameter convergence under the online-verifiable finite excitation (FE) condition. In this paper, a new algorithm is proposed to construct a memory term via the Modified Gram-Schmidt orthogonalization procedure for a class of multi-input multi-output nonlinear systems with an unknown diagonal control effectiveness matrix and bounded nonparametric uncertainties. Under the finite excitation condition, the constructed memory term yields an identity coefficient matrix in the parameter estimation error dynamics. The identity coefficient matrix eliminates the need for time-varying adaptation gains, enables an explicit ultimate bound on the parameter estimation error, and preserves the structure of the nonparametric uncertainty bound under the memory term. Building on this, a combined adaptation law is developed for controller gain estimation under FE. The closed-loop tracking and estimation errors are shown to decay exponentially to a neighborhood of the origin, characterized by an explicit ultimate bound, with a decay rate that depends solely on user-defined gains and system constants, independent of the level of regressor excitation. This removes the dependence of the convergence rate on the level of regressor excitation, a key limitation of existing approaches such as concurrent learning, memory regressor extension, and DREM. 

\textbf{Keywords:} Combined Adaptive Control, Robust Model Reference Adaptive Control, Unknown Control Effectiveness Matrix, Finite Excitation, Exponential Parameter Convergence.
	
\section{Introduction}
\label{section:Introduction}

The design of adaptive controllers for solving regulation and tracking problems is well-established for systems with parametric uncertainties \cite{narendra2012stable, ioannou1996robust, sastry1989adaptive}. Such controllers estimate unknown parameters to match the system dynamics or update controller gains to achieve the desired objectives. In practical systems, the most commonly occurring parametric uncertainties can be expressed in a linear-in-parameter form, involving unknown parameters and known regressor signals. These unknown parameters are estimated online to achieve the primary objective of stable regulation and tracking. As long as the primary objective is fulfilled, convergence of the estimated parameters to their ideal values is not strictly necessary. Boundedness of the parameter estimates is sufficient to ensure closed-loop stability and asymptotic regulation or tracking. However, in the presence of nonparametric uncertainties, the parameter estimates may diverge, leading to unbounded behavior \cite{ioannou1996robust}. This divergence can destabilize the closed-loop system and prevent the controller from achieving the desired objectives. Consequently, without robustness modifications, conventional adaptive controllers may lack robustness to nonparametric uncertainties. \par

Robustness modification terms in adaptation laws are designed to prevent the parameter estimates from diverging \cite[Chapter 8]{lavretsky2013robust}. However, each of these modifications have trade-offs. Some require prior knowledge of upper bounds on unknown parameters or nonparametric uncertainties, while others hinder the ideal parameter convergence even when such uncertainties are absent. Convergence of the parameter estimates to their ideal values not only ensures robustness against nonparametric uncertainties but also enables unique system identification. The necessary and sufficient condition for parameter convergence is that the regressor signal satisfies the persistent excitation (PE) condition \cite{narendra1987persistent}. However, PE inherently requires knowledge of future system data, rendering it impractical to verify or ensure online. In tracking problems,  this stringent condition is translated into requirements on the reference signal. If the reference signal contains as many spectral lines as the number of unknown parameters, the regressor is said to be PE \cite{boyd1986necessary}. Recent work has proposed transient enrichment of the reference signal to relax the PE condition for robust adaptive controllers \cite{gallegos2024relaxed}. Nevertheless, reference signals are typically not designer-specified but are determined by the system's task or mission. While adding small perturbation signals may help ensure sufficient spectral content for parameter convergence, such modifications can degrade overall system performance. Therefore, achieving parameter convergence by imposing conditions on reference signals is generally not a practical solution in adaptive control applications. \par

Over the past decade, the adaptive control research community has made significant efforts to derive feasible conditions that are less restrictive than persistent excitation (PE) yet still guarantee parameter convergence. One notable advancement is the introduction of a memory term in the adaptation law, which updates parameter estimates using both instantaneous and past system data \cite{chowdhary2013concurrent}, \cite{hocht2015frequency}, \cite{roy2017uges}, \cite{cho2017composite}, \cite{pan2018composite}. These developments build upon earlier pioneering works such as \cite{lion1967rapid} and \cite{kreisselmeier1977adaptive}. The memory term enables the relaxation of the PE condition by replacing it with a finite excitation (FE) condition. The FE condition is milder than PE because it relies solely on past data and does not require future information, making it verifiable in real time. Gradient-based and least-squares-based parameter update laws, when augmented with memory terms, can guarantee exponential convergence of the parameter estimation error to the origin under the FE condition, hence, the robustness. These contributions represent substantial progress toward developing practical conditions for ensuring parameter convergence to their ideal values. However, two critical challenges remain. The first issue is that the convergence rate of the parameter estimation error depends directly on the minimum eigenvalue of the coefficient matrix in the parameter estimation error dynamics. Although time-varying adaptation gains have been proposed to accelerate convergence \cite{kim2022data}, \cite{patel2022parameter}, \cite{glushchenko2022exponentially}, they can become excessively large in the case of ill-conditioned matrices, leading to stiffness in the error dynamics \cite{ioannou1996robust}. While normalizing by the maximum eigenvalue can alleviate stiffness, it significantly slows the convergence rate. The second issue arises from the fact that the memory term transforms the nonparametric uncertainty vector using a memory matrix. Since this matrix depends on closed-loop signals, it becomes difficult to derive an explicit ultimate bound on the estimation error. As a result, although robustness to nonparametric uncertainty can be established, the size of the ultimate bound near the origin remains dependent on the closed-loop behavior. \par

In an effort to partially address the aforementioned issues, Dynamic Regressor Extension and Mixing (DREM) \cite{aranovskiy2016parameters} constructs an instantaneous square matrix by processing system data through several stable filters, which are subsequently mixed to supply a time-varying diagonal coefficient matrix to the parameter estimation error dynamics.  Under the finite excitation (FE) condition, the diagonal elements of this matrix are strictly negative \cite{yi2022conditions} and are equal to the square of the determinant of the instantaneous matrix. Consequently, the convergence rate of the parameter estimation error is proportional to the square of this determinant. For  ill-conditioned matrices, however, the determinant and thus the convergence rate can become arbitrarily small, necessitating large time-varying adaptation gains to achieve meaningful convergence \cite{glushchenko2022exponentially}. Using the inverse of a small determinant increases the sensitivity of the closed-loop system to the adaptation gain, which may compromise stability. Therefore, while this approach partially mitigates the first issue, the second issue remains unresolved. Specifically, during the mixing process, the adjugate of the instantaneous matrix, whose entries depend nonlinearly on closed-loop signals,  transforms the nonparametric uncertainty vector, making it infeasible to derive an explicit ultimate bound on the estimation error. An additional practical limitation is that the number of filters required for creating the instantaneous matrix scales with the number of unknown parameters. As the parameter dimension increases, selecting appropriate cut-off frequencies for each filter becomes increasingly tedious. To the best of the authors' knowledge, no systematic procedure currently exists for this task. \par

Existing methods that relax the persistent excitation condition to the finite excitation condition have been developed for solving regressor equations \cite{marino2022exponentially}, \cite{ortega2022new}, adaptive control with a known control effectiveness vector \cite{cho2017composite}, combined adaptive control with a known lower bound of the control effectiveness vector \cite{roy2017combined}, \cite{gerasimov2018relaxing}, and composite adaptive control in the presence of nonparametric uncertainties for Euler-Lagrange systems \cite{pan2018composite}. In combined adaptive control, the adaptation laws for estimating controller gains utilize both tracking and estimation errors \cite{duarte1989combined}. When memory terms are introduced into the parameter update law, additional terms appear in the Lyapunov-based stability analysis. These terms prevent the derivation of clean exponential decay conditions, thereby obstructing the establishment of explicit bounds on the closed-loop signals. To the best of the authors' knowledge, the design of a robust model reference adaptive controller with combined adaptation under the FE condition for a class of multi-input multi-output nonlinear systems with an unknown diagonal control effectiveness matrix and nonparametric uncertainties remains unaddressed. To address these challenges, we recently developed an algorithm for a single-input system without nonparametric uncertainties \cite{patel2025exponentially} that constructs a memory term yielding an identity coefficient matrix in the parameter estimation error dynamics under the FE condition, making the convergence rate independent of the excitation level. The present paper extends this approach to a significantly more general setting, as detailed below.\par

In this paper, we extend the approach to a class of multi-input multi-output systems with an unknown diagonal control effectiveness matrix and bounded nonparametric uncertainties. We propose an update law containing a modified memory term to estimate the unknown system parameters. The nonparametric uncertainty undergoes an orthogonal transformation and thus remains invariant under the designed memory term. The algorithm from \cite{patel2025exponentially} is modified to allow derivation of an explicit ultimate bound on the estimation error. Furthermore, we develop a combined adaptation law for estimating controller gains under the FE condition. Specifically, the controller gains are updated using the parameter estimates only when the FE condition is verified online. The proposed combined adaptive control scheme guarantees exponential decay of the tracking and estimation errors in a neighborhood of the origin, with an explicit ultimate bound under the FE condition. The exponential decay rate is independent of the entries of the memory matrix, owing to the modified algorithm design. 

The key contributions of this paper, beyond its conference version \cite{patel2025exponentially}, are as follows.

\begin{enumerate}
	\item The conference version addressed a single-input system without nonparametric uncertainties. This paper extends the framework to a class of multi-input multi-output nonlinear systems with an unknown diagonal control effectiveness matrix and investigates robustness to bounded nonparametric uncertainties.
	\item The algorithm for constructing the memory matrix is modified to enable the derivation of an explicit bound on the parameter estimation error.
	\item In the conference version, combined adaptation was driven by extracting the ideal system parameters under the FE condition. In this work, combined adaptation is driven by the online parameter estimates, guaranteeing exponential decay of the estimation errors to a neighborhood of the origin, characterized by an explicit ultimate bound.
\end{enumerate}

The main contributions of the proposed combined adaptive control scheme are summarized as follows.

\begin{enumerate}
	\item An algorithm is developed to construct a memory term by orthogonalizing the regressor via the Modified Gram-Schmidt procedure and applying a corresponding transformation to the output equation, preserving the linear-in-parameter structure and enabling explicit characterization of the transformed nonparametric uncertainty bound.
	\item A parameter update law is proposed based on the constructed memory term, which yields an identity coefficient matrix in the parameter estimation error dynamics under the finite excitation condition. This eliminates the need for time-varying adaptation gains. The parameter estimation error is shown to be uniformly ultimately bounded.
	\item A combined adaptation law is proposed under the finite excitation condition. The combined tracking and estimation errors are proven to decay exponentially to a neighborhood of the origin with an explicit bound. Unlike existing methods, the decay rate is independent of the excitation level of the regressor and depends solely on user-defined gains and system constants.
\end{enumerate}

The remainder of the article is organized as follows. Section \ref{section:Preliminaries} presents the notation, matrix identities, and definitions of the persistent and finite excitation conditions used throughout the paper. Section \ref{section:Problem_Formulation} formulates the system dynamics, reference model, and control objectives. Section \ref{section:Robust_Adaptive Control} reviews robust model reference adaptive control with the $\sigma$-modification and establishes its limitations as the baseline for the proposed approach. Section \ref{section:Main_Results} presents the main contributions: the orthogonal basis construction algorithm and parameter update law (Section $\ref{subsec:estimation_of_system_parameters}$), the combined adaptation law and closed-loop stability analysis (Section $\ref{subsec:combined_adaptation_in_finite_excitation_condition}$). Simulation results comparing the proposed scheme against concurrent learning, Memory Regressor and Extension (MRE), and DREM-based methods are presented in Section \ref{section:Simulation_Results}. Section \ref{section:conclusion} concludes the paper.

\section{Preliminaries}
\label{section:Preliminaries}

The following notations and definitions are used throughout this article. A scalar variable is denoted by a lowercase letter $m$, a vector by a lowercase boldface letter $\mathbf{m}$, and a matrix by an uppercase boldface letter $\mathbf{M}$. The Euclidean norm of a vector is denoted by $\|\mathbf{m}\|$, the $\mathcal{L}_{\infty}$ norm by $\|\mathbf{m}\|_{\infty}$, and the Frobenius norm of a matrix $\mathbf{M}$ by $\|\mathbf{M}\|_{F}$. For a square matrix $\mathbf{M}$, the minimum and maximum eigenvalues are denoted by $\lambda_{\text{min}}(\mathbf{M})$ and $\lambda_{\text{max}}(\mathbf{M})$, respectively, while $\det(\mathbf{M})$ and $\text{tr}(\mathbf{M})$ denote its determinant and trace. Several matrix identities and inequalities used in this article are summarized below.

\begin{enumerate}
	\item $\left\| \mathbf{M} \right\|_{F} = \left\| \text{vec}\left(\mathbf{M}\right) \right\|$ for any $\mathbf{M} \in \mathbb{R}^{n \times m}$, where $\text{vec}\left(\mathbf{M}\right)$ denotes the column-wise vectorization of the matrix $\mathbf{M}$.	
	\item $\mathbf{M} \mathbf{M}^{\top} = \mathbf{I}$ if and only if $\mathbf{M}$ is an orthogonal matrix, where $\mathbf{I}$ is the identity matrix.	
	\item $\text{tr}(\mathbf{M}_{1} + \mathbf{M}_{2}) = \text{tr}(\mathbf{M}_{1}) + \text{tr}(\mathbf{M}_{2})$ for any $\mathbf{M}_{1}, \mathbf{M}_{2} \in \mathbb{R}^{n \times n}$.	
	\item $\text{tr}(\boldsymbol{N} \mathbf{M}^{\top} \mathbf{M}) \geq \lambda_{\text{min}}(\boldsymbol{N}) \|\mathbf{M}\|_{F}^{2}$, where $\boldsymbol{N} \in \mathbb{R}^{m \times m}$ is a diagonal matrix and $\mathbf{M} \in \mathbb{R}^{n \times m}$.	
	\item $\text{tr}(\mathbf{m}_{1} \mathbf{m}_{2}^{\top}) = \mathbf{m}_{1}^{\top} \mathbf{m}_{2}$ for any $\mathbf{m}_{1} \in \mathbb{R}^{n}$ and $\mathbf{m}_{2} \in \mathbb{R}^{m}$.	
	\item $\left| \text{tr}(\mathbf{M} \boldsymbol{N}) \right| \leq \|\mathbf{M}\|_{F} \|\boldsymbol{N}\|_{F}$.	
	\item The trace of a product of matrices is invariant under cyclic permutations, i.e., $\text{tr}(\mathbf{A} \mathbf{B} \mathbf{C}) = \text{tr}(\mathbf{B} \mathbf{C} \mathbf{A}) = \text{tr}(\mathbf{C} \mathbf{A} \mathbf{B})$.
\end{enumerate}

\begin{definition}[\cite{yuan1977probing}]
	A bounded signal $\boldsymbol{\phi}(t) \in \mathbb{R}^{n}$ is said to be \emph{persistently exciting} if there exist positive constants $\gamma$ and $T$ such that, for every $t > 0$, there exists a sequence of $n$ time instances $t_{i} \in [t, t+T]$, $i \in \{1, 2, \dots, n\}$, satisfying
	\begin{equation}
		\left\| 
		\begin{bmatrix} 
			\boldsymbol{\phi}(t_{1}) & \boldsymbol{\phi}(t_{2}) & \cdots & \boldsymbol{\phi}(t_{n}) 
		\end{bmatrix}^{-1} 
		\right\| \leq \gamma_{\text{PE}},
		\label{eq:pe_definition}
	\end{equation}
	where $\gamma_{\text{PE}}$ represents the excitation level of the signal.
\end{definition}

\begin{definition}
	A bounded signal $\boldsymbol{\phi}(t) \in \mathbb{R}^{n}$ is said to be \emph{finitely exciting} if there exist positive constants $\gamma$ and $T$ such that at least one sequence of $n$ time instances $t_{i} \in [t, t+T]$, $i \in \{1, 2, \dots, n\}$, satisfies
	\begin{equation}
		\left\| 
		\begin{bmatrix} 
			\boldsymbol{\phi}(t_{1}) & \boldsymbol{\phi}(t_{2}) & \cdots & \boldsymbol{\phi}(t_{n}) 
		\end{bmatrix}^{-1} 
		\right\| \leq \gamma_{\text{FE}},
		\label{eq:fe_definition}
	\end{equation}
	where $\gamma_{\text{FE}}$ denotes the excitation level of the signal.
\end{definition}

\begin{remark}
FE is strictly weaker than PE as it requires richness over one finite window, not uniformly for all time. This condition is invoked in Assumption $\ref{assump:fe_condition}$ for an augmented regressor, formally defined in Section $\ref{section:Problem_Formulation}$.
\end{remark}

\section{Problem Formulation}
\label{section:Problem_Formulation}

In this section, we formalize the system dynamics and define the control problem. Consider a class of uncertain nonlinear systems \cite{lavretsky2009combined} described by the following dynamics:
\begin{equation}
	\dot{\mathbf{x}} = \mathbf{A} \mathbf{x} + \mathbf{B} \boldsymbol{\Lambda} \left\{ \mathbf{u} + \boldsymbol{\Theta}^{\top} \boldsymbol{\phi}(\mathbf{x}) \right\} + \boldsymbol{\xi},
	\label{eq:system_dynamics}
\end{equation}
where $\mathbf{x} \in \mathbb{R}^{n}$ is the measurable state vector, and $\mathbf{u}\in \mathbb{R}^{m}$ is the control input. The nonlinear matched parametric uncertainty $\boldsymbol{\Theta}^{\top} \boldsymbol{\phi}(\mathbf{x})$ consists of a known, continuously differentiable regressor vector $\boldsymbol{\phi}: \mathbb{R}^{n} \to \mathbb{R}^{p}$ and an unknown constant parameter matrix $\boldsymbol{\Theta} \in \mathbb{R}^{p \times m}$. 

The matrix $\mathbf{A} \in \mathbb{R}^{n \times n}$ and diagonal matrix $\boldsymbol{\Lambda} \in \mathbb{R}^{m \times m}$ are unknown, whereas $\mathbf{B} \in \mathbb{R}^{n \times m}$ is known\footnote{This structural assumption is standard in model reference adaptive control and is motivated by practical applications such as aerospace and mechanical systems where the input geometry is determined by the physical design and is therefore known, while the	system dynamics may be uncertain due to unmodeled effects or parameter variations.}. The term $\boldsymbol{\xi} \in \mathbb{R}^{n}$ denotes an unmatched nonparametric uncertainty that satisfies the bound $\left\| \boldsymbol{\xi}(t) \right\| \leq \overline{\xi}$ for all $t \in \mathbb{R}^{+}$.

\begin{assumption}
	The system dynamics in \eqref{eq:system_dynamics} is controllable; that is, the pair $(\mathbf{A}, \mathbf{B} \boldsymbol{\Lambda})$ is controllable, and $\mathbf{B}$ is a known full column rank matrix, i.e., $m \leq n$.
	\label{assump:controllable}
\end{assumption}

\begin{assumption}
	The matrix $\boldsymbol{\Lambda}$ is an unknown full-rank diagonal matrix with known signs of its diagonal elements. There exists a diagonal matrix $\boldsymbol{\Lambda}_{s}$, with diagonal entries $\pm1$, such that the product $\boldsymbol{\Lambda}_{s} \boldsymbol{\Lambda}$ is positive definite. Let $\underline{\lambda} > 0$ denote a known lower bound on $\lambda_{\text{min}} \left( \boldsymbol{\Lambda} \right)$.
	\label{assump:Lambda}
\end{assumption}

Our first control objective is to design a control law $\mathbf{u}$ for reference state tracking. The desired trajectory $\mathbf{x}_r$ is generated by the following reference model:
\begin{equation}
	\dot{\mathbf{x}}_{r} = \mathbf{A}_{r} \mathbf{x}_{r} + \mathbf{B}_{r} \mathbf{r},
	\label{eq:reference_system}
\end{equation}
where $\mathbf{A}_{r} \in \mathbb{R}^{n \times n}$ is a Hurwitz matrix, $\mathbf{x}_{r} \in \mathbb{R}^{n}$ is the reference state, and $\mathbf{r} \in \mathbb{R}^{m}$ is a bounded, piecewise-continuous reference input. The matrix $\mathbf{A}_{r}$ satisfies the Lyapunov equation $\mathbf{A}_{r}^{\top} \mathbf{P} + \mathbf{P} \mathbf{A}_{r} + \mathbf{Q} = \mathbf{0}$, where $\mathbf{P}, \mathbf{Q} \in \mathbb{R}^{n \times n}$ are symmetric and positive definite.

The classical feed-forward feedback-based control law is
\begin{equation}
	\mathbf{u} = \mathbf{K}_{x}^{\top} \mathbf{x} + \mathbf{K}_{r}^{\top} \mathbf{r} - \boldsymbol{\Theta}^{\top} \boldsymbol{\phi}(\mathbf{x}),
	\label{eq:control_input_ideal}
\end{equation}
where $\mathbf{K}_{x} \in \mathbb{R}^{n \times m}$ and $\mathbf{K}_{r} \in \mathbb{R}^{m \times m}$ are constant feedback and feed-forward gain matrices, respectively.

Define the tracking error $\mathbf{e} = \mathbf{x} - \mathbf{x}_r$. Using \eqref{eq:system_dynamics}, \eqref{eq:reference_system}, and \eqref{eq:control_input_ideal}, the tracking error dynamics is given by
\begin{equation}
	\dot{\mathbf{e}} = \mathbf{A}_{r} \mathbf{e} + \left( \mathbf{A} + \mathbf{B} \boldsymbol{\Lambda} \mathbf{K}_{x}^{\top} - \mathbf{A}_{r} \right) \mathbf{x} + \left( \mathbf{B} \boldsymbol{\Lambda} \mathbf{K}_{r}^{\top} - \mathbf{B}_{r} \right) \mathbf{r} + \boldsymbol{\xi}.
	\label{eq:tracking_error_ideal}
\end{equation}

The controller gains in the control law $\eqref{eq:control_input_ideal}$ are selected to satisfy the standard matching condition in model reference adaptive control as given in the following assumption.

\begin{assumption}
	There exist matrices $\mathbf{K}_{x}$ and $\mathbf{K}_{r}$ such that
	\begin{equation}
		\mathbf{A} + \mathbf{B} \boldsymbol{\Lambda} \mathbf{K}_{x}^{\top} = \mathbf{A}_{r}, \quad \quad \mathbf{B} \boldsymbol{\Lambda} \mathbf{K}_{r}^{\top} = \mathbf{B}_{r}.
		\label{eq:matching_condition}
	\end{equation}
	\label{assump:matching_condition}
\end{assumption}
The existence of such gains follows from the controllability of $\left( \mathbf{A}, \mathbf{B} \mathbf{\Lambda} \right)$ and standard results in adaptive control \cite{ioannou1996robust}. However, since $\mathbf{A}$, $\boldsymbol{\Lambda}$, and $\boldsymbol{\Theta}$ are unknown, the ideal gains cannot be computed directly. Therefore, the modified control law is
\begin{equation}
	\mathbf{u} = \hat{\mathbf{K}}_{x}^{\top} \mathbf{x} + \hat{\mathbf{K}}_{r}^{\top} \mathbf{r} - \hat{\boldsymbol{\Theta}}^{\top} \boldsymbol{\phi}(\mathbf{x}),
	\label{eq:control_input_modified}
\end{equation}
where $\hat{\mathbf{K}}_{x}$, $\hat{\mathbf{K}}_{r}$, and $\hat{\boldsymbol{\Theta}}$ are the respective estimates of the ideal gain matrices. Substituting \eqref{eq:control_input_modified} into the system dynamics and using the matching condition \eqref{eq:matching_condition}, the tracking error dynamics become
\begin{equation}
	\dot{\mathbf{e}} = \mathbf{A}_{r} \mathbf{e} + \mathbf{B} \boldsymbol{\Lambda} \tilde{\mathbf{K}}_{x}^{\top} \mathbf{x} + \mathbf{B} \boldsymbol{\Lambda} \tilde{\mathbf{K}}_{r}^{\top} \mathbf{r} - \mathbf{B} \boldsymbol{\Lambda} \tilde{\boldsymbol{\Theta}}^{\top} \boldsymbol{\phi}(\mathbf{x}) + \boldsymbol{\xi},
	\label{eq:tracking_error}
\end{equation}
where the estimation errors are defined as $\tilde{\mathbf{K}}_{x} = \hat{\mathbf{K}}_{x} - \mathbf{K}_{x}$, $\tilde{\mathbf{K}}_{r} = \hat{\mathbf{K}}_{r} - \mathbf{K}_{r}$, and $\tilde{\boldsymbol{\Theta}} = \hat{\boldsymbol{\Theta}} - \boldsymbol{\Theta}$.

The second control objective is to ensure convergence of the estimates $\hat{\mathbf{K}}_{x}$, $\hat{\mathbf{K}}_{r}$, and $\hat{\boldsymbol{\Theta}}$ to their ideal values under the finite excitation condition (Assumption $\ref{assump:fe_condition}$). This objective requires that the following regressor vector satisfies the finite excitation condition.
\begin{equation}
	\boldsymbol{\varphi}(\mathbf{x},\mathbf{u}) = 
	\begin{bmatrix}
		\mathbf{x}^{\top} & \mathbf{u}^{\top} & \boldsymbol{\phi}^{\top} \left( \mathbf{x} \right)
	\end{bmatrix}^{\top} \in \mathbb{R}^{n+m+p}.
	\label{eq:regressor_definition}
\end{equation}
Note that each block of $\boldsymbol{\varphi}$ is paired with a corresponding unknown quantity.
\begin{assumption}
	The regressor signal $\boldsymbol{\varphi}(\mathbf{x},\mathbf{u})$ satisfies the finite excitation (FE) condition that requires richness only over a finite time window rather than uniformly for all time.
	\label{assump:fe_condition}
\end{assumption}

Under Assumption $\ref{assump:fe_condition}$, there exist positive constants $\gamma$ and $T$ such that, within the interval $\left[ t, t+T \right]$, a sequence of $q$ time instances $\{t_1, \dots, t_q\}$ exists for which the regressor matrix satisfies:
\begin{equation}
	\left\| \begin{bmatrix}
		\boldsymbol{\varphi}(\mathbf{x}(t_{1}),\mathbf{u}(t_{1})) & \hdots & \boldsymbol{\varphi}(\mathbf{x}(t_{q}),\mathbf{u}(t_{q}))
	\end{bmatrix}^{-1} \right\| \leq \gamma,
	\label{eq:regressor_fe_condition}
\end{equation}
where $\gamma$ denotes the excitation level of the regressor.

\begin{remark}
The conditions in Assumption \ref{assump:Lambda} are physically motivated. In aerospace and mechanical systems, the signs of the diagonal entries of $\Lambda$ reflect the cause-and-effect relationship between actuator commands and system response, for example, a positive throttle command produces positive thrust, and a positive control surface deflection produces a moment in a known direction. These signs are therefore known from the system's physical design, even when the exact magnitudes are uncertain. Furthermore, the diagonal entries of $\Lambda$ are related to the effective mass and inertia of the system, quantities whose lower bounds are determined by maximum loading conditions and structural design limits. Consequently, a known positive lower bound on $\lambda_{\text{min}} \left( \boldsymbol{\Lambda} \right)$ is available from design specifications, even without precise knowledge of $\Lambda$ itself.
\end{remark}

The next section briefly reviews key results in robust adaptive control with $\sigma$-modification, which form the foundation of our main contribution. Specifically, we propose a novel update law and adaptation algorithm that guarantee convergence of the unknown parameters to their ideal values under Assumption $\ref{assump:fe_condition}$. These estimates are then incorporated into the control law to achieve the stated tracking and parameter convergence objectives.

\section{Review of Robust Adaptive Control with $\sigma$-Modification}
\label{section:Robust_Adaptive Control}

The control input $\eqref{eq:control_input_modified}$ requires estimation of the controller gains. We begin by reviewing an existing adaptive law that incorporates a robustness modification term, commonly referred to as the $\sigma$-modification \cite{narendra2012stable}. The adaptive update laws for estimating the controller gain, with the $\sigma$-modification, are

\begin{equation}
		\begin{bmatrix} \dot{\hat{\mathbf{K}}}_{x} & \dot{\hat{\mathbf{K}}}_{r} & \dot{\hat{\boldsymbol{\Theta}}} \end{bmatrix} = -\sigma \begin{bmatrix} \hat{\mathbf{K}}_{x} & \hat{\mathbf{K}}_{r} & \hat{\boldsymbol{\Theta}} \end{bmatrix} + \begin{bmatrix} - \boldsymbol{\Gamma}_{x} \mathbf{x} \mathbf{e}^{\top} \mathbf{P} \mathbf{B} \boldsymbol{\Lambda}_{s} & - \boldsymbol{\Gamma}_{r} \mathbf{r} \mathbf{e}^{\top} \mathbf{P} \mathbf{B} \boldsymbol{\Lambda}_{s} & \boldsymbol{\Gamma}_{\theta} \boldsymbol{\phi}\left( \mathbf{x} \right) \mathbf{e}^{\top} \mathbf{P} \mathbf{B} \boldsymbol{\Lambda}_{s} \end{bmatrix},
	\label{eq:controller_gain_update_law_gradient}
\end{equation}
where $\sigma>0$, and $\boldsymbol{\Gamma}_{x}$, $\boldsymbol{\Gamma}_{r}$, and $\boldsymbol{\Gamma}_{\theta}$ are positive definite adaptation gain matrices. To streamline the presentation, we assume $\sigma=1$ and the gain matrices to be identity matrices in the remainder of the paper. The following proposition summarizes key properties of the robust adaptive scheme with the $\sigma$-modification.

\begin{proposition}
	Consider the class of uncertain nonlinear systems described by $\eqref{eq:system_dynamics}$, driven by the control input $\eqref{eq:control_input_modified}$, where the controller gains are updated using the adaptation laws $\eqref{eq:controller_gain_update_law_gradient}$. The closed-loop system exhibits the following properties.
	\begin{enumerate}
		\item All the closed-loop signals remain uniformly ultimately bounded.
		\item The ultimate bound depends on the upper-bound of the nonparametric uncertainty $\overline{\xi}$, the ideal controller gains $\mathbf{K}_{x}$ and $\mathbf{K}_{r}$, and the system parameter $\boldsymbol{\Theta}$.
		\item The ultimate bound does not decay to zero even in the absence of nonparametric uncertainty $\boldsymbol{\xi}$. 
	\end{enumerate}  
	\label{proposition:direct_adaptive}
\end{proposition}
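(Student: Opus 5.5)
The plan is a standard Lyapunov argument for $\sigma$-modification, with $\sigma=1$ and unit adaptation gains. Take
\begin{equation*}
V = \mathbf{e}^{\top}\mathbf{P}\mathbf{e} + \text{tr}\left(\tilde{\mathbf{K}}_{x}^{\top}\tilde{\mathbf{K}}_{x}\,|\boldsymbol{\Lambda}|\right) + \text{tr}\left(\tilde{\mathbf{K}}_{r}^{\top}\tilde{\mathbf{K}}_{r}\,|\boldsymbol{\Lambda}|\right) + \text{tr}\left(\tilde{\boldsymbol{\Theta}}^{\top}\tilde{\boldsymbol{\Theta}}\,|\boldsymbol{\Lambda}|\right),
\end{equation*}
where $|\boldsymbol{\Lambda}| = \boldsymbol{\Lambda}_{s}\boldsymbol{\Lambda}$ is positive definite and diagonal by Assumption \ref{assump:Lambda}.

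First, differentiate $V$ along the tracking error dynamics \eqref{eq:tracking_error}. Since the ideal gains are constant, $\dot{\tilde{\mathbf{K}}}_{x} = \dot{\hat{\mathbf{K}}}_{x}$, and likewise for the other two estimates. The cross terms $2\mathbf{e}^{\top}\mathbf{P}\mathbf{B}\boldsymbol{\Lambda}\tilde{\mathbf{K}}_{x}^{\top}\mathbf{x}$ and the analogous $\mathbf{r}$ and $\boldsymbol{\phi}$ terms cancel exactly against the gradient parts of \eqref{eq:controller_gain_update_law_gradient}. This uses the trace identities (5) and (7) and the fact that the diagonal matrices $\boldsymbol{\Lambda}_{s}$ and $\boldsymbol{\Lambda}$ commute, with $\boldsymbol{\Lambda}_{s}^{2}=\mathbf{I}$. (If the cancellation requires the factor 2 to be matched, I would weight the trace terms by $\tfrac{1}{2}$ or rescale $V$ accordingly.) What remains is
\begin{equation*}
\dot V = -\mathbf{e}^{\top}\mathbf{Q}\mathbf{e} + 2\mathbf{e}^{\top}\mathbf{P}\boldsymbol{\xi} - 2\,\text{tr}\left(\tilde{\mathbf{K}}_{x}^{\top}\hat{\mathbf{K}}_{x}|\boldsymbol{\Lambda}|\right) - \cdots,
\end{equation*}
with one such trace term for each of $\tilde{\mathbf{K}}_{x}$, $\tilde{\mathbf{K}}_{r}$ and $\tilde{\boldsymbol{\Theta}}$.

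Second, write $\hat{\mathbf{K}}_{x} = \tilde{\mathbf{K}}_{x} + \mathbf{K}_{x}$ and apply identities (4) and (6) together with Young's inequality. This gives
\begin{equation*}
-\text{tr}\left(\tilde{\mathbf{K}}_{x}^{\top}\hat{\mathbf{K}}_{x}|\boldsymbol{\Lambda}|\right) \le -\tfrac{1}{2}\,\underline{\lambda}\,\|\tilde{\mathbf{K}}_{x}\|_{F}^{2} + \tfrac{1}{2}\,\lambda_{\max}(\boldsymbol{\Lambda}_s\boldsymbol{\Lambda})\,\|\mathbf{K}_{x}\|_{F}^{2},
\end{equation*}
and the same bound holds for the $\mathbf{K}_{r}$ and $\boldsymbol{\Theta}$ terms. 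For the disturbance term, $2\mathbf{e}^{\top}\mathbf{P}\boldsymbol{\xi} \le \tfrac{1}{2}\lambda_{\min}(\mathbf{Q})\|\mathbf{e}\|^{2} + 2\lambda_{\max}^{2}(\mathbf{P})\,\overline{\xi}^{2}/\lambda_{\min}(\mathbf{Q})$. Combining these yields
\begin{equation*}
\dot V \le -c\,V + \delta,
\end{equation*}
where $c$ depends on $\lambda_{\min}(\mathbf{Q})/\lambda_{\max}(\mathbf{P})$, $\underline{\lambda}$ and $\lambda_{\max}(|\boldsymbol{\Lambda}|)$, and
\begin{equation*}
\delta = \kappa_{1}\,\overline{\xi}^{2} + \kappa_{2}\left(\|\mathbf{K}_{x}\|_{F}^{2} + \|\mathbf{K}_{r}\|_{F}^{2} + \|\boldsymbol{\Theta}\|_{F}^{2}\right)
\end{equation*}
for constants $\kappa_{1},\kappa_{2}$. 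The comparison lemma then gives $V(t)\le V(0)e^{-ct} + \delta/c$. This proves item 1: $\mathbf{e}$ and the estimation errors are uniformly ultimately bounded. Boundedness of $\mathbf{x}_{r}$, which follows from $\mathbf{A}_r$ Hurwitz and $\mathbf{r}$ bounded, then gives boundedness of $\mathbf{x}$, $\boldsymbol{\phi}(\mathbf{x})$ by continuity, and $\mathbf{u}$.

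Items 2 and 3 are read off $\delta$. The ultimate bound $\delta/c$ contains $\overline{\xi}$ and the norms of $\mathbf{K}_{x}$, $\mathbf{K}_{r}$ and $\boldsymbol{\Theta}$, which is item 2. Setting $\overline{\xi}=0$ leaves the $\kappa_2$ term, which is nonzero whenever the ideal gains are nonzero, so the bound does not vanish, which is item 3. The main obstacle is item 3, because an upper bound remaining positive does not by itself show that the true residual is nonzero. To make the claim sharp, I would note that the origin $(\mathbf{e},\tilde{\mathbf{K}}_{x},\tilde{\mathbf{K}}_{r},\tilde{\boldsymbol{\Theta}})=0$ is not an equilibrium of the closed loop when $\boldsymbol{\xi}=0$. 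Indeed, at that point $\dot{\tilde{\mathbf{K}}}_{x} = -\mathbf{K}_{x} \neq 0$, and similarly for the other estimates. Hence the $\sigma$-term persistently drives the estimates away from their ideal values, and exact convergence cannot occur unless the ideal parameters are zero.
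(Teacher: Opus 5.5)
Your proposal is correct and follows essentially the paper's argument in Appendix A. It uses the same Lyapunov function $V_e$ weighted by $\boldsymbol{\Lambda}_s\boldsymbol{\Lambda}$, the same exact cancellation of the cross terms (the factor 2 does match, so no rescaling is needed), and the same Young's-inequality bounds that leave residuals depending on $\overline{\xi}$ and on $\mathbf{K}_x$, $\mathbf{K}_r$, $\boldsymbol{\Theta}$; your $\dot V\le -cV+\delta$ with the comparison lemma is an equivalent substitute for the paper's class-$\mathcal{K}$ bounds and Khalil's UUB theorem. Your extra observation that the origin is not an equilibrium when $\boldsymbol{\xi}=0$, so the errors cannot converge to zero unless $\mathbf{K}_x$, $\mathbf{K}_r$, $\boldsymbol{\Theta}$ all vanish, sharpens item 3 beyond the paper, which only notes that the upper bound keeps the nonzero term $\rho_2$.
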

\renewcommand\qedsymbol{$\blacksquare$}
\begin{proof} The proof follows from a standard Lyapunov argument using the candidate $V_e$ defined in Appendix A; the key step is that the $\sigma$-modification introduces a residual term that persists even when $\boldsymbol{\xi} = 0$. \end{proof}

While the $\sigma$-modification guarantees boundedness of the closed-loop signals, it results in a non-vanishing steady-state error in the parameter estimates, even under persistent excitation condition. This residual error persists because the $\sigma$-modification introduces a bias term $\sigma \hat{\mathbf{K}}_{x}$ in the parameter update law that cannot be eliminated by excitation alone. It pulls the estimates toward zero rather than toward their ideal values. As a result, the ultimate bound does not decay to zero even in the absence of nonparametric uncertainty $\boldsymbol{\xi}$, as established in property 3 of Proposition $\ref{proposition:direct_adaptive}$ and detailed in Appendix A.

To overcome this limitation, we propose a combined adaptive law in Section $\ref{subsec:combined_adaptation_in_finite_excitation_condition}$ $\eqref{eq:controller_gain_update_law_proposed}$. This law selectively activates parameter adaptation using a memory term only when finite excitation is detected online, that is, when the regressor matrix becomes full rank over a finite time window. When this condition is not yet satisfied, the adaptation reverts to the standard error-minimizing update rule with $\sigma$-modification, preserving robustness to nonparametric uncertainty. Once the excitation condition is met, the combined law drives the parameter estimates toward their ideal values, achieving the convergence that the $\sigma$-modification alone cannot provide.

\section{Main Results}
\label{section:Main_Results}

The combined adaptation involves updating the controller gain estimates using both tracking errors and estimation errors. The estimation errors are derived from the matching condition \eqref{eq:matching_condition}, which depends on the unknown system matrices. To facilitate the combined adaptation, the ideal controller gains are expressed using the matching condition \eqref{eq:matching_condition} as given below.
\begin{equation}
	\boldsymbol{\Lambda} \mathbf{K}_{x}^{\top} = \mathbf{B}^{\dagger} \left( \mathbf{A}_{r} - \mathbf{A} \right), \quad  
	\boldsymbol{\Lambda} \mathbf{K}_{r}^{\top} = \mathbf{B}^{\dagger} \mathbf{B}_{r},
	\label{eq:matching_condition_ideal_gains}
\end{equation}
where $\mathbf{B}^{\dagger}$ denotes the Moore–Penrose inverse of the matrix $\mathbf{B}$. These expressions require knowledge of the ideal values of $\boldsymbol{\Lambda}$ and $\mathbf{B}^{\dagger} \mathbf{A}$. To estimate these values, the system dynamics is expressed as follows using \eqref{eq:regressor_definition}.
\begin{equation}
	\begin{aligned}
		\mathbf{B}^{\dagger} \dot{\mathbf{x}} &= \mathbf{B}^{\dagger} \mathbf{A} \mathbf{x} + \boldsymbol{\Lambda} \left\{ \mathbf{u} + \boldsymbol{\Theta}^{\top} \boldsymbol{\phi}(\mathbf{x}) \right\} + \mathbf{B}^{\dagger} \boldsymbol{\xi}, \\
		\mathbf{y} &= \underbrace{\begin{bmatrix} \mathbf{B}^{\dagger} \mathbf{A} & \boldsymbol{\Lambda} & \boldsymbol{\Lambda} \boldsymbol{\Theta}^{\top} \end{bmatrix}}_{\mathbf{W}^{\top}} 
		\boldsymbol{\varphi}(\mathbf{x},\mathbf{u}) + \boldsymbol{\xi}_{y}
	\end{aligned}
	\label{eq:system1}
\end{equation}
where $\mathbf{y} := \mathbf{B}^{\dagger} \dot{\mathbf{x}}$, $\boldsymbol{\xi}_{y} := \mathbf{B}^{\dagger} \boldsymbol{\xi}$, and hence $\mathbf{y}, \boldsymbol{\xi}_{y} \in \mathbb{R}^{m}$. The matrix $\mathbf{W} \in \mathbb{R}^{(n + m + p) \times m}$. Moreover, the transformed uncertainty remains bounded as given below.
\begin{equation}
	\left\| \boldsymbol{\xi}_{y} \right\| \leq \overline{\xi}_{y}, \quad 
	\overline{\xi}_{y} := \left\| \mathbf{B}^{\dagger} \right\| \, \overline{\xi}.
	\label{eq:upper_bound_uncertainty}
\end{equation}
Since $\mathbf{B}$ is a known constant matrix, $\overline{\xi}_{y}$ is a computable upper bound on the transformed nonparametric uncertainty. Thus, the boundedness assumption on $\boldsymbol{\xi}$ is preserved under the transformation \eqref{eq:system1}. Re-writing the transformed system dynamics \eqref{eq:system1} in a linear-in-parameter form

\begin{equation}
	\mathbf{y} = 
	\underbrace{
		\begin{bmatrix}
			\boldsymbol{\varphi}^{\top} & \mathbf{0}^{\top} & \cdots & \mathbf{0}^{\top} \\
			\mathbf{0}^{\top} & \boldsymbol{\varphi}^{\top} & \cdots & \mathbf{0}^{\top} \\
			\vdots & \vdots & \ddots & \vdots \\
			\mathbf{0}^{\top} & \mathbf{0}^{\top} & \cdots & \boldsymbol{\varphi}^{\top}
		\end{bmatrix}
	}_{\boldsymbol{\Phi}^{\top}(\mathbf{x}, \mathbf{u})} 
	\underbrace{ \mathrm{vec}(\mathbf{W}) }_{\mathbf{w}} + \boldsymbol{\xi}_{y},
	\label{eq:system_lip}
\end{equation}
where $\mathrm{vec}(\mathbf{W})$ denotes the column-wise vectorization of $\mathbf{W}$, resulting in $\mathbf{w} \in \mathbb{R}^{mq}$ with $q := n + m + p$. Thus, $\boldsymbol{\Phi} \in \mathbb{R}^{mq \times m}$, and $\mathbf{0}$ denotes a zero vector of length $q$. The regressor $\boldsymbol{\Phi}$ captures the measurable components of \eqref{eq:system1}, while the unknown parameters are grouped in $\mathbf{w}$. The recovery of the individual controller gain estimates $\hat{\mathbf{K}}_{x}$, $\hat{\mathbf{K}}_{r}$, and $\hat{\boldsymbol{\Theta}}$ from the system parameter estimate $\hat{\mathbf{w}}$ is addressed in Section $\ref{subsec:combined_adaptation_in_finite_excitation_condition}$, where the matching condition is used to extract $\hat{\boldsymbol{\Lambda}}$ and $\hat{\mathbf{A}}$ from $\hat{\mathbf{w}}$ and subsequently compute the gain estimates.

Since $\mathbf{y}$ involves $\dot{\mathbf{x}}$ and is not directly measurable, the system \eqref{eq:system1} is passed through a first-order stable filter. Let $\boldsymbol{\varphi}_f(t,\mathbf{x},\mathbf{u})$ and $\mathbf{y}_f(t, \mathbf{x})$ denote the filtered versions of $\boldsymbol{\varphi}(\mathbf{x}, \mathbf{u})$ and $\mathbf{y}(t)$, respectively, obtained as follows using a stable filter $H(s) = \dfrac{f}{s + f}$ with $f > 0$.
\begin{equation}
	\begin{aligned}
		\dot{\boldsymbol{\varphi}}_{f} &= -f \boldsymbol{\varphi}_{f} + f \boldsymbol{\varphi}, \\
		\dot{\mathbf{x}}_{f} &= -f \mathbf{x}_{f} + f \mathbf{x}, \\
		\mathbf{y}_{f} &= \mathbf{B}^{\dagger} \left( f \mathbf{x} - f e^{-f t} \mathbf{x}(t_0) - f \mathbf{x}_{f} \right).
	\end{aligned}
	\label{eq:filter_solution}
\end{equation}
For notational simplicity, we omit the explicit arguments of the filtered signals in subsequent expressions, except when necessary. Assuming initial conditions $\boldsymbol{\varphi}_{f}(0) = 0$ and $\mathbf{y}_{f}(0) = 0$, the filtered system is
\begin{equation}
	\begin{aligned}
		\mathbf{y}_{f} &= \mathbf{W}^{\top} \boldsymbol{\varphi}_{f} + \boldsymbol{\xi}_{f} \\
		&= \boldsymbol{\Phi}_{f}^{\top} \mathbf{w} + \boldsymbol{\xi}_{f}, 
	\end{aligned}
	\quad \text{where } \quad
	\boldsymbol{\Phi}_{f}^{\top} := 
	\begin{bmatrix}
		\boldsymbol{\varphi}_{f}^{\top} & \mathbf{0}^{\top} & \cdots & \mathbf{0}^{\top} \\
		\mathbf{0}^{\top} & \boldsymbol{\varphi}_{f}^{\top} & \cdots & \mathbf{0}^{\top} \\
		\vdots & \vdots & \ddots & \vdots \\
		\mathbf{0}^{\top} & \mathbf{0}^{\top} & \cdots & \boldsymbol{\varphi}_{f}^{\top}
	\end{bmatrix}
	\in \mathbb{R}^{m \times mq}.
	\label{eq:filtered_system}
\end{equation}
Here, $\boldsymbol{\xi}_{f}$ denotes the filtered version of $\boldsymbol{\xi}_{y}$, and $\boldsymbol{\Phi}_{f}$ represents the filtered regressor matrix. Next, we propose an update law to estimate the unknown system parameters contained in $\mathbf{w}$ and subsequently analyze the stability properties of the resulting identification system.

\subsection{Estimation of System Parameters}
\label{subsec:estimation_of_system_parameters}
Let the regressor vector $\boldsymbol{\varphi}$ satisfy the finite excitation condition (see \textit{Assumption $\ref{assump:fe_condition}$}). Then, the filtered regressor vector $\boldsymbol{\varphi}_{f}$ also satisfies the finite excitation condition [\textit{Lemma 6.8}, \cite{narendra2012stable}], implying that a full rank matrix can be constructed by storing $\boldsymbol{\varphi}_{f}$ at regularly sampled time instants. Let $\boldsymbol{\Phi}_{fc} = \begin{bmatrix} \boldsymbol{\varphi}_{f} \left(t_{1}\right) & \hdots & \boldsymbol{\varphi}_{f} \left(t_{q}\right) \end{bmatrix} \in \mathbb{R}^{q \times q}$ and $\mathbf{Y}_{fc} = \begin{bmatrix} \mathbf{y}_{f} \left(t_{1} \right) & \hdots & \mathbf{y}_{f} \left(t_{q}\right) \end{bmatrix} \in \mathbb{R}^{m \times q}$ be the matrices formed by collecting $\boldsymbol{\varphi}_{f}$ and $\mathbf{y}_{f}$, respectively at $q$ such time instances. These matrices preserve the linear-in-parameter form.
\begin{equation}
	\mathbf{Y}_{fc} = \mathbf{W}^{\top} \boldsymbol{\Phi}_{fc} + \boldsymbol{\Xi}_{fc},
	\label{eq:filtered_system_collection}
\end{equation}
where $\boldsymbol{\Xi}_{fc} = \begin{bmatrix}\boldsymbol{\xi}_{f} \left(t_{1}\right) & \hdots & \boldsymbol{\xi}_{f} \left(t_{q}\right)\end{bmatrix}$ is the corresponding nonparametric uncertainty matrix. Vectorizing the transposes, define $\mathbf{y}_{fc} = vec\left(\mathbf{Y}_{fc}^{\top}\right) \in \mathbb{R}^{mq}$ and $\boldsymbol{\xi}_{fc} = vec\left(\boldsymbol{\Xi}_{fc}^{\top}\right) \in \mathbb{R}^{mq}$. Then, we obtain
\begin{equation}
	\mathbf{y}_{fc} = \boldsymbol{\Psi}_{fc}^{\top} \mathbf{w} + \boldsymbol{\xi}_{fc}, \quad
	\boldsymbol{\Psi}_{fc}^{\top} = \begin{bmatrix} \boldsymbol{\Phi}_{fc}^{\top} & \mathbf{0}_{q \times q} & \hdots & \mathbf{0}_{q \times q} \\ \mathbf{0}_{q \times q} & \boldsymbol{\Phi}_{fc}^{\top} & \hdots & \mathbf{0}_{q \times q} \\ \vdots & \vdots & \ddots & \vdots \\ \mathbf{0}_{q \times q} & \mathbf{0}_{q \times q} & \hdots & \boldsymbol{\Phi}_{fc}^{\top} \end{bmatrix} \in \mathbb{R}^{mq \times mq}.
	\label{eq:filtered_collection_lip}
\end{equation}
We now convert the full-rank matrix $\boldsymbol{\Psi}_{fc}$ into an orthogonal matrix using \textit{Algorithm} $\ref{alg:identification}$. Since $\boldsymbol{\Psi}_{fc}$ is block diagonal with $\boldsymbol{\Phi}_{fc}$ along its diagonal, it suffices to orthogonalize $\boldsymbol{\Phi}_{fc}$ into $\boldsymbol{\Phi}_{fb}$ and replicate it along the diagonal to construct $\boldsymbol{\Psi}_{fb}$. Correspondingly, we transform $\mathbf{Y}_{fc}$ into $\mathbf{Y}_{fb}$. Define $\boldsymbol{\Phi}_{fb} = \begin{bmatrix} \boldsymbol{\varphi}_{fb_{1}} & \boldsymbol{\varphi}_{fb_{2}} & \hdots & \boldsymbol{\varphi}_{fb_{q}} \end{bmatrix}$ and $\mathbf{Y}_{fb} = \begin{bmatrix} \mathbf{y}_{fb_{1}} & \mathbf{y}_{fb_{2}} & \hdots & \mathbf{y}_{fb_{q}} \end{bmatrix}$. The first orthonormal column $\boldsymbol{\varphi}_{fb_{1}}$ of $\boldsymbol{\Phi}_{fb}$ is computed as below.
\begin{equation}
	\boldsymbol{\varphi}_{fb_{1}} = \dfrac{\boldsymbol{\varphi}_{f}\left(t_{1}\right)}{\left\| \boldsymbol{\varphi}_{f}\left(t_{1}\right) \right\|}, \quad \left\| \boldsymbol{\varphi}_{f}\left(t_{1}\right) \right\| \geq \delta_{1}
	\label{eq:orthonormal_column_1}
\end{equation}
where $\delta_{1} >0$ is a design parameter. The remaining orthonormal columns are then computed recursively as
\begin{equation}
	\mathbf{v}_{l} = \mathbf{v}_{l-1} - \boldsymbol{\varphi}_{fb_{l-1}}^{\top} \mathbf{v}_{l-1}  \boldsymbol{\varphi}_{fb_{l-1}}, \quad \boldsymbol{\varphi}_{fb_{k}} = \dfrac{\mathbf{v}_{k}}{\left\| \mathbf{v}_{k} \right\|},
	\label{eq:orthonormal_column_i}	
\end{equation}
with $l:2 \to k$, $\mathbf{v}_{1} = \dfrac{\boldsymbol{\varphi}_{f}\left(t_{k}\right)}{\left\| \boldsymbol{\varphi}_{f}\left(t_{k}\right) \right\|}$, and $k= \left\{2,3,\hdots,q \right\}$. This computation is performed conditionally to ensure robustness against nonparametric uncertainties. \par

To maintain the linear-in-parameter form, the following transformation is proposed whose steps are detailed in \textit{Algorithm} $\ref{alg:identification}$.
\begin{equation}
	\begin{aligned}
		\mathbf{y}_{fb_{1}} =& \dfrac{\mathbf{y}_{f} \left(t_{1} \right)}{\left\| \boldsymbol{\varphi}_{f}\left(t_{1}\right) \right\|}, \\
		\mathcal{Y}_{l} =& \mathcal{Y}_{{l-1}} - \boldsymbol{\varphi}_{fb_{l-1}}^{\top} \mathbf{v}_{l-1} \mathbf{y}_{fb_{l-1}}, \quad  \mathbf{y}_{fb_{k}} = \dfrac{1}{\left\| \mathbf{v}_{k} \right\|} \mathcal{Y}_{k},
	\end{aligned}
	\label{eq:orthonormal_y}
\end{equation}
with $l:2 \to k$, $\mathcal{Y}_{1} = \dfrac{\mathbf{y}_{f} \left( t_{k} \right)}{\left\| \boldsymbol{\varphi}_{f}\left(t_{k}\right) \right\|}$, $k= \left\{2,3,\hdots,q \right\}$, and $\mathbf{v}_{l}$ is computed from $\eqref{eq:orthonormal_column_i}$. The proposed orthogonalization and transformation of $\mathbf{Y}_{fc}$ using \textit{Algorithm $\ref{alg:identification}$} retain the linear-in-parameter structure. This is formally established in the subsequent theorem, where an explicit expression for  the transformed nonparametric uncertainty is derived in terms of known quantities and $\boldsymbol{\xi}_{f}$. \par

\begin{algorithm}
	\begin{algorithmic}[1]
		\State Initialize $\delta_{1}$, $\delta_{2}$
		\State Set q $\gets$ Number of Unknown Parameters
		\State Set k $\gets 1$
		\If{k $\leq$ q}
		\If{$\left\| \boldsymbol{\varphi}_{f} \right\| \geq \delta_{1}$}
		\State $\mathbf{v}_{1} \gets $ $\dfrac{\boldsymbol{\varphi}_{f}}{\left\| \boldsymbol{\varphi}_{f} \right\|}$  and $\mathcal{Y}_{1} \gets $ $\dfrac{\mathbf{y}_{f}}{\left\| \boldsymbol{\varphi}_{f} \right\|}$
		\For{l=1:k-1}
		\State $\mathbf{v}_{l+1} = \mathbf{v}_{l} - \boldsymbol{\varphi}_{fb_{l}}^{\top} \mathbf{v}_{l}  \boldsymbol{\varphi}_{fb_{l}}$
		\EndFor
		\If{ $\left\| \mathbf{v}_{k} \right\| \geq \delta_{2} $}
		\State Compute $\boldsymbol{\varphi}_{fb_{k}}$ and $\mathbf{y}_{fb_{k}}$ using $\eqref{eq:orthonormal_column_i}$ and $\eqref{eq:orthonormal_y}$ respectively
		\State k=k+1
		\EndIf
		\EndIf
		\EndIf
		\caption{Orthogonal Basis Generation for $\boldsymbol{\Psi}_{fb}$ and $\mathbf{y}_{fb}$ in $\eqref{eq:filtered_basis}$}
		\label{alg:identification}
	\end{algorithmic}
\end{algorithm}

\begin{theorem}
	Consider the filtered regression model $\eqref{eq:filtered_system}$. Applying the Modified Gram-Schmidt procedure using the \textit{Algorithm} $\ref{alg:identification}$ yields the following properties.
	\begin{enumerate}
		\item The transformed matrices $\mathbf{Y}_{fb}$ and $\boldsymbol{\Phi}_{fb}$ retain the following linear-in-parameter structure.
		\begin{equation}
			\begin{aligned}
				\mathbf{Y}_{fb} &= \mathbf{W}^{\top} \boldsymbol{\Phi}_{fb} + \boldsymbol{\Xi}_{fb}, \\
				\mathbf{y}_{fb} &= \boldsymbol{\Psi}_{fb}^{\top} \mathbf{w} + \boldsymbol{\xi}_{fb},
			\end{aligned}
			\label{eq:filtered_basis}
		\end{equation}
		where $\boldsymbol{\Psi}_{fb}^{\top} \in \mathbb{R}^{mq \times mq}$ is block-diagonal with blocks $\boldsymbol{\Phi}_{fb}^{\top}$, $\mathbf{y}_{fb}= \mathrm{vec}\left(\mathbf{Y}_{fb}^{\top}\right) \in \mathbb{R}^{mq}$, and $\boldsymbol{\xi}_{fb} = \mathrm{vec}\left(\boldsymbol{\Xi}_{fb}^{\top}\right) \in \mathbb{R}^{mq}$.
		\item Each column $\boldsymbol{\xi}_{fb_{k}}$ of the transformed nonparametric uncertainty matrix $\boldsymbol{\Xi}_{fb} = \begin{bmatrix} \boldsymbol{\xi}_{fb_{1}} & \boldsymbol{\xi}_{fb_{2}} & \cdots & \boldsymbol{\xi}_{fb_{q}} \end{bmatrix}$ is 
		\begin{equation}
			\boldsymbol{\xi}_{fb_{k}} = \dfrac{1}{\left\| \mathbf{v}_{k} \right\|} \left( \dfrac{\boldsymbol{\xi}_{f} \left(t_{k}\right)}{\left\| \boldsymbol{\varphi}_{f} \left(t_{k}\right) \right\|} - \sum_{j=1}^{k-1} \boldsymbol{\varphi}_{fb_{j}}^{\top} \mathbf{v}_{j} \boldsymbol{\xi}_{fb_{j}} \right),
			\label{eq:lip_uncertain_expression}
		\end{equation}
		where $\mathbf{v}_j$ is the $j$-th intermediate vector in the Modified Gram–Schmidt procedure.
		\item The vectorized uncertainty $\boldsymbol{\xi}_{fb}$ satisfies the following bound.
		\begin{equation}
			\left\| \boldsymbol{\xi}_{fb} \right\| \leq \overline{\xi}_{fb}, \quad \text{where} \quad \overline{\xi}_{fb} = \left( \dfrac{1}{\delta_{1}} + \dfrac{2 \left( \left(1+\delta_{2}\right)^{q-1} - \delta_{2}^{q-1} \right)}{\delta_{1} \delta_{2}^{q-1}} \right) \overline{\xi}_{y},
			\label{eq:filtered_bound}
		\end{equation}
		and $\overline{\xi}_{y} = \left\| \mathbf{B}^{\dagger} \right\| \, \overline{\xi}$ [see \eqref{eq:upper_bound_uncertainty}].
	\end{enumerate}	
	\label{theorem:algorithm1_lip}
\end{theorem}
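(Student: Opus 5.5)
Write each stored measurement as $\mathbf{y}_{f}(t_k) = \mathbf{W}^{\top}\boldsymbol{\varphi}_{f}(t_k) + \boldsymbol{\xi}_{f}(t_k)$ and use induction on $k$, tracking the outputs of Algorithm~\ref{alg:identification} along with the intermediate quantities. The claim is that each intermediate pair $(\mathbf{v}_l, \mathcal{Y}_l)$ obeys
\[
\mathcal{Y}_l = \mathbf{W}^{\top}\mathbf{v}_l + \boldsymbol{\zeta}_l ,
\]
where $\boldsymbol{\zeta}_l$ is built only from $\boldsymbol{\xi}_{f}(t_k)$ and the previously generated $\boldsymbol{\xi}_{fb_j}$.

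\textbf{Properties 1 and 2.}
\begin{itemize}
\item For $l=1$, divide the measurement by $\|\boldsymbol{\varphi}_{f}(t_k)\|$. This gives $\boldsymbol{\zeta}_1 = \boldsymbol{\xi}_{f}(t_k)/\|\boldsymbol{\varphi}_{f}(t_k)\|$.
\item For the step $l-1 \to l$, the scalar $c_{l-1} := \boldsymbol{\varphi}_{fb_{l-1}}^{\top}\mathbf{v}_{l-1}$ multiplies both the regressor update in \eqref{eq:orthonormal_column_i} and the output update in \eqref{eq:orthonormal_y}. By linearity and the inductive hypothesis $\mathbf{y}_{fb_{l-1}} = \mathbf{W}^{\top}\boldsymbol{\varphi}_{fb_{l-1}} + \boldsymbol{\xi}_{fb_{l-1}}$, we get $\boldsymbol{\zeta}_l = \boldsymbol{\zeta}_{l-1} - c_{l-1}\boldsymbol{\xi}_{fb_{l-1}}$.
\item The final normalization by $\|\mathbf{v}_k\|$ then gives $\mathbf{y}_{fb_k} = \mathbf{W}^{\top}\boldsymbol{\varphi}_{fb_k} + \boldsymbol{\xi}_{fb_k}$. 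Unrolling the recursion yields exactly \eqref{eq:lip_uncertain_expression}.
\item Stacking the columns gives the matrix form of \eqref{eq:filtered_basis}.
\item Applying $\mathrm{vec}(\cdot^{\top})$ and the identity $\mathrm{vec}(\boldsymbol{\Phi}_{fb}^{\top}\mathbf{W}) = (\mathbf{I}_m\otimes\boldsymbol{\Phi}_{fb}^{\top})\mathrm{vec}(\mathbf{W})$ gives the block-diagonal vectorized form, in the same way as \eqref{eq:filtered_collection_lip} was obtained.
\end{itemize}

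\textbf{Property 3.} This is the main obstacle. I would work with the quantities actually controlled by the algorithm's acceptance tests:
\begin{itemize}
\item $\|\boldsymbol{\varphi}_{f}(t_k)\| \ge \delta_1$;
\item $\|\mathbf{v}_k\| \ge \delta_2$;
\item $|c_j| \le \|\boldsymbol{\varphi}_{fb_j}\|\,\|\mathbf{v}_j\| \le 1$, because the $\mathbf{v}_j$ arise from a unit vector by successive orthogonal projections, so $\|\mathbf{v}_j\| \le 1$;
\item $\|\boldsymbol{\xi}_{f}\| \le \overline{\xi}_{y}$, since the filter $H(s)$ has unit DC gain and zero initial condition, so its $\mathcal{L}_\infty$ gain is $1$.
\end{itemize}

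Set $a_k := \|\boldsymbol{\xi}_{fb_k}\|$. The recursion in \eqref{eq:lip_uncertain_expression} gives
\[
a_k \le \frac{1}{\delta_2}\Bigl(\frac{\overline{\xi}_{y}}{\delta_1} + \sum_{j<k} a_j\Bigr).
\]
To solve this, let $S_k = \sum_{j\le k} a_j$. Then $S_k \le (1+1/\delta_2)S_{k-1} + \overline{\xi}_{y}/(\delta_1\delta_2)$, a linear recursion whose solution involves $\bigl((1+\delta_2)/\delta_2\bigr)^{k-1}$. The first column is the special case $a_1 \le \overline{\xi}_{y}/\delta_1$, because $\boldsymbol{\varphi}_{fb_1}$ is normalized directly and no $\|\mathbf{v}_1\|$ division occurs there.

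Finally I would pass from columns to $\|\boldsymbol{\xi}_{fb}\| = \|\boldsymbol{\Xi}_{fb}\|_F$, bounding it by $\sum_k a_k = S_q$. Evaluating the closed form, I expect the result $\frac{1}{\delta_1}\overline{\xi}_{y}$ plus a geometric term of the form $\frac{(1+\delta_2)^{q-1}-\delta_2^{q-1}}{\delta_1\delta_2^{q-1}}\overline{\xi}_{y}$. The factor $2$ in \eqref{eq:filtered_bound} should absorb slack: either from bounding the sum against a geometric series, or from a cruder estimate such as $\|\mathbf{v}_j\| \le 1$ combined with the triangle inequality.

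The delicate part is matching constants exactly to \eqref{eq:filtered_bound}. If my recursion yields a sharper constant, it still implies the stated bound, which suffices. If instead it yields a looser constant, I would refine the bound on $c_j$, noting that $|c_j| = |\boldsymbol{\varphi}_{fb_j}^{\top}\mathbf{v}_1|$ because MGS projections preserve these inner products, and then use Bessel's inequality $\sum_j c_j^2 \le 1$.
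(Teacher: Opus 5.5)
Your proposal is correct and follows essentially the paper's route. It uses induction over the Modified Gram--Schmidt steps to obtain the linear-in-parameter form and the recursion \eqref{eq:lip_uncertain_expression}, then bounds each column with the triangle inequality using $|\boldsymbol{\varphi}_{fb_j}^{\top}\mathbf{v}_j|\le 1$, $\|\mathbf{v}_k\|\ge\delta_2$, $\|\boldsymbol{\varphi}_f(t_k)\|\ge\delta_1$ and $\|\boldsymbol{\xi}_f\|\le\overline{\xi}_y$, solves a geometric recursion, and sums the column bounds. The factor $2$ is not slack, though: your partial-sum recursion gives $S_q+\overline{\xi}_y/\delta_1\le \bigl(\frac{1+\delta_2}{\delta_2}\bigr)^{q-1}\bigl(S_1+\overline{\xi}_y/\delta_1\bigr)\le \frac{2\overline{\xi}_y}{\delta_1}\bigl(\frac{1+\delta_2}{\delta_2}\bigr)^{q-1}$, which rearranges to exactly \eqref{eq:filtered_bound}, so no refinement via Bessel's inequality is needed.
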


\begin{proof}
	The proof is established using mathematical induction and recursive substitution. To establish the preservation of the linear-in-parameter structure, we apply mathematical induction over the Modified Gram-Schmidt iterations. 
	
	Let the linear-in-parameter relation for the transformed vectors at $l-$th iteration, $\forall \, k\in \left\{1,2,\hdots,q\right\}$ be denoted as follows.
	\begin{equation*}
		P\left(l\right): \mathbf{y}_{fb_{l}} = \mathbf{W}^{\top} \boldsymbol{\varphi}_{fb_{l}} + \boldsymbol{\xi}_{fb_{l}},
	\end{equation*}
	where $l=\left\{1,2,\hdots,k\right\}$. For $l=1$, using $\eqref{eq:orthonormal_column_1}$ and $\eqref{eq:orthonormal_y}$
	\begin{equation*}
		\boldsymbol{\varphi}_{fb_{1}} = \dfrac{\boldsymbol{\varphi}_{f}\left(t_{1}\right)}{\left\| \boldsymbol{\varphi}_{f}\left(t_{1}\right) \right\|}, \quad \mathbf{y}_{fb_{1}} = \dfrac{\mathbf{y}_{f} \left(t_{1}\right)}{\left\| \boldsymbol{\varphi}_{f}\left(t_{1}\right) \right\|}.
	\end{equation*}
	From $\eqref{eq:filtered_system}$, substituting $\mathbf{y}_{f} = \mathbf{W}^{\top} \boldsymbol{\varphi}_{f} + \boldsymbol{\xi}_{f}$ in $\mathbf{y}_{fb_{1}}$.
	\begin{equation*}
	\begin{aligned}
		\mathbf{y}_{fb_{1}} &= \dfrac{\mathbf{y}_{f} \left(t_{1}\right)}{\left\| \boldsymbol{\varphi}_{f}\left(t_{1}\right) \right\|} = \dfrac{\mathbf{W}^{\top} \boldsymbol{\varphi}_{f} \left(t_{1}\right) + \boldsymbol{\xi}_{f} \left(t_{1}\right)}{\left\| \boldsymbol{\varphi}_{f}\left(t_{1}\right) \right\|}, \\
			&= \mathbf{W}^{\top} \boldsymbol{\varphi}_{fb_{1}}	+ \boldsymbol{\xi}_{fb_{1}}, \quad \boldsymbol{\xi}_{fb_{1}} = \dfrac{\boldsymbol{\xi}_{f} \left(t_{1}\right)}{\left\| \boldsymbol{\varphi}_{f}\left(t_{1}\right) \right\|}.
	\end{aligned}
	\end{equation*}
	Therefore, $P\left(1\right)$ is true.
	
	Assume $P\left(l\right)$ is true, that is, $\mathbf{y}_{fb_{l}} = \mathbf{W}^{\top} \boldsymbol{\varphi}_{fb_{l}} + \boldsymbol{\xi}_{fb_{l}}$. Then, from $\eqref{eq:orthonormal_column_i}$ and $\eqref{eq:orthonormal_y}$, it follows that $\mathcal{Y}_{l} = \mathbf{W}^{\top} \mathbf{v}_{l} + \left\| \mathbf{v}_{l} \right\| \boldsymbol{\xi}_{fb_{l}}$. If the relation $\mathcal{Y}_{l+1} = \mathbf{W}^{\top} \mathbf{v}_{l+1} + \left\| \mathbf{v}_{l+1} \right\| \boldsymbol{\xi}_{fb_{l+1}}$ holds, it infers that $\mathbf{y}_{fb_{l+1}} = \mathbf{W}^{\top} \boldsymbol{\varphi}_{fb_{l+1}} + \boldsymbol{\xi}_{fb_{l+1}}$. Thus, $P\left(l+1\right)$ is true. From $\eqref{eq:orthonormal_column_i}$ and $\eqref{eq:orthonormal_y}$,
	\begin{equation*}
		\mathbf{v}_{l+1} = \mathbf{v}_{l} - \boldsymbol{\varphi}_{fb_{l}}^{\top} \mathbf{v}_{l}  \boldsymbol{\varphi}_{fb_{l}}, \quad
		\mathcal{Y}_{l+1} = \mathcal{Y}_{{l}} - \boldsymbol{\varphi}_{fb_{l}}^{\top} \mathbf{v}_{l} \mathbf{y}_{fb_{l}}.
	\end{equation*}
	Substituting $\mathcal{Y}_{l} = \mathbf{W}^{\top} \mathbf{v}_{l} + \left\| \mathbf{v}_{l} \right\| \boldsymbol{\xi}_{f_{l}} \left(t_{k}\right)$ and $\mathbf{y}_{fb_{l}} = \mathbf{W}^{\top} \boldsymbol{\varphi}_{fb_{l}} + \boldsymbol{\xi}_{fb_{l}}$ in $\mathcal{Y}_{l+1}$.
	\begin{equation*}
		\mathcal{Y}_{l+1} = \mathbf{W}^{\top} \mathbf{v}_{l} + \left\| \mathbf{v}_{l} \right\| \boldsymbol{\xi}_{f_{l}} \left(t_{k}\right) - \boldsymbol{\varphi}_{fb_{l}}^{\top} \mathbf{v}_{l} \left(\mathbf{W}^{\top} \boldsymbol{\varphi}_{fb_{l}} + \boldsymbol{\xi}_{fb_{l}} \right),
	\end{equation*}
	where $\boldsymbol{\xi}_{f_{l}} \left(t_{k}\right)$ is the recursive transformation of $\boldsymbol{\xi}_{f} \left(t_{k}\right)$ at $l-$th iteration $\forall \, k$, and $\boldsymbol{\varphi}_{fb_{l}}^{\top} \mathbf{v}_{l}$ is a scalar.
	Pre-multiplying with $\dfrac{1}{\left\| \mathbf{v}_{l+1} \right\|}$.
	\begin{equation*}
		\mathbf{y}_{fb_{l+1}} = \mathbf{W}^{\top} \boldsymbol{\varphi}_{fb_{l+1}} + \dfrac{\left\| \mathbf{v}_{l} \right\|}{\left\| \mathbf{v}_{l+1} \right\|} \boldsymbol{\xi}_{f_{l}} \left(t_{k}\right) - \dfrac{\boldsymbol{\varphi}_{fb_{l}}^{\top} \mathbf{v}_{l}}{\left\| \mathbf{v}_{l+1} \right\|} \boldsymbol{\xi}_{fb_{l}},
	\end{equation*}
	where,
	\begin{equation}
		\boldsymbol{\xi}_{fb_{l+1}} = \dfrac{\left\| \mathbf{v}_{l} \right\|}{\left\| \mathbf{v}_{l+1} \right\|} \boldsymbol{\xi}_{f_{l}} \left(t_{k}\right) - \dfrac{\boldsymbol{\varphi}_{fb_{l}}^{\top} \mathbf{v}_{l}}{\left\| \mathbf{v}_{l+1} \right\|} \boldsymbol{\xi}_{fb_{l}}.
		\label{eq:nonparametric_uncertainty_lth}
	\end{equation}
	Thus $P\left(l+1\right)$ is also true when $P\left(l\right)$ is true. Hence, by the principle of  induction, $P(l)$ holds $\forall \, l \in \left\{1,2,\hdots,k\right\}$, $\forall \, k\in \left\{1,2,\hdots,q\right\}$. This completes the proof of the first statement.
	
	Now, consider the transformed nonparametric uncertainty $\boldsymbol{\xi}_{f_{l}} \left(t_{k}\right)$ at the $l-$th iteration, $\forall \, k \in \left\{1,2,\hdots,q\right\}$, defined recursively as follows.
	\begin{equation*}
		P(l): \boldsymbol{\xi}_{f_{l}} \left(t_{k}\right) = \dfrac{1}{\left\| \mathbf{v}_{l} \right\|} \left( \dfrac{\boldsymbol{\xi}_{f} \left(t_{k}\right)}{\left\| \boldsymbol{\varphi}_{f} \left(t_{k}\right) \right\|} - \sum_{j=1}^{l-1} \boldsymbol{\varphi}_{fb_{j}}^{\top} \mathbf{v}_{j} \boldsymbol{\xi}_{fb_{j}} \right),
	\end{equation*}
	where $\boldsymbol{\xi}_{f_{l}} \left(t_{k}\right)$ is the recursive transformation of $\boldsymbol{\xi}_{f} \left(t_{k}\right)$ at $l-$th iteration $\forall \, k \in \left\{1,2,\hdots,q\right\}$. For $l=1$, $P\left(1\right): \boldsymbol{\xi}_{f_{1}} \left(t_{k}\right)=\dfrac{\boldsymbol{\xi}_{f}\left(t_{k}\right)}{\left\| \boldsymbol{\varphi}_{f} \left(t_{1}\right) \right\|}$, which is true as derived earlier. Assume $P\left(l\right)$ holds and evaluate $P\left(l+1\right)$. Using $\eqref{eq:nonparametric_uncertainty_lth}$, the transformed nonparametric uncertainty $\boldsymbol{\xi}_{f_{l+1}}$ at $\left(l+1\right)-$th iteration, $\forall \, k \in \left\{1,2,\hdots,q\right\}$ is as follows.
	\begin{equation*}
		\boldsymbol{\xi}_{f_{l+1}} \left(t_{k}\right) = \dfrac{\left\| \mathbf{v}_{l} \right\|}{\left\| \mathbf{v}_{l+1} \right\|} \boldsymbol{\xi}_{f_{l}} \left(t_{k}\right) - \dfrac{\boldsymbol{\varphi}_{fb_{l}}^{\top} \mathbf{v}_{l}}{\left\| \mathbf{v}_{l+1} \right\|} \boldsymbol{\xi}_{fb_{l}}.
	\end{equation*}
	Substituting $\boldsymbol{\xi}_{f_{l}}\left(t_{k}\right)$ in $\boldsymbol{\xi}_{f_{l+1}}\left(t_{k}\right)$.
	\begin{equation*}
	\begin{aligned}
		\boldsymbol{\xi}_{f_{l+1}} \left(t_{k}\right) &= \dfrac{1}{\left\| \mathbf{v}_{l+1} \right\|} \left( \dfrac{\boldsymbol{\xi}_{f} \left(t_{k}\right)}{\left\| \boldsymbol{\varphi}_{f} \left(t_{k}\right) \right\|} - \sum_{j=1}^{l-1} \boldsymbol{\varphi}_{fb_{j}}^{\top} \mathbf{v}_{j} \boldsymbol{\xi}_{fb_{j}} \right) - \dfrac{\boldsymbol{\varphi}_{fb_{l}}^{\top} \mathbf{v}_{l}}{\left\| \mathbf{v}_{l+1} \right\|} \boldsymbol{\xi}_{fb_{l}}, \\
			&= \dfrac{1}{\left\| \mathbf{v}_{l+1} \right\|} \left( \dfrac{\boldsymbol{\xi}_{f} \left(t_{k}\right)}{\left\| \boldsymbol{\varphi}_{f} \left(t_{k}\right) \right\|} - \sum_{j=1}^{l} \boldsymbol{\varphi}_{fb_{j}}^{\top} \mathbf{v}_{j} \boldsymbol{\xi}_{fb_{j}} \right).
	\end{aligned}
	\end{equation*}
	Therefore $P\left(l+1\right)$ is also true when $P\left(l\right)$ is true. Therefore, by the principle of induction, $P(l)$ holds $\forall \, l \in \left\{1,2,\hdots,k\right\}$, $\forall \, k\in \left\{1,2,\hdots,q\right\}$, completing the second part of the proof. \par
	
	Now, consider the bound on the transformed nonparametric uncertainty $\boldsymbol{\xi}_{fb_{k}}$ is given as follows.
	\begin{equation*}
		\left\| \boldsymbol{\xi}_{fb_{k}} \right\| \leq  \dfrac{ 2 \left(1+\delta_{2}\right)^{k-2}}{\delta_{1} \delta_{2}^{k-1}} \overline{\xi}_{y}, \quad k>1.
	\end{equation*}
	For $k=1$, $\left\| \boldsymbol{\xi}_{fb_{k}} \right\| \leq \dfrac{\overline{\xi}_{y}}{\delta_{1}}$. For $l=k$, $\boldsymbol{\xi}_{fb_{k}} = \boldsymbol{\xi}_{f_{l}} \left(t_{k}\right)$ thus,
	\begin{equation*}
		\left\| \boldsymbol{\xi}_{fb_{k+1}} \right\| \leq \dfrac{1}{\left\| \mathbf{v}_{k+1} \right\|} \left( \left\| \dfrac{\boldsymbol{\xi}_{f} \left(t_{k}\right)}{\boldsymbol{\varphi}_{f} \left(t_{k}\right)} \right\| + \sum_{j=1}^{k} \left\| \boldsymbol{\varphi}_{fb_{j}}^{\top} \right\| \left\| \mathbf{v}_{j} \right\| \left\| \boldsymbol{\xi}_{fb_{j}} \right\| \right).
	\end{equation*}
	Using \textit{Algorithm} $\ref{alg:identification}$ and $\eqref{eq:upper_bound_uncertainty}$, we obtain  the properties $\left\| \boldsymbol{\xi}_{f} \right\| \leq \overline{\xi}_{y}$, $\left\| \mathbf{v}_{1} \right\| = 1$, $\left\| \boldsymbol{\varphi}_{fb_{j}} \right\|=1$, $\delta_{2} < \left\| \mathbf{v}_{j} \right\| < 1$, $\delta_{2} < \left\| \mathbf{v}_{k+1} \right\| < 1$, and substituting bound on $\left\| \boldsymbol{\xi}_{fb_{k}} \right\|$, we obtain the following.
	\begin{equation*}
	\begin{aligned}
		\left\| \boldsymbol{\xi}_{fb_{k+1}} \right\| & \leq \dfrac{1}{\delta_{2}} \left( \dfrac{\overline{\xi}_{y}}{\delta_{1}} + \dfrac{\overline{\xi}_{y}}{\delta_{1}} + \sum_{j=2}^{k} \dfrac{ 2 \left(1+\delta_{2}\right)^{j-2}}{\delta_{1} \delta_{2}^{j-1}} \overline{\xi}_{y} \right), \\
			& \leq \dfrac{1}{\delta_{2}} \left( \dfrac{2}{\delta_{1}}\overline{\xi}_{y} + \dfrac{2}{\delta_{1} \delta_{2}} \sum_{j=2}^{k} \dfrac{\left(1+\delta_{2}\right)^{j-2}}{\delta_{2}^{j-2}} \overline{\xi}_{y} \right)
	\end{aligned}
	\end{equation*}
	Using $1+a+a^{2}+\hdots + a^{k-1} = \dfrac{a^{k}-1}{a-1}$.
	\begin{equation*}
		\left\| \boldsymbol{\xi}_{fb_{k+1}} \right\| \leq \dfrac{ 2 \overline{\xi}_{y}}{\delta_{1} \delta_{2}} \left( 1 + \dfrac{1}{\delta_{2}} \dfrac{\left(\dfrac{1+\delta_{2}}{\delta_{2}}\right)^{k-1}-1}{\dfrac{\left(1+\delta_{2}\right)}{\delta_{2}}-1} \right)
	\end{equation*}
	Further simplifying, we obtain the following.
	\begin{equation*}
		\left\| \boldsymbol{\xi}_{fb_{k+1}} \right\| \leq \dfrac{ 2 \overline{\xi}_{y}}{\delta_{1} \delta_{2}} \left( 1 + \dfrac{\left(1+\delta_{2}\right)^{k-1}-\delta_{2}^{k-1}}{\delta_{2}^{k-1}} \right)
	\end{equation*}
	On further simplification, we get.
	\begin{equation*}
		\left\| \boldsymbol{\xi}_{fb_{k+1}} \right\| \leq  \dfrac{2 \left(1+\delta_{2}\right)^{k-1}}{\delta_{1} \delta_{2}^{k}} \overline{\xi}_{y}, \quad k>1.
	\end{equation*}
Further, $\left\| \boldsymbol{\xi}_{fb} \right\| \leq \left\| \boldsymbol{\xi}_{fb_{1}} \right\| + \left\| \boldsymbol{\xi}_{fb_{2}} \right\| + \hdots + \left\| \boldsymbol{\xi}_{fb_{q}} \right\| \leq \dfrac{\overline{\xi}_{y}}{\delta_{1}} + \dfrac{2}{\delta_{1} \delta_{2}} \overline{\xi}_{y} + \dfrac{2 \left(1+\delta_{2}\right)}{\delta_{1} \delta_{2}^{2}} \overline{\xi}_{y} + \hdots + \dfrac{2 \left(1+\delta_{2}\right)^{q-2}}{\delta_{1} \delta_{2}^{q-1}} \overline{\xi}_{y}$, which simplifies to the following.
\begin{equation*}
	\left\| \boldsymbol{\xi}_{fb} \right\| \leq \dfrac{\overline{\xi}_{y}}{\delta_{1}} + \dfrac{2}{\delta_{1} \delta_{2}} \left( 1 + \dfrac{1+\delta_{2}}{\delta_{2}} + \hdots + \dfrac{\left(1+\delta_{2}\right)^{q-2}}{\delta_{2}^{q-2}} \right) \overline{\xi}_{y}.
\end{equation*}
Using the geometric series identity $1+a+a^{2}+\hdots + a^{k-1} = \dfrac{a^{k}-1}{a-1}$, we further simplify the bound.
\begin{equation*}
	\left\| \boldsymbol{\xi}_{fb} \right\| \leq \dfrac{\overline{\xi}_{y}}{\delta_{1}} + \dfrac{2}{\delta_{1} \delta_{2}} \left( \dfrac{\left(\dfrac{1+\delta_{2}}{\delta_{2}}\right)^{q-1}-1}{\dfrac{\left(1+\delta_{2}\right)}{\delta_{2}}-1} \right) \overline{\xi}_{y} = \dfrac{\overline{\xi}_{y}}{\delta_{1}} + \dfrac{2}{\delta_{1}} \dfrac{\left(1+\delta_{2}\right)^{q-1} - \delta_{2}^{q-1}}{\delta_{2}^{q-1}} \overline{\xi}_{y},
\end{equation*}
\begin{equation*}
	\left\| \boldsymbol{\xi}_{fb} \right\| \leq \left( \dfrac{1}{\delta_{1}} + \dfrac{2 \left( \left(1+\delta_{2}\right)^{q-1} - \delta_{2}^{q-1} \right)}{\delta_{1} \delta_{2}^{q-1}} \right) \overline{\xi}_{y}.
\end{equation*}
This completes the proof.
\end{proof}

\begin{remark}
The parameter $\delta_{1}$ sets the minimum acceptable norm of the filtered regressor vector $\boldsymbol{\varphi}_{f} \left( t \right)$ before the first orthonormal column is accepted into $\boldsymbol{\Phi}_{fb}$. It should be chosen in accordance with the system's operating region, specifically, proportional to the expected minimum magnitude of the regressor in normal operation. For instance, in a quadrotor hovering at a minimum altitude $h$ with a minimum propeller voltage $V$, a suitable choice is $\delta_{i} = \epsilon \left\| \begin{bmatrix} h & V \end{bmatrix} \right\|$ for some $\epsilon \in \left(0,1\right)$, where the norm is computed using normalized, dimensionless quantities to avoid unit inconsistency. A value of $\epsilon$ close to 1 accepts only strongly excited directions, which slows basis construction but improves robustness to noise. A value too close to 0 risks including nearly zero directions, which amplifies the effect of sensor noise and nonparametric uncertainty in the transformed system, as reflected in the upper bound $\overline{\xi}_{fb}$ in Theorem $\ref{theorem:algorithm1_lip}$.
\end{remark}

\begin{remark}
The parameter $\delta_{2}$ controls the minimum acceptable residual norm $\left\| \mathbf{v}_{k} \right\|$ after orthogonal projection, and therefore governs the degree of linear independence required between successive columns of $\boldsymbol{\Phi}_{fb}$. A natural and practically motivated choice is $\delta_{2} = \delta_{1} \sin(\theta_{\text{min}})$, where $\theta_{\text{min}}$ is the minimum acceptable angle between the new candidate direction and the existing subspace spanned by previously accepted columns. For example, choosing $\theta_{\text{min}}$ = $1^{\circ}$ gives $\delta_{2} = \delta_{1} \cos(1^{\circ})$, which accepts a new column only if it is sufficiently linearly independent of all previously accepted columns. This threshold is consistent with the least count of typical sensor measurements and avoids numerical issues arising from near-linear dependence between regressor samples. Selecting a larger $\theta_{\text{min}}$ (equivalently, larger $\delta_{2}$) enforces stricter linear independence, yielding a better-conditioned basis at the cost of slower column accumulation. This introduces a trade-off: a larger $\delta_{2}$ produces a better-conditioned $\boldsymbol{\Psi}_{fb}$ and a tighter ultimate bound $\overline{\xi}_{fb}$, but may delay the onset of combined adaptation by increasing the time $t_{q}$ required to complete the orthogonal basis.
\label{remark:delta_2}
\end{remark}

\begin{remark}
	The bound derived in Theorem \ref{theorem:algorithm1_lip} is conservative. In the expression for the upper bound on $\left\| \boldsymbol{\xi}_{fb_{k+1}} \right\|$, the vector $\mathbf{v}_{j}$ satisfies $\delta_{2} < \left\| \mathbf{v}_{j} \right\| < 1$. However, while deriving the bound, we conservatively assumed $\left\| \mathbf{v}_{j} \right\| < 1$, which results in an overestimation. In practice, however, the actual uncertainty magnitude is often significantly lower, as supported by numerical simulations.
\end{remark}

We next propose an update law for estimating the system parameters using the outputs of Algorithm \ref{alg:identification}, and establish several stability properties essential for closed-loop stability analysis.

\begin{theorem}
	Let $\hat{\mathbf{w}}$ denote the estimate of the unknown constant vector $\mathbf{w}$, and the parameter estimation error denoted by $\tilde{\mathbf{w}}$ be defined as $\tilde{\mathbf{w}} = \hat{\mathbf{w}} - \mathbf{w}$. The proposed update law for online estimate $\hat{\mathbf{w}}$ is as follows.
	\begin{equation}
		\dot{\hat{\mathbf{w}}} = \gamma_{w} \boldsymbol{\Gamma}_{w} \boldsymbol{\Psi}_{fb} \left(\mathbf{y}_{fb} - \boldsymbol{\Psi}_{fb}^{\top} \hat{\mathbf{w}} \right),
		\label{eq:adaptation_law_system_parameter}
	\end{equation}	
	where $\gamma_{w}=0$ when $k\leq q$ and $\gamma_{w}=1$ when $k>q$ (\textit{Algorithm} $\ref{alg:identification}$). $\boldsymbol{\Gamma}_{w}$ is a positive definite diagonal gain matrix. Substituting $\mathbf{y}_{fb} = \boldsymbol{\Psi}_{fb}^{\top} \mathbf{w} + \boldsymbol{\xi}_{fb}$ from Theorem $\ref{theorem:algorithm1_lip}$ into $\eqref{eq:adaptation_law_system_parameter}$ generates the following parameter estimation error dynamics
	\begin{equation}
		\dot{\tilde{\mathbf{w}}} = \gamma_{w} \boldsymbol{\Gamma}_{w} \boldsymbol{\Psi}_{fb} \left( -\boldsymbol{\Psi}_{fb}^{\top} \tilde{\mathbf{w}} + \boldsymbol{\xi}_{fb} \right).
		\label{eq:parameter_estimation_error_dynamics}
	\end{equation}
	The following statements hold with the designed update law $\eqref{eq:adaptation_law_system_parameter}$.
	 \begin{enumerate}
	 	\item Let $t_{q}$ denote the time instant at which the matrix $\boldsymbol{\Phi}_{fb}$ gets full rank. Then, $\hat{\mathbf{w}}\left(t\right) = \hat{\mathbf{w}} \left(t_{0}\right)$, $\forall t_{0} \leq t \leq t_{q}$.
	 	\item The parameter estimation error is uniformly ultimately bounded $\forall \, t\geq t_{0}$.
	 	\begin{equation}
	 		\left\| \tilde{\mathbf{w}} \left(t\right) \right\|^{2} \leq \begin{cases}
	 			\left\| \hat{\mathbf{w}} \left(t_{0}\right) - \mathbf{w} \right\|^{2}, & t_{0} \leq t \leq t_{q} \\
	 			\dfrac{\lambda_{\text{max}}\left(\boldsymbol{\Gamma}_{w}^{-1}\right)}{\lambda_{\text{min}} \left(\boldsymbol{\Gamma}_{w}^{-1}\right)}  \left\| \tilde{\mathbf{w}} \left(t_{0}\right) \right\|^{2} e^{-k_{w} \left(t-t_{q}\right)} + \dfrac{\lambda_{\text{max}}\left(\boldsymbol{\Gamma}_{w}^{-1}\right)}{\lambda_{\text{min}} \left(\boldsymbol{\Gamma}_{w}^{-1}\right)} \left(1-e^{-k_{w} \left(t-t_{q}\right)}\right) \overline{\xi}_{fb}^{2}, & t>t_{q}
	 		\end{cases}.
	 		\label{eq:parameter_estimation_bound}
	 	\end{equation}
	 	where $k_{w} = \lambda_{\min}\left(\boldsymbol{\Gamma}_{w} \right)$.
	 	\item In the absence of a nonparametric uncertainty, that is, $\boldsymbol{\xi}=0$, the parameter estimation error dynamics is
	 	\begin{equation}
	 		\dot{\tilde{\mathbf{w}}} = - \boldsymbol{\Gamma}_{w} \tilde{\mathbf{w}}, \quad \forall t>t_{q}.
	 		\label{eq:parameter_estimation_error_dynamics_monotonic}
	 	\end{equation}
	 \end{enumerate}
	 \label{theorem:parameter_estimation_bounds}
\end{theorem}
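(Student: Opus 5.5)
The argument splits at $t_q$, the instant Algorithm \ref{alg:identification} accepts its $q$-th column, so that $k>q$ and $\gamma_w$ switches from $0$ to $1$.

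\emph{Item 1 and the first case of item 2.} For $t_0\le t\le t_q$ we have $\gamma_w=0$. The update law \eqref{eq:adaptation_law_system_parameter} then gives $\dot{\hat{\mathbf{w}}}=0$, so $\hat{\mathbf{w}}(t)=\hat{\mathbf{w}}(t_0)$. Since $\mathbf{w}$ is constant, $\tilde{\mathbf{w}}(t)=\hat{\mathbf{w}}(t_0)-\mathbf{w}$ on this interval, which is exactly the first line of \eqref{eq:parameter_estimation_bound}.

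\emph{Key structural fact for $t>t_q$.} By construction, the columns of $\boldsymbol{\Phi}_{fb}$ produced by the Modified Gram--Schmidt recursion \eqref{eq:orthonormal_column_1}--\eqref{eq:orthonormal_column_i} are orthonormal. After $t_q$ the matrix $\boldsymbol{\Phi}_{fb}$ is square and full rank, hence orthogonal, so $\boldsymbol{\Phi}_{fb}\boldsymbol{\Phi}_{fb}^{\top}=\mathbf{I}$ by matrix identity 2. Because $\boldsymbol{\Psi}_{fb}$ is block diagonal with blocks $\boldsymbol{\Phi}_{fb}$, we also get $\boldsymbol{\Psi}_{fb}\boldsymbol{\Psi}_{fb}^{\top}=\mathbf{I}_{mq}$. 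The stored data are frozen after $t_q$, so $\boldsymbol{\Psi}_{fb}$ and $\boldsymbol{\xi}_{fb}$ are constant. The error dynamics \eqref{eq:parameter_estimation_error_dynamics} therefore reduce to
\[
\dot{\tilde{\mathbf{w}}}=-\boldsymbol{\Gamma}_w\tilde{\mathbf{w}}+\boldsymbol{\Gamma}_w\boldsymbol{\Psi}_{fb}\boldsymbol{\xi}_{fb},
\]
with $\|\boldsymbol{\Psi}_{fb}\boldsymbol{\xi}_{fb}\|=\|\boldsymbol{\xi}_{fb}\|\le\overline{\xi}_{fb}$ by orthogonality and Theorem \ref{theorem:algorithm1_lip}. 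Item 3 follows at once: if $\boldsymbol{\xi}=0$, then $\boldsymbol{\xi}_f$ and hence $\boldsymbol{\xi}_{fb}$ (via \eqref{eq:lip_uncertain_expression}) vanish, leaving $\dot{\tilde{\mathbf{w}}}=-\boldsymbol{\Gamma}_w\tilde{\mathbf{w}}$.

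\emph{Second case of item 2.} Take $V=\tilde{\mathbf{w}}^{\top}\boldsymbol{\Gamma}_w^{-1}\tilde{\mathbf{w}}$, which satisfies $\lambda_{\min}(\boldsymbol{\Gamma}_w^{-1})\|\tilde{\mathbf{w}}\|^2\le V\le\lambda_{\max}(\boldsymbol{\Gamma}_w^{-1})\|\tilde{\mathbf{w}}\|^2$. Its derivative is
\[
\dot V=-2\|\tilde{\mathbf{w}}\|^2+2\tilde{\mathbf{w}}^{\top}\boldsymbol{\Psi}_{fb}\boldsymbol{\xi}_{fb}\le-\|\tilde{\mathbf{w}}\|^2+\overline{\xi}_{fb}^2,
\]
using Young's inequality. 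The main obstacle is reaching the exact rate $k_w=\lambda_{\min}(\boldsymbol{\Gamma}_w)$ with the constants written in \eqref{eq:parameter_estimation_bound}. From $-\|\tilde{\mathbf{w}}\|^2\le-V/\lambda_{\max}(\boldsymbol{\Gamma}_w^{-1})=-\lambda_{\min}(\boldsymbol{\Gamma}_w)V$ we get $\dot V\le-k_wV+\overline{\xi}_{fb}^2$. The comparison lemma then gives
\[
V(t)\le V(t_q)e^{-k_w(t-t_q)}+\frac{\overline{\xi}_{fb}^2}{k_w}\bigl(1-e^{-k_w(t-t_q)}\bigr).
\]
Dividing by $\lambda_{\min}(\boldsymbol{\Gamma}_w^{-1})$, and using $V(t_q)\le\lambda_{\max}(\boldsymbol{\Gamma}_w^{-1})\|\tilde{\mathbf{w}}(t_0)\|^2$ (valid since $\tilde{\mathbf{w}}(t_q)=\tilde{\mathbf{w}}(t_0)$ by item 1), yields the stated form. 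The prefactor $1/k_w=\lambda_{\max}(\boldsymbol{\Gamma}_w^{-1})$ produces the ratio multiplying $\overline{\xi}_{fb}^2$. I would check this constant carefully; alternatively, the variation-of-constants formula can be applied directly to the now time-invariant linear system. Uniform ultimate boundedness follows as $t\to\infty$.
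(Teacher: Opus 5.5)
Your proposal is correct and follows the paper's proof essentially step for step. You use the same Lyapunov function $\tilde{\mathbf{w}}^{\top}\boldsymbol{\Gamma}_{w}^{-1}\tilde{\mathbf{w}}$ (the paper includes a factor $\tfrac12$, which is immaterial), the orthogonality $\boldsymbol{\Psi}_{fb}\boldsymbol{\Psi}_{fb}^{\top}=\mathbf{I}$ after $t_q$, Young's inequality, and the comparison lemma with $k_w = 1/\lambda_{\max}(\boldsymbol{\Gamma}_w^{-1})$, and your remark that $\tilde{\mathbf{w}}(t_q)=\tilde{\mathbf{w}}(t_0)$ by item 1 makes explicit why $\|\tilde{\mathbf{w}}(t_0)\|$ may appear in front of $e^{-k_w(t-t_q)}$, a step the paper leaves implicit.
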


\begin{proof}
Consider the following Lyapunov function candidate in terms of the parameter estimation error $\tilde{\mathbf{w}}$.
\begin{equation}
	V_{w} = \dfrac{1}{2} \tilde{\mathbf{w}}^{\top} \boldsymbol{\Gamma}_{w}^{-1} \tilde{\mathbf{w}}.
	\label{eq:V_estimation}
\end{equation}
The time derivative of $V_{w}$ along the error trajectory $\eqref{eq:parameter_estimation_error_dynamics}$ is
\begin{equation*}
	\dot{V}_{w} = - \gamma_{w} \tilde{\mathbf{w}}^{\top} \boldsymbol{\Psi}_{fb} \boldsymbol{\Psi}_{fb}^{\top} \tilde{\mathbf{w}} + \gamma_{w} \tilde{\mathbf{w}}^{\top} \boldsymbol{\Psi}_{fb} \boldsymbol{\xi}_{fb}.
\end{equation*}
As defined in the update law $\eqref{eq:adaptation_law_system_parameter}$, $\gamma_{w}$ goes from $0$ to $1$ once $k>q$, that is, when sufficient data has been accumulated. Therefore, for the initial time duration $t_{0} \leq t \leq t_{q}$, we have $\boldsymbol{\gamma}_{w}=0$, implying that $\dot{V}_{w}=0$, and hence $\tilde{\mathbf{w}}\left(t\right) = \tilde{\mathbf{w}} \left(t_{0}\right)$, and  $\hat{\mathbf{w}}\left(t\right) = \hat{\mathbf{w}} \left(t_{0}\right)$ over this interval. Once $k>q$, that is, $t>t_{q}$, we have $\boldsymbol{\gamma}_{w} = 1$. Since $k>q$ implies successful creation of an orthogonal matrix $\boldsymbol{\Psi}_{fb}$, inferring $\boldsymbol{\Psi}_{fb}\boldsymbol{\Psi}_{fb}^{\top} = \mathbf{I}$ and $\left\| \boldsymbol{\Psi}_{fb} \boldsymbol{\xi}_{fb} \right\| = \left\| \boldsymbol{\xi}_{fb} \right\|$. Substituting this into $\dot{V}_{w}$, we obtain the following.
\begin{equation*}
	\dot{V}_{w} \leq - \left\| \tilde{\mathbf{w}} \right\|^{2} + \left\| \tilde{\mathbf{w}} \right\| \left\| \boldsymbol{\xi}_{fb} \right\|, \quad t>t_{q},
\end{equation*}
using $2ab \leq a^{2}+b^{2}$ for any $a,b \in \mathbb{R}$, further simplifies as follows.
\begin{equation}
	\dot{V}_{w} \leq - \dfrac{1}{2} \left\| \tilde{\mathbf{w}} \right\|^{2} + \dfrac{1}{2} \left\| \boldsymbol{\xi}_{fb} \right\|^{2}, \quad t>t_{q}.
	\label{eq:Vw_dot}
\end{equation}
From \textit{Theorem} $\ref{theorem:algorithm1_lip}$, $\left\| \boldsymbol{\xi}_{fb} \right\| \leq \overline{\xi}_{fb}$, leading to $\dot{V}_{w} \leq - \dfrac{1}{2}\left\| \tilde{\mathbf{w}} \right\|^{2} + \dfrac{1}{2} \overline{\xi}_{fb}^{2}$. The Lyapunov function $V_{w}$ satisfies $0.5\lambda_{\text{min}} \left(\boldsymbol{\Gamma}_{w}^{-1}\right) \left\| \tilde{\mathbf{w}} \right\|^{2} \leq V_{w} \leq 0.5\lambda_{\text{max}} \left(\boldsymbol{\Gamma}_{w}^{-1}\right) \left\| \tilde{\mathbf{w}} \right\|^{2}$, we can express $\dot{V}_{w}$ as below.
	\begin{equation*}
		\dot{V}_{w} \leq - k_{w} V_{w} + \rho_{w}, \quad k_{w}=\dfrac{1}{\lambda_{\text{max}} \left(\boldsymbol{\Gamma}_{w}^{-1}\right)}, \quad \rho_{w} = \dfrac{1}{2} \overline{\xi}_{fb}^{2}.
	\end{equation*}
	Using the comparison lemma (Lemma A.3.2, \cite{farrell2006adaptive}), the solution to the above differential inequality is
	\begin{equation*}
		V_{w}\left(t\right) \leq \left(V_{w}\left(t_{0}\right) - \dfrac{\rho_{w}}{k_{w}}\right) e^{-k_{w} \left(t-t_{q}\right)}+\dfrac{\rho_{w}}{k_{w}}, \quad \forall \, t > t_{q},
	\end{equation*}
	which is further expressed in terms of the parameter estimation error $\tilde{\mathbf{w}}$.
	\begin{equation*}
		\left\| \tilde{\mathbf{w}} \left(t\right) \right\|^{2} \leq \dfrac{1}{0.5\lambda_{\text{min}} \left(\boldsymbol{\Gamma}_{w}^{-1}\right)} \left(V_{w}\left(t_{0}\right) - \dfrac{\rho_{w}}{k_{w}}\right) e^{-k_{w} \left(t-t_{q}\right)}+ \dfrac{1}{0.5\lambda_{\text{min}} \left(\boldsymbol{\Gamma}_{w}^{-1}\right)} \dfrac{\rho_{w}}{k_{w}},
	\end{equation*}
	substituting $V_{w}\left(t_{0}\right) \leq 0.5\lambda_{\text{max}}\left(\boldsymbol{\Gamma}_{w}^{-1} \right) \left\| \tilde{\mathbf{w}} \left(t_{0}\right) \right\|^{2}$, we obtain
	\begin{equation*}
		\left\| \tilde{\mathbf{w}} \left(t\right) \right\|^{2} \leq \dfrac{1}{0.5\lambda_{\text{min}} \left(\boldsymbol{\Gamma}_{w}^{-1}\right)} \left(0.5\lambda_{\text{max}}\left(\boldsymbol{\Gamma}_{w}^{-1}\right) \left\| \tilde{\mathbf{w}} \left(t_{0}\right) \right\|^{2} - 0.5\lambda_{\text{max}}\left(\boldsymbol{\Gamma}_{w}^{-1}\right)\overline{\xi}_{fb}^{2} \right) e^{-k_{w} \left(t-t_{q}\right)} + \dfrac{\lambda_{\text{max}}\left(\boldsymbol{\Gamma}_{w}^{-1}\right)}{\lambda_{\text{min}} \left(\boldsymbol{\Gamma}_{w}^{-1}\right)} \overline{\xi}_{fb}^{2} , \quad \forall \, t > t_{q}.
	\end{equation*}
	Finally, in the special case where $\boldsymbol{\xi}=0$, we have $\boldsymbol{\xi}_{fb}=0$, and  $\boldsymbol{\Psi}_{fb} \boldsymbol{\Psi}_{fb}^{\top} = \mathbf{I}$, recovering the monotonic parameter estimation error dynamics $\eqref{eq:parameter_estimation_error_dynamics_monotonic}$,	thus completing the proof.
\end{proof}

\begin{remark}
	The parameter estimation error dynamics in \eqref{eq:parameter_estimation_error_dynamics_monotonic} is driven by past system data. Notably, the individual parameter error components are decoupled, enabling independent gain tuning for faster convergence. The convergence rate $k_w$ depends solely on the user-defined gain $\boldsymbol{\Gamma}_{w}$, unlike DREM-based methods \cite{aranovskiy2016parameters}, which involve determinant terms in the update law.
\end{remark}

We have proved that the parameter estimate $\hat{\mathbf{w}}$ is uniformly ultimately bounded. However, the controller gain estimates, which are deduced using the matching condition and parameter estimate $\hat{\mathbf{w}}$, require $\hat{\mathbf{A}}$ and $\hat{\boldsymbol{\Lambda}}$. The following lemma proves the boundedness of $\hat{\mathbf{A}}$ and $\hat{\boldsymbol{\Lambda}}$ in terms of $\hat{\mathbf{w}}$.

\begin{proposition}
	The following inequalities hold
	\begin{equation}
		\left\| \mathbf{B}^{\dagger} \tilde{\mathbf{A}} \right\|_{F} \leq \left\| \tilde{\mathbf{w}} \right\| \text{ and } \left\| \tilde{\boldsymbol{\Lambda}} \right\|_{F} \leq \left\| \tilde{\mathbf{w}} \right\|.
		\label{eq:parameter_estimation_error_bound}
	\end{equation}
	\label{prop:parameter_estimation_error_bound}
\end{proposition}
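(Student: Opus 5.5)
The plan is to read both inequalities directly off the block structure of $\mathbf{W}$ in \eqref{eq:system1}, using identity 1 of Section \ref{section:Preliminaries} ($\|\mathbf{M}\|_F = \|\mathrm{vec}(\mathbf{M})\|$). First I will make precise how $\hat{\mathbf{A}}$ and $\hat{\boldsymbol{\Lambda}}$ are defined from $\hat{\mathbf{w}}$. Un-vectorize $\hat{\mathbf{w}}$ into $\hat{\mathbf{W}} \in \mathbb{R}^{q \times m}$ so that $\hat{\mathbf{w}} = \mathrm{vec}(\hat{\mathbf{W}})$. Then partition $\hat{\mathbf{W}}^{\top}$ conformally with $\mathbf{W}^{\top} = \begin{bmatrix} \mathbf{B}^{\dagger}\mathbf{A} & \boldsymbol{\Lambda} & \boldsymbol{\Lambda}\boldsymbol{\Theta}^{\top} \end{bmatrix}$. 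Its first $n$ columns are the estimate of $\mathbf{B}^{\dagger}\mathbf{A}$, written $\mathbf{B}^{\dagger}\hat{\mathbf{A}}$; the next $m$ columns are $\hat{\boldsymbol{\Lambda}}$; the last $p$ columns estimate $\boldsymbol{\Lambda}\boldsymbol{\Theta}^{\top}$. Since $\mathrm{vec}$ is linear, $\tilde{\mathbf{w}} = \mathrm{vec}(\tilde{\mathbf{W}})$ with $\tilde{\mathbf{W}} = \hat{\mathbf{W}} - \mathbf{W}$. The same partition of $\tilde{\mathbf{W}}^{\top}$ then gives $\begin{bmatrix} \mathbf{B}^{\dagger}\tilde{\mathbf{A}} & \tilde{\boldsymbol{\Lambda}} & \ast \end{bmatrix}$.

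Next, by identity 1 and the fact that the Frobenius norm is invariant under transposition,
\begin{equation*}
\|\tilde{\mathbf{w}}\|^{2} = \|\tilde{\mathbf{W}}\|_{F}^{2} = \|\tilde{\mathbf{W}}^{\top}\|_{F}^{2} = \|\mathbf{B}^{\dagger}\tilde{\mathbf{A}}\|_{F}^{2} + \|\tilde{\boldsymbol{\Lambda}}\|_{F}^{2} + \|\tilde{\mathbf{W}}_{\theta}^{\top}\|_{F}^{2},
\end{equation*}
where $\tilde{\mathbf{W}}_{\theta}^{\top}$ is the third block. The squared Frobenius norm is a sum of squared entries, and the column blocks partition those entries, so this additive split holds exactly. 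Dropping the nonnegative terms gives $\|\mathbf{B}^{\dagger}\tilde{\mathbf{A}}\|_{F} \le \|\tilde{\mathbf{w}}\|$ and $\|\tilde{\boldsymbol{\Lambda}}\|_{F} \le \|\tilde{\mathbf{w}}\|$, which is \eqref{eq:parameter_estimation_error_bound}.

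The only real subtlety is bookkeeping. The quantities $\hat{\mathbf{A}}$ and $\hat{\boldsymbol{\Lambda}}$ must be defined as exactly these sub-blocks of $\hat{\mathbf{W}}$. In particular, $\hat{\boldsymbol{\Lambda}}$ must not be projected onto diagonal matrices, or else the off-diagonal entries must be handled separately. A diagonal projection would only decrease $\|\tilde{\boldsymbol{\Lambda}}\|_{F}$, because the true $\boldsymbol{\Lambda}$ is diagonal, so the inequality survives either way. Likewise, only $\mathbf{B}^{\dagger}\hat{\mathbf{A}}$, not $\hat{\mathbf{A}}$ itself, is identifiable, which is why the statement bounds $\mathbf{B}^{\dagger}\tilde{\mathbf{A}}$. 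With these conventions stated, the argument is a one-line norm decomposition.
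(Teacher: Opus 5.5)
Your proof is correct and matches the paper's argument in Appendix B. There, too, $\tilde{\mathbf{w}}$ is identified as the vectorization of $\begin{bmatrix} \mathbf{B}^{\dagger}\tilde{\mathbf{A}} & \tilde{\boldsymbol{\Lambda}} & \ast \end{bmatrix}^{\top}$, the squared Frobenius norm is split over the column blocks, and the nonnegative remaining terms are dropped; the only cosmetic difference is that the paper writes the third block explicitly as $\hat{\boldsymbol{\Lambda}}\tilde{\boldsymbol{\Theta}}^{\top} + \tilde{\boldsymbol{\Lambda}}\boldsymbol{\Theta}^{\top}$, which, as you note, plays no role in the two bounds.
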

\begin{proof} The inequalities in $\eqref{eq:parameter_estimation_error_bound}$ follow directly from the fact that $\tilde{\mathbf{w}} = vec\left(\begin{bmatrix} \mathbf{B}^{\dagger} \tilde{\mathbf{A}} & \tilde{ \boldsymbol{\Lambda}} & \hat{ \boldsymbol{\Lambda}} \tilde{ \boldsymbol{\Theta}}^{\top} + \tilde{ \boldsymbol{\Lambda}}  \boldsymbol{\Theta}^{\top} \end{bmatrix}^{\top}\right)$, so that $\left\| \tilde{\mathbf{w}} \right\|^{2} = \left\| \mathbf{B}^{\dagger} \tilde{\mathbf{A}} \right\|_{F}^{2} + \left\| \tilde{ \boldsymbol{\Lambda}} \right\|_{F}^{2} + \left\| \hat{ \boldsymbol{\Lambda}} \tilde{ \boldsymbol{\Theta}}^{\top} + \tilde{ \boldsymbol{\Lambda}}  \boldsymbol{\Theta}^{\top} \right\|_{F}^{2}$, and each component norm is therefore bounded by the total norm $\left\| \tilde{\mathbf{w}} \right\|$. (See Appendix B for the complete derivation.)
\end{proof}

Having established that the parameter estimate $\hat{\mathbf{w}}$ is uniformly ultimately bounded, we now turn to estimating the controller gains using the system parameter estimates $\hat{\mathbf{A}}$ and $\hat{\boldsymbol{\Lambda}}$, obtained through the matching condition. Since nonparametric uncertainties are considered, the parameter estimates obtained online may not be sufficiently accurate, though they remain bounded. Specifically, the sign structure and magnitude of the diagonal entries of $\hat{\boldsymbol{\Lambda}}$ must be consistent with those of $\boldsymbol{\Lambda}$ for the controller gain updates to be meaningful. If the diagonal entries of $\hat{\boldsymbol{\Lambda}}$ have opposite signs compared to those of $\boldsymbol{\Lambda}$, the resulting controller gain updates may increase control effort in undesirable directions, potentially destabilizing the closed-loop system. Similarly, if the diagonal entries of $\hat{\boldsymbol{\Lambda}}$ are too small in magnitude, the computed gain updates may demand excessive control authority, leading to actuator saturation. To avoid such adverse scenarios, the controller gain update is conditioned on the following inequalities being satisfied.
\begin{equation}
	\hat{\boldsymbol{\Lambda}} \boldsymbol{\Lambda}_{s}>0, \quad \text{ and } \quad \lambda_{\text{min}} \left( \hat{\boldsymbol{\Lambda}} \right) > \underline{\lambda},
	\label{eq:condition_Lambda}
\end{equation}
where $\underline{\lambda}$ is the known positive lower bound on the minimum eigenvalue of $\boldsymbol{\Lambda}$ [see \textit{Assumption} $\ref{assump:Lambda}$]. The first condition ensures that the sign structure of $\hat{\boldsymbol{\Lambda}}$ is consistent with that of $\boldsymbol{\Lambda}$ using the known sign matrix $\boldsymbol{\Lambda}_{s}$ from Assumption $\ref{assump:Lambda}$. The second condition ensures that $\hat{\boldsymbol{\Lambda}}$ is sufficiently well-conditioned for use in the gain computation. When either condition is violated, the memory term does not update the parameter estimate, and the adaptation reverts to the standard $\sigma$-modification update rule, for which boundedness of all closed-loop signals is guaranteed by Proposition $\ref{proposition:direct_adaptive}$. These conditions are incorporated into the combined adaptation law in the following subsection through the switching parameter $\gamma_{i}$.

In the presence of nonparametric uncertainties, the parameter estimates obtained from a single dataset may not converge precisely to their ideal values, resulting in a non-zero ultimate bound. Collecting multiple datasets and incorporating them into the parameter update law can mitigate the effect of nonparametric uncertainties by averaging their influence across datasets, thereby reducing the effective noise level in the estimation error dynamics. 

\subsection{Combined Adaptation in Finite Excitation Condition}
\label{subsec:combined_adaptation_in_finite_excitation_condition}
The adaptation laws to update the estimates of the controller gains using the system parameter estimates are discussed next. Let $\hat{\mathbf{A}}$, $\hat{\boldsymbol{\Lambda}}$, and $\hat{\boldsymbol{\Lambda}}_{\boldsymbol{\Theta}}$ denote the estimates of $\mathbf{A}$, $\boldsymbol{\Lambda}$, and $\boldsymbol{\Lambda} \boldsymbol{\Theta}^{\top}$, respectively, obtained using the update law $\eqref{eq:adaptation_law_system_parameter}$. The indirect estimates of the controller gains $\mathbf{K}_{x}$ and $\mathbf{K}_{r}$ can then be computed from $\eqref{eq:matching_condition_ideal_gains}$. Notably, the accuracy of the indirect estimates $\mathbf{K}_{x}$ and $\mathbf{K}_{r}$ depends on the quality of $\hat{\boldsymbol{\Lambda}}$. Therefore, we incorporate the inequalities condition $\eqref{eq:condition_Lambda}$ to modify the adaptation laws in $\eqref{eq:controller_gain_update_law_gradient}$.

\begin{equation}
	\begin{aligned}
		\dot{\hat{\mathbf{K}}}_{x} =& - \left(1-\gamma_{i}\right) \hat{\mathbf{K}}_{x} - \mathbf{x} \mathbf{e}^{\top} \mathbf{P} \mathbf{B} \boldsymbol{\Lambda}_{s} + \gamma_{i} \left\{ \left( \mathbf{B}^{\dagger} \mathbf{A}_{r} - \mathbf{B}^{\dagger} \hat{\mathbf{A}} \right)^{\top} - \hat{\mathbf{K}}_{x} \hat{\boldsymbol{\Lambda}} \right\} \boldsymbol{\Lambda}_{s}, \\
		\dot{\hat{\mathbf{K}}}_{r} =& - \left(1-\gamma_{i}\right) \hat{\mathbf{K}}_{r} - \mathbf{r} \mathbf{e}^{\top} \mathbf{P} \mathbf{B} \boldsymbol{\Lambda}_{s} + \gamma_{i} \left\{ \left( \mathbf{B}^{\dagger} \mathbf{B}_{r} \right)^{\top} - \hat{\mathbf{K}}_{r} \hat{\boldsymbol{\Lambda}} \right\} \boldsymbol{\Lambda}_{s}, \\
		\dot{\hat{\boldsymbol{\Theta}}} =& - \left(1-\gamma_{i}\right) \hat{\boldsymbol{\Theta}} + \boldsymbol{\phi}\left( \mathbf{x} \right) \mathbf{e}^{\top} \mathbf{P} \mathbf{B} \boldsymbol{\Lambda}_{s} + \gamma_{i} \left( \hat{\boldsymbol{\Lambda}}_{\boldsymbol{\Theta}}^{\top} - \hat{\boldsymbol{\Theta}} \hat{\boldsymbol{\Lambda}} \right) \boldsymbol{\Lambda}_{s},
	\end{aligned}
	\label{eq:controller_gain_update_law_proposed}
\end{equation}
where $\gamma_{i}$ is defined as
\begin{equation}
	\gamma_{i} = \begin{cases} \gamma_{w}, & \text{ if } \hat{\boldsymbol{\Lambda}} \boldsymbol{\Lambda}_{s}>0 \text{ and } \lambda_{\text{min}} \left( \hat{\boldsymbol{\Lambda}} \right) > \underline{\lambda} \\ 0, & \text{ otherwise}.	\end{cases}
	\label{eq:gamma_i}
\end{equation}
Here, $\gamma_{w} \in \left\{ 0,1 \right\}$ is the excitation flag defined in Theorem $\ref{theorem:parameter_estimation_bounds}$, which equals $1$ only when the orthogonal basis is complete (i.e., $k > q$).

\begin{proposition}
	The proposed adaptation laws $\eqref{eq:controller_gain_update_law_proposed}$, which update the estimates of the controller gains $\hat{\mathbf{K}}_{x}$, $\hat{\mathbf{K}}_{r}$, and $\hat{\boldsymbol{\Theta}}$, results in the following estimation error dynamics.
	\begin{equation}
	\begin{aligned}
		\dot{\tilde{\mathbf{K}}}_{x} &= - \left\{ \left(1-\gamma_{i}\right) \hat{\mathbf{K}}_{x} + \gamma_{i} \tilde{\mathbf{K}}_{x} \hat{\boldsymbol{\Lambda}} \boldsymbol{\Lambda}_{s} \right\} - \mathbf{x} \mathbf{e}^{\top} \mathbf{P} \mathbf{B} \boldsymbol{\Lambda}_{s} - \gamma_{i} \left( \mathbf{B}^{\dagger} \tilde{\mathbf{A}} + \tilde{\boldsymbol{\Lambda}} \mathbf{K}_{x}^{\top} \right)^{\top} \boldsymbol{\Lambda}_{s}, \\
		\dot{\tilde{\mathbf{K}}}_{r} &= - \left\{ \left(1-\gamma_{i}\right) \hat{\mathbf{K}}_{r} + \gamma_{i} \tilde{\mathbf{K}}_{r} \hat{\boldsymbol{\Lambda}} \boldsymbol{\Lambda}_{s} \right\} - \mathbf{r} \mathbf{e}^{\top} \mathbf{P} \mathbf{B} \boldsymbol{\Lambda}_{s} - \gamma_{i} \mathbf{K}_{r} \tilde{\boldsymbol{\Lambda}} \boldsymbol{\Lambda}_{s}, \\
		\dot{\tilde{\boldsymbol{\Theta}}} &= - \left\{ \left(1-\gamma_{i}\right) \hat{\boldsymbol{\Theta}} + \gamma_{i} \tilde{\boldsymbol{\Theta}} \hat{\boldsymbol{\Lambda}} \boldsymbol{\Lambda}_{s} \right\} + \boldsymbol{\phi} \left(\mathbf{x}\right) \mathbf{e}^{\top} \mathbf{P} \mathbf{B} \boldsymbol{\Lambda}_{s} - \gamma_{i} \boldsymbol{\Theta} \tilde{\boldsymbol{\Lambda}} \boldsymbol{\Lambda}_{s}.
	\end{aligned}
	\label{eq:controller_gain_estimation_error_dynamics}
	\end{equation}
	\label{prop:controller_gain_estimation_error_dynamics}
\end{proposition}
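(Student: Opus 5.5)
The plan is a direct algebraic computation. Since $\mathbf{K}_{x}$, $\mathbf{K}_{r}$ and $\boldsymbol{\Theta}$ are constant, $\dot{\tilde{\mathbf{K}}}_{x}=\dot{\hat{\mathbf{K}}}_{x}$ and similarly for the other two. It therefore suffices to rewrite the right-hand sides of \eqref{eq:controller_gain_update_law_proposed} in terms of the errors $\tilde{\mathbf{K}}_{x}$, $\tilde{\mathbf{K}}_{r}$, $\tilde{\boldsymbol{\Theta}}$, $\mathbf{B}^{\dagger}\tilde{\mathbf{A}}$ and $\tilde{\boldsymbol{\Lambda}}$. The $\sigma$-modification terms $-(1-\gamma_{i})\hat{\mathbf{K}}_{x}$, $-(1-\gamma_{i})\hat{\mathbf{K}}_{r}$, $-(1-\gamma_{i})\hat{\boldsymbol{\Theta}}$ and the tracking-error terms are carried over unchanged. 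All the work is in the $\gamma_{i}$-weighted memory terms, for which the key tool is the matching condition in the form \eqref{eq:matching_condition_ideal_gains}.

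For $\hat{\mathbf{K}}_{x}$, use $\mathbf{B}^{\dagger}\mathbf{A}_{r}=\mathbf{B}^{\dagger}\mathbf{A}+\boldsymbol{\Lambda}\mathbf{K}_{x}^{\top}$, which follows from \eqref{eq:matching_condition_ideal_gains}. This gives $\mathbf{B}^{\dagger}\mathbf{A}_{r}-\mathbf{B}^{\dagger}\hat{\mathbf{A}}=-\mathbf{B}^{\dagger}\tilde{\mathbf{A}}+\boldsymbol{\Lambda}\mathbf{K}_{x}^{\top}$. Since $\boldsymbol{\Lambda}$ is diagonal, the transpose is $-(\mathbf{B}^{\dagger}\tilde{\mathbf{A}})^{\top}+\mathbf{K}_{x}\boldsymbol{\Lambda}$. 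Next, write $\hat{\mathbf{K}}_{x}\hat{\boldsymbol{\Lambda}}=\tilde{\mathbf{K}}_{x}\hat{\boldsymbol{\Lambda}}+\mathbf{K}_{x}\hat{\boldsymbol{\Lambda}}$ and $\mathbf{K}_{x}\boldsymbol{\Lambda}-\mathbf{K}_{x}\hat{\boldsymbol{\Lambda}}=-\mathbf{K}_{x}\tilde{\boldsymbol{\Lambda}}=-(\tilde{\boldsymbol{\Lambda}}\mathbf{K}_{x}^{\top})^{\top}$, again using $\tilde{\boldsymbol{\Lambda}}^{\top}=\tilde{\boldsymbol{\Lambda}}$. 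Collecting terms, the braced memory term equals $-\tilde{\mathbf{K}}_{x}\hat{\boldsymbol{\Lambda}}-(\mathbf{B}^{\dagger}\tilde{\mathbf{A}}+\tilde{\boldsymbol{\Lambda}}\mathbf{K}_{x}^{\top})^{\top}$. Multiplying by $\gamma_{i}(\cdot)\boldsymbol{\Lambda}_{s}$ then produces the first line of \eqref{eq:controller_gain_estimation_error_dynamics}.

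For $\hat{\mathbf{K}}_{r}$, use $(\mathbf{B}^{\dagger}\mathbf{B}_{r})^{\top}=\mathbf{K}_{r}\boldsymbol{\Lambda}$ and the same splitting of $\hat{\mathbf{K}}_{r}\hat{\boldsymbol{\Lambda}}$. This gives $-\tilde{\mathbf{K}}_{r}\hat{\boldsymbol{\Lambda}}-\mathbf{K}_{r}\tilde{\boldsymbol{\Lambda}}$. For $\hat{\boldsymbol{\Theta}}$, the estimated block $\hat{\boldsymbol{\Lambda}}_{\boldsymbol{\Theta}}$ corresponds to the third block of $\hat{\mathbf{W}}^{\top}$, whose ideal value is $\boldsymbol{\Lambda}\boldsymbol{\Theta}^{\top}$. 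Consistent with Proposition \ref{prop:parameter_estimation_error_bound}, I will identify it as $\hat{\boldsymbol{\Lambda}}_{\boldsymbol{\Theta}}=\hat{\boldsymbol{\Lambda}}\hat{\boldsymbol{\Theta}}^{\top}$ (the product form implicit in the vectorization used there). Then $\hat{\boldsymbol{\Lambda}}_{\boldsymbol{\Theta}}^{\top}-\hat{\boldsymbol{\Theta}}\hat{\boldsymbol{\Lambda}}$ has to be rewritten. Splitting $\hat{\boldsymbol{\Theta}}\hat{\boldsymbol{\Lambda}}=\tilde{\boldsymbol{\Theta}}\hat{\boldsymbol{\Lambda}}+\boldsymbol{\Theta}\hat{\boldsymbol{\Lambda}}$, and comparing $\hat{\boldsymbol{\Lambda}}_{\boldsymbol{\Theta}}^{\top}$ with $\boldsymbol{\Theta}\boldsymbol{\Lambda}$, should yield the form $-\tilde{\boldsymbol{\Theta}}\hat{\boldsymbol{\Lambda}}-\boldsymbol{\Theta}\tilde{\boldsymbol{\Lambda}}$ in the third line.

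\textbf{Main obstacle.} I expect the main obstacle to be this $\boldsymbol{\Theta}$ step, because the claimed expression depends on exactly how $\hat{\boldsymbol{\Lambda}}_{\boldsymbol{\Theta}}$ is decomposed relative to $\boldsymbol{\Theta}\boldsymbol{\Lambda}$. I would first fix the decomposition $\hat{\boldsymbol{\Lambda}}_{\boldsymbol{\Theta}}-\boldsymbol{\Lambda}\boldsymbol{\Theta}^{\top}=\hat{\boldsymbol{\Lambda}}\tilde{\boldsymbol{\Theta}}^{\top}+\tilde{\boldsymbol{\Lambda}}\boldsymbol{\Theta}^{\top}$ from Proposition \ref{prop:parameter_estimation_error_bound}. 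I would then carefully track which cross terms cancel, using diagonality of $\hat{\boldsymbol{\Lambda}}$, $\tilde{\boldsymbol{\Lambda}}$ and $\boldsymbol{\Lambda}_{s}$ throughout. If a residual term survives, I would state it explicitly so the stated form can be reconciled. The rest is substitution, plus noting that when $\gamma_{i}=0$ the laws reduce to \eqref{eq:controller_gain_update_law_gradient}, which is consistent with the claimed expressions.
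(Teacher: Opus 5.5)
Your treatment of the $\hat{\mathbf{K}}_{x}$ and $\hat{\mathbf{K}}_{r}$ lines matches the paper's (Appendix C): use the matching condition, split $\hat{\mathbf{K}}\hat{\boldsymbol{\Lambda}}$, and use diagonality. The gap is in the $\boldsymbol{\Theta}$ line. You identify $\hat{\boldsymbol{\Lambda}}_{\boldsymbol{\Theta}}=\hat{\boldsymbol{\Lambda}}\hat{\boldsymbol{\Theta}}^{\top}$, with $\hat{\boldsymbol{\Theta}}$ the controller's estimate. This makes the memory bracket vanish identically. Since $\hat{\boldsymbol{\Lambda}}$ is diagonal, $\hat{\boldsymbol{\Lambda}}_{\boldsymbol{\Theta}}^{\top}-\hat{\boldsymbol{\Theta}}\hat{\boldsymbol{\Lambda}}=\hat{\boldsymbol{\Theta}}\hat{\boldsymbol{\Lambda}}-\hat{\boldsymbol{\Theta}}\hat{\boldsymbol{\Lambda}}=\mathbf{0}$. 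Equivalently, inserting the Appendix B decomposition $\hat{\boldsymbol{\Lambda}}_{\boldsymbol{\Theta}}^{\top}=\boldsymbol{\Theta}\boldsymbol{\Lambda}+\tilde{\boldsymbol{\Theta}}\hat{\boldsymbol{\Lambda}}+\boldsymbol{\Theta}\tilde{\boldsymbol{\Lambda}}$, every term cancels. The target, however, is $-\tilde{\boldsymbol{\Theta}}\hat{\boldsymbol{\Lambda}}-\boldsymbol{\Theta}\tilde{\boldsymbol{\Lambda}}=\boldsymbol{\Theta}\boldsymbol{\Lambda}-\hat{\boldsymbol{\Theta}}\hat{\boldsymbol{\Lambda}}$, which is nonzero in general. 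So the issue is not leftover cross terms: under your identification the entire $\gamma_{i}$-term of the third line disappears, and the claimed dynamics would be false. The product form in Appendix B conflates the controller's $\hat{\boldsymbol{\Theta}}$ with a $\boldsymbol{\Theta}$-estimate implicit in $\hat{\mathbf{w}}$. Do not import it here.

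The paper instead treats $\hat{\boldsymbol{\Lambda}}_{\boldsymbol{\Theta}}$ as the third block of $\hat{\mathbf{W}}^{\top}$, unrelated to the controller's $\hat{\boldsymbol{\Theta}}$. It substitutes $\hat{\boldsymbol{\Lambda}}_{\boldsymbol{\Theta}}=(\hat{\boldsymbol{\Lambda}}-\tilde{\boldsymbol{\Lambda}})\boldsymbol{\Theta}^{\top}$, after which $\boldsymbol{\Theta}\hat{\boldsymbol{\Lambda}}-\boldsymbol{\Theta}\tilde{\boldsymbol{\Lambda}}-\hat{\boldsymbol{\Theta}}\hat{\boldsymbol{\Lambda}}=-\tilde{\boldsymbol{\Theta}}\hat{\boldsymbol{\Lambda}}-\boldsymbol{\Theta}\tilde{\boldsymbol{\Lambda}}$ is immediate. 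Be aware, though, that $(\hat{\boldsymbol{\Lambda}}-\tilde{\boldsymbol{\Lambda}})\boldsymbol{\Theta}^{\top}=\boldsymbol{\Lambda}\boldsymbol{\Theta}^{\top}$, so that step tacitly assumes the third block is estimated exactly. Done carefully, define $\tilde{\boldsymbol{\Lambda}}_{\boldsymbol{\Theta}}:=\hat{\boldsymbol{\Lambda}}_{\boldsymbol{\Theta}}-\boldsymbol{\Lambda}\boldsymbol{\Theta}^{\top}$, the third block of $\hat{\mathbf{W}}^{\top}-\mathbf{W}^{\top}$. One then gets
\[
\hat{\boldsymbol{\Lambda}}_{\boldsymbol{\Theta}}^{\top}-\hat{\boldsymbol{\Theta}}\hat{\boldsymbol{\Lambda}}=-\tilde{\boldsymbol{\Theta}}\hat{\boldsymbol{\Lambda}}-\boldsymbol{\Theta}\tilde{\boldsymbol{\Lambda}}+\tilde{\boldsymbol{\Lambda}}_{\boldsymbol{\Theta}}^{\top},
\]
so the third line holds only up to an additional term $+\gamma_{i}\tilde{\boldsymbol{\Lambda}}_{\boldsymbol{\Theta}}^{\top}\boldsymbol{\Lambda}_{s}$, where $\|\tilde{\boldsymbol{\Lambda}}_{\boldsymbol{\Theta}}\|_{F}\leq\|\tilde{\mathbf{w}}\|$. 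That term is the residual your final step should produce and state explicitly. It would only add a further $\|\tilde{\mathbf{w}}\|^{2}$-type contribution to $\rho_{3}$ in the later Lyapunov argument, but it is absent from the statement as written. It vanishes only when the $\boldsymbol{\Lambda}\boldsymbol{\Theta}^{\top}$ block is estimated exactly.
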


\begin{proof} Refer Appendix C. \end{proof}

\begin{remark}
	The proposed adaptation laws $\eqref{eq:controller_gain_update_law_proposed}$ reduce to the conventional adaptation laws with $\sigma$-modification, as given in \eqref{eq:controller_gain_update_law_gradient}, when $\gamma_{i}=0$. The system parameter estimation errors $\tilde{\mathbf{A}}$ and $\tilde{\boldsymbol{\Lambda}}$, which appear in the controller gain estimation error dynamics $\eqref{eq:controller_gain_estimation_error_dynamics}$, have already been shown to be bounded [see \textit{Proposition} $\ref{prop:parameter_estimation_error_bound}$]. The uniform ultimate boundedness of the closed-loop system is established in the following theorem. Unity adaptation gains are used in the adaptation laws $\eqref{eq:controller_gain_update_law_proposed}$, and retained throughout the subsequent analysis.
\end{remark}

\begin{theorem}
	Consider the class of uncertain nonlinear systems described by  $\eqref{eq:system_dynamics}$, under Assumptions \ref{assump:controllable}, \ref{assump:Lambda}, \ref{assump:matching_condition}, and \ref{assump:fe_condition}, driven by the control and adaptation laws given in $\eqref{eq:control_input_modified}$ and  $\eqref{eq:controller_gain_update_law_proposed}$, respectively. The closed-loop system shown in Figure $\ref{fig:block_diagram}$ exhibits the following properties.
	\begin{enumerate}
		\item The closed-loop signals are uniformly ultimately bounded $\forall \, t \geq t_{0}$.
		\item The combined error vector, denoted by $\boldsymbol{\chi} = \begin{bmatrix} \mathbf{e}^{\top} & vec^{\top}\left(\tilde{\mathbf{K}}_{x}\right) & vec^{\top}\left(\tilde{\mathbf{K}}_{r}\right) & vec^{\top}\left(\tilde{ \boldsymbol{\Theta}}\right) & \tilde{\mathbf{w}}^{\top} \end{bmatrix}^{\top}$, satisfies the following inequality.
		\begin{equation*}
			\left\| \boldsymbol{\chi} \left(t\right) \right\|^{2} \leq \kappa_{1} \left\| \boldsymbol{\chi} \left(t_{q}\right) \right\|^{2} e^{-k\left(t-t_{q}\right)} + \rho_{ub}, \quad \forall \, t>t_{q},
		\end{equation*}
		where $\kappa_{1}>0$, $\rho_{ub}>0$ denotes the ultimate bound, $k>0$ is the exponential decay rate, and $t_{q}$ is the time instant at which the matrix $\boldsymbol{\Phi}_{fb}$ becomes orthogonal and the condition given in $\eqref{eq:condition_Lambda}$ is also satisfied.
		\item The exponential decay rate $k$ is independent of the excitation level of the regressor signal $\boldsymbol{\varphi}\left(\mathbf{x}, \mathbf{u}\right)$, and the ultimate bound $\rho_{ub}$ depends on nonparametric uncertainty bound $\overline{\xi}$, the parameter estimation error $\tilde{\mathbf{w}}$, and the ideal controller gains and system parameters.
	\end{enumerate}
\end{theorem}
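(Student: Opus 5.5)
The proof splits the time axis at $t_q$. On $[t_0,t_q]$ we have $\gamma_w=0$, hence $\gamma_i=0$. The gain laws \eqref{eq:controller_gain_update_law_proposed} then coincide with the $\sigma$-modification laws \eqref{eq:controller_gain_update_law_gradient}, and Proposition \ref{proposition:direct_adaptive} gives uniform ultimate boundedness of $\mathbf{e},\tilde{\mathbf{K}}_x,\tilde{\mathbf{K}}_r,\tilde{\boldsymbol{\Theta}}$. By Theorem \ref{theorem:parameter_estimation_bounds}, $\tilde{\mathbf{w}}$ stays constant on this interval. This gives item 1 on $[t_0,t_q]$ and a finite value $\boldsymbol{\chi}(t_q)$. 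The same argument covers any later interval on which \eqref{eq:condition_Lambda} fails. Since $t_q$ is defined as the instant after which both the orthogonality and \eqref{eq:condition_Lambda} hold, I take $\gamma_i=1$ for all $t>t_q$.

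For $t>t_q$, I use the composite Lyapunov function
\[
V=\mathbf{e}^{\top}\mathbf{P}\mathbf{e}+\mathrm{tr}\big(\tilde{\mathbf{K}}_x|\boldsymbol{\Lambda}|\tilde{\mathbf{K}}_x^{\top}\big)+\mathrm{tr}\big(\tilde{\mathbf{K}}_r|\boldsymbol{\Lambda}|\tilde{\mathbf{K}}_r^{\top}\big)+\mathrm{tr}\big(\tilde{\boldsymbol{\Theta}}|\boldsymbol{\Lambda}|\tilde{\boldsymbol{\Theta}}^{\top}\big)+\mu V_w ,
\]
with $|\boldsymbol{\Lambda}|=\boldsymbol{\Lambda}_s\boldsymbol{\Lambda}$ and a weight $\mu>0$ chosen later. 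Differentiate along \eqref{eq:tracking_error} and \eqref{eq:controller_gain_estimation_error_dynamics} with $\gamma_i=1$. The usual MRAC cross terms ($\mathbf{x}\mathbf{e}^{\top}\mathbf{P}\mathbf{B}\boldsymbol{\Lambda}_s$ and the others) cancel exactly against $2\mathbf{e}^{\top}\mathbf{P}\mathbf{B}\boldsymbol{\Lambda}(\cdot)$, using the trace identities and $\boldsymbol{\Lambda}_s^2=\mathbf{I}$.

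The remaining terms are as follows.
\begin{itemize}
\item From the tracking error: $-\mathbf{e}^{\top}\mathbf{Q}\mathbf{e}+2\mathbf{e}^{\top}\mathbf{P}\boldsymbol{\xi}$.
\item The damping terms $-2\,\mathrm{tr}\big(\tilde{\mathbf{K}}_x\hat{\boldsymbol{\Lambda}}\boldsymbol{\Lambda}_s|\boldsymbol{\Lambda}|\tilde{\mathbf{K}}_x^{\top}\big)$, and likewise for $\tilde{\mathbf{K}}_r$ and $\tilde{\boldsymbol{\Theta}}$. All matrices involved are diagonal, and \eqref{eq:condition_Lambda} gives $\hat{\boldsymbol{\Lambda}}\boldsymbol{\Lambda}_s>\underline{\lambda}$. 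Identity 4 then bounds each of these by $-2\underline{\lambda}^2\|\tilde{\mathbf{K}}_x\|_F^2$ (and the analogues).
\item Cross terms such as $-2\,\mathrm{tr}\big((\mathbf{B}^{\dagger}\tilde{\mathbf{A}}+\tilde{\boldsymbol{\Lambda}}\mathbf{K}_x^{\top})^{\top}\boldsymbol{\Lambda}_s|\boldsymbol{\Lambda}|\tilde{\mathbf{K}}_x^{\top}\big)$. Identity 6 and Proposition \ref{prop:parameter_estimation_error_bound} bound these by $c\,(1+\|\mathbf{K}_x\|_F)\|\boldsymbol{\Lambda}\|_F\|\tilde{\mathbf{w}}\|\,\|\tilde{\mathbf{K}}_x\|_F$.
\item From $\mu V_w$, via \eqref{eq:Vw_dot}: $-\tfrac{\mu}{2}\|\tilde{\mathbf{w}}\|^2+\tfrac{\mu}{2}\overline{\xi}_{fb}^2$.
\end{itemize}

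Next, Young's inequality splits each cross term into $\underline{\lambda}^2\|\tilde{\mathbf{K}}_x\|_F^2$ plus a multiple of $\|\tilde{\mathbf{w}}\|^2$. The $\tilde{\mathbf{w}}$ contributions are absorbed by choosing $\mu$ large enough, namely larger than a constant depending on $\|\mathbf{K}_x\|,\|\mathbf{K}_r\|,\|\boldsymbol{\Theta}\|,\|\boldsymbol{\Lambda}\|,\underline{\lambda}$. Treating $2\mathbf{e}^{\top}\mathbf{P}\boldsymbol{\xi}$ the same way leaves
\[
\dot V\le -c_0\|\boldsymbol{\chi}\|^2+\rho,\qquad \rho=\|\mathbf{P}\|^2\overline{\xi}^2/\epsilon+\tfrac{\mu}{2}\overline{\xi}_{fb}^2 .
\]
Here $c_0$ is the minimum of $\lambda_{\min}(\mathbf{Q})-\epsilon$, $\underline{\lambda}^2$, and the residual coefficient of $\|\tilde{\mathbf{w}}\|^2$, whose base rate is governed by $k_w=\lambda_{\min}(\boldsymbol{\Gamma}_w)$. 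Because $V$ is sandwiched between $\underline{c}\|\boldsymbol{\chi}\|^2$ and $\overline{c}\|\boldsymbol{\chi}\|^2$, with constants depending on $\lambda(\mathbf{P})$, $\lambda(\boldsymbol{\Lambda})$ and $\lambda(\boldsymbol{\Gamma}_w)$, we get $\dot V\le -kV+\rho$ with $k=c_0/\overline{c}$. The comparison lemma, as used in the proof of Theorem \ref{theorem:parameter_estimation_bounds}, then yields the bound in item 2 with $\kappa_1=\overline{c}/\underline{c}$ and $\rho_{ub}=\rho/(k\underline{c})$. Item 1 for $t>t_q$ follows from this bound, together with boundedness of $\mathbf{x}_r$ and $\mathbf{r}$.

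For item 3, note that the orthogonality $\boldsymbol{\Psi}_{fb}\boldsymbol{\Psi}_{fb}^{\top}=\mathbf{I}$ means no eigenvalue of a memory matrix, and hence no excitation level $\gamma$, ever enters $c_0$ or $\overline{c}$. The quantity $\rho_{ub}$ depends on $\overline{\xi}$ through $\overline{\xi}_{fb}$ (Theorem \ref{theorem:algorithm1_lip}), and on the ideal gains and parameters through $\mu$.

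The main obstacle is the coupling terms between $\tilde{\mathbf{w}}$ and the gain errors, together with the fact that $\hat{\boldsymbol{\Lambda}}$, not $\boldsymbol{\Lambda}$, multiplies the damping terms. This forces reliance on \eqref{eq:condition_Lambda} holding for all $t>t_q$ so that $\gamma_i$ does not switch back. If it does switch, I would patch the argument by treating $V$ piecewise and invoking Proposition \ref{proposition:direct_adaptive} on the intervals where $\gamma_i=0$.
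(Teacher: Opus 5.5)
Your argument is correct. It shares the paper's skeleton: the split at $t_q$ with Proposition~\ref{proposition:direct_adaptive} covering $\gamma_i=0$, the same composite Lyapunov function up to weighting, exact cancellation of the MRAC cross terms, and the comparison lemma at the end. Where you genuinely depart is the crux you identified, the coupling between $\tilde{\mathbf{w}}$ and the gain errors.

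\textbf{What the paper does.} It keeps unit weight on $\tilde{\mathbf{w}}^{\top}\tilde{\mathbf{w}}$ and applies Young's inequality with $\epsilon=\lambda_{\min}(\hat{\boldsymbol{\Lambda}}\boldsymbol{\Lambda})$. The resulting term $\rho_{3}\propto\|\tilde{\mathbf{w}}\|^{2}/\lambda_{\min}(\hat{\boldsymbol{\Lambda}}\boldsymbol{\Lambda})$ is simply left inside the forcing term, with Theorem~\ref{theorem:parameter_estimation_bounds} invoked for the size of $\tilde{\mathbf{w}}$. Its rate $k$ contains the time-varying $\lambda_{\min}(\hat{\boldsymbol{\Lambda}}\boldsymbol{\Lambda})$, which is replaced by a constant $k_{\Lambda}$ only in the subsequent remark.

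\textbf{What you do.} You weight $V_w$ by an analysis-only constant $\mu$. It may legitimately depend on the unknown $\mathbf{K}_x,\mathbf{K}_r,\boldsymbol{\Theta},\boldsymbol{\Lambda}$, since it never enters the controller. You then absorb the $\|\tilde{\mathbf{w}}\|^{2}$ contributions into $-\tfrac{\mu}{2}\|\tilde{\mathbf{w}}\|^{2}$. You also use the constant bound $\hat{\boldsymbol{\Lambda}}\boldsymbol{\Lambda}\geq\underline{\lambda}^{2}$, obtained by reading \eqref{eq:condition_Lambda} as $\hat{\boldsymbol{\Lambda}}\boldsymbol{\Lambda}_s>\underline{\lambda}$. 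That reading is needed anyway when $\boldsymbol{\Lambda}_s\neq\mathbf{I}$, since the literal condition is then unsatisfiable.

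\textbf{What your route buys.} You get a genuine $\dot V\le -kV+\rho$ with constant $k$ and constant $\rho$, so the comparison lemma applies verbatim. The ultimate bound depends only on $\overline{\xi}$, $\overline{\xi}_{fb}$ and system constants, and it vanishes when $\boldsymbol{\xi}=0$. This strengthens item 3. By contrast, the paper's $\rho$ contains the state $\|\tilde{\mathbf{w}}(t)\|^{2}$, so treating it as a constant implicitly requires a cascade argument.

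\textbf{What it costs.} Your $\overline{c}$ grows with $\mu$, so your $k$ degrades with $\|\mathbf{K}_x\|_F,\|\mathbf{K}_r\|_F,\|\boldsymbol{\Theta}\|_F,\|\boldsymbol{\Lambda}\|_F$ and $1/\underline{\lambda}$. The paper's rate involves only $\mathbf{Q}$, $\mathbf{P}$, $\boldsymbol{\Lambda}_s\boldsymbol{\Lambda}$ and $\hat{\boldsymbol{\Lambda}}\boldsymbol{\Lambda}$. Both rates are excitation-independent, which is all item 3 asks.

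\textbf{The switching patch.} Your patch closes easily. When $\gamma_i=0$ but $\gamma_w=1$, the estimate from Proposition~\ref{proposition:direct_adaptive} plus $\mu\dot V_w$ makes the same weighted $V$ a common Lyapunov function, with a larger constant residual. The paper leaves this point implicit as well.
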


\begin{figure}
	\centering
	\resizebox{0.7\linewidth}{!}{
		\begin{tikzpicture}[node distance=2cm]
			\node (reference) [process, text width=2.5cm] {Reference Trajectory $\left(\mathbf{r}\right)$ };
			\node (referencedynamics) [startstop, right of=reference, xshift=2cm, text width=2.5cm] {Reference Dynamics $\eqref{eq:reference_system}$};
			\node (add) [summer, right of=referencedynamics, xshift=1cm]{};
			\node (route1) [route, above of=add, yshift=-1cm]{};
			\node (route3) [route, right of=add, xshift=-0.4cm]{};
			\node (control) [io, above of=referencedynamics, xshift=-2.5cm, text width=1.5cm] {Control Input $\eqref{eq:control_input_modified}$};
			\node (route2) [route, right of=control, xshift=-0.5cm]{};
			\node (systemdynamics) [startstop, right of=control, xshift=1.5cm, text width=2.5cm] {System Dynamics $\eqref{eq:system_dynamics}$};
			\node (algorithm) [process, right of=systemdynamics, xshift=2cm, yshift=1cm, text width=3cm] {Algorithm $1$ Update Law $\eqref{eq:adaptation_law_system_parameter}$};
			\node (adaptationlaws) [process, right of=referencedynamics, xshift=4cm, yshift=-1cm, text width=2.5cm] {Adaptation Laws $\eqref{eq:controller_gain_update_law_proposed}$};
			
			\draw [arrow] (reference) |- (control);
			\draw [arrow] (control) -- node[anchor=south] {$\mathbf{u}$} (route2);
			\draw [arrow] (route2) -- (systemdynamics);
			\draw [arrow] (route2) |- (algorithm);
			\draw [arrow] (systemdynamics) -| (algorithm);
			\draw [arrow] (reference) -- (referencedynamics);
			\draw [arrow] (referencedynamics) -- node[anchor=south] {$\mathbf{x}_{r}$} node[anchor=north] {$-$} (add);
			\draw [arrow] (systemdynamics) -| node[anchor=south] {$\mathbf{x}$} (route1);
			\draw [arrow] (route1) -- node[anchor=west] {$+$} (add);
			\draw [arrow] (route1) -| (control);
			\draw [arrow] (reference) |- (adaptationlaws);
			\draw [arrow] (add) -- node [anchor=south] {$\mathbf{e}$} (route3);
			\draw [arrow] (route3) -| (adaptationlaws);
			\draw [arrow] (route1) -| (adaptationlaws);
			\draw [dotted,->] (algorithm) -- (adaptationlaws);
			\draw [dotted,->] (adaptationlaws) -- (control);
		\end{tikzpicture}
	}
	\caption{Block diagram of the closed-loop system.}
	\label{fig:block_diagram}
\end{figure}

\begin{proof}
The stability properties of the closed-loop system are established in two segments. The first segment deals with the initial time window, during which the orthogonal matrix $\boldsymbol{\Phi}_{fb}$ is being constructed. The second segment begins once an orthogonal matrix $\boldsymbol{\Phi}_{fb}$ is created successfully.  Consider the following Lyapunov function, formed as the sum of the Lyapunov functions defined in $\eqref{eq:V_gradient}$ and  $\eqref{eq:V_estimation}$.
\begin{equation}
	V = \mathbf{e}^{\top} \mathbf{P} \mathbf{e} + \tr(\tilde{\mathbf{K}}_{x}^{\top} \tilde{\mathbf{K}}_{x}  \boldsymbol{\Lambda}_{s}  \boldsymbol{\Lambda}) + \tr( \tilde{\mathbf{K}}_{r}^{\top} \tilde{\mathbf{K}}_{r}  \boldsymbol{\Lambda}_{s}  \boldsymbol{\Lambda}) + \tr(\tilde{ \boldsymbol{\Theta}}^{\top} \tilde{ \boldsymbol{\Theta}}  \boldsymbol{\Lambda}_{s}  \boldsymbol{\Lambda}) + \tilde{\mathbf{w}}^{\top} \tilde{\mathbf{w}},
	\label{eq:V_combined}
\end{equation}
where, for simplicity of presentation, $\boldsymbol{\Gamma}_{w}$ is set as the identity matrix. The time derivative of the Lyapunov function $V$ along the error trajectory defined in $\eqref{eq:tracking_error}$ is
\begin{equation*}
	\begin{aligned}
		\dot{V} =& \mathbf{e}^{\top} \left( \mathbf{A}_{r}^{\top} \mathbf{P} + \mathbf{P} \mathbf{A}_{r} \right) \mathbf{e} + 2 \mathbf{e}^{\top} \mathbf{P} \mathbf{B}  \boldsymbol{\Lambda} \tilde{\mathbf{K}}_{x}^{\top} \mathbf{x} + 2 \mathbf{e}^{\top} \mathbf{P} \mathbf{B}  \boldsymbol{\Lambda} \tilde{\mathbf{K}}_{r}^{\top} \mathbf{r} - 2 \mathbf{e}^{\top} \mathbf{P} \mathbf{B}  \boldsymbol{\Lambda} \tilde{ \boldsymbol{\Theta}}^{\top} \boldsymbol{\phi}(\mathbf{x}) \\
		& + 2 \mathbf{e}^{\top} \mathbf{P} \boldsymbol{\xi} + 2 \tr(\tilde{\mathbf{K}}_{x}^{\top} \dot{\tilde{\mathbf{K}}}_{x}  \boldsymbol{\Lambda}_{s}  \boldsymbol{\Lambda}) + 2 \tr(\tilde{\mathbf{K}}_{r}^{\top} \dot{\tilde{\mathbf{K}}}_{r}  \boldsymbol{\Lambda}_{s}  \boldsymbol{\Lambda}) + 2 \tr(\tilde{ \boldsymbol{\Theta}}^{\top} \dot{\tilde{ \boldsymbol{\Theta}}}  \boldsymbol{\Lambda}_{s}  \boldsymbol{\Lambda}) + 2 \tilde{\mathbf{w}}^{\top} \dot{\tilde{\mathbf{w}}}.
	\end{aligned}
\end{equation*}

In the first segment, that is, for $t_{0} \leq t \leq t_{q}$, we have $k\leq q$, which implies $\gamma_{w}=0$, which further implies $\gamma_{i}=0$ from $\eqref{eq:gamma_i}$. Therefore, retaining the stability results enumerated in \textit{Proposition} $\ref{proposition:direct_adaptive}$. \par

In the second segment, that is, for $t>t_{q}$, we have $k>q$, which implies $\gamma_{w}=1$. The parameter estimate $\hat{\mathbf{w}}$ is obtained using the adaptation law $\eqref{eq:adaptation_law_system_parameter}$, and $\hat{\boldsymbol{\Lambda}}$ is computed subsequently. If the condition given in $\eqref{eq:condition_Lambda}$ is not satisfied, then $\gamma_{i}=0$. Thus, the stability results from \textit{Proposition} $\ref{proposition:direct_adaptive}$ continue to hold.

When the condition given in $\eqref{eq:condition_Lambda}$ is satisfied, $\gamma_{i}=1$, and, we obtain the following expression for $\dot{V}$.
\begin{equation}
	\begin{aligned}
		\dot{V} &= -\mathbf{e}^{\top} \mathbf{Q} \mathbf{e} + 2\mathbf{e}^{\top} \mathbf{P} \boldsymbol{\xi} - 2 \tr(\tilde{\mathbf{K}}_{x}^{\top} \tilde{\mathbf{K}}_{x} \hat{ \boldsymbol{\Lambda}}  \boldsymbol{\Lambda}) - 2 \tr(\tilde{\mathbf{K}}_{r}^{\top} \tilde{\mathbf{K}}_{r} \hat{ \boldsymbol{\Lambda}}  \boldsymbol{\Lambda}) - 2 \tr(\tilde{ \boldsymbol{\Theta}}^{\top} \tilde{ \boldsymbol{\Theta}} \hat{ \boldsymbol{\Lambda}}  \boldsymbol{\Lambda}) \\
		& \quad - 2 \tr(\tilde{\mathbf{K}}_{x}^{\top} \left( \mathbf{B}^{\dagger} \tilde{\mathbf{A}} + \tilde{ \boldsymbol{\Lambda}} \mathbf{K}_{x}^{\top} \right)^{\top}  \boldsymbol{\Lambda}) - 2 \tr(\tilde{\mathbf{K}}_{r}^{\top} \mathbf{K}_{r} \tilde{ \boldsymbol{\Lambda}}  \boldsymbol{\Lambda}) - 2 \tr(\tilde{ \boldsymbol{\Theta}}^{\top}  \boldsymbol{\Theta} \tilde{ \boldsymbol{\Lambda}}  \boldsymbol{\Lambda}) + 2 \tilde{\mathbf{w}}^{\top} \dot{\tilde{\mathbf{w}}}.
	\end{aligned}
	\label{eq:V_dot_step1}
\end{equation}

Substituting $\dot{V}_{w}$ from $\eqref{eq:Vw_dot}$ in $\dot{V}$, and noting that the conditions $\hat{ \boldsymbol{\Lambda}}  \boldsymbol{\Lambda}_{s}>0$ and $ \boldsymbol{\Lambda}  \boldsymbol{\Lambda}_{s}>0$ imply $\hat{ \boldsymbol{\Lambda}}  \boldsymbol{\Lambda}>0$. Thus, the following inequality holds.
\begin{equation*}
	\begin{aligned}
		\dot{V} &\leq - \lambda_{\text{min}}\left(\mathbf{Q}\right) \left\| \mathbf{e} \right\|^{2} - 2 \lambda_{\text{min}} \left(\hat{ \boldsymbol{\Lambda}}  \boldsymbol{\Lambda}\right) \left\{ \left\| \tilde{\mathbf{K}}_{x} \right\|_{F}^{2} + \left\| \tilde{\mathbf{K}}_{r} \right\|_{F}^{2} + \left\| \tilde{ \boldsymbol{\Theta}} \right\|_{F}^{2} \right\} + 2 \lambda_{\text{max}}\left(\mathbf{P}\right) \left\| \mathbf{e} \right\| \overline{\xi} \\
		& + 2\left\| \tilde{\mathbf{K}}_{x} \right\|_{F} \left\| \left( \mathbf{B}^{\dagger} \tilde{\mathbf{A}} + \tilde{ \boldsymbol{\Lambda}} \mathbf{K}_{x}^{\top} \right)^{\top}  \boldsymbol{\Lambda} \right\|_{F} + 2 \left\| \tilde{\mathbf{K}}_{r} \right\|_{F} \left\| \mathbf{K}_{r} \tilde{ \boldsymbol{\Lambda}}  \boldsymbol{\Lambda} \right\|_{F} + 2 \left\| \tilde{ \boldsymbol{\Theta}} \right\|_{F} \left\|  \boldsymbol{\Theta} \tilde{ \boldsymbol{\Lambda}}  \boldsymbol{\Lambda} \right\|_{F} - \left\| \tilde{\mathbf{w}} \right\|^{2} + \overline{\xi}_{fb}^{2},
	\end{aligned}
\end{equation*}
Further using $2ab \leq \epsilon a^{2}+b^{2}/\epsilon$ for any $a,b \in \mathbb{R}$ and $\epsilon = \lambda_{\text{min}} \left(\hat{ \boldsymbol{\Lambda}}  \boldsymbol{\Lambda}\right)$.
\begin{equation}
	\begin{aligned}
		\dot{V} & \leq - \dfrac{1}{2} \lambda_{\text{min}}\left(\mathbf{Q}\right) \left\| \mathbf{e} \right\|^{2} - \lambda_{\text{min}} \left(\hat{ \boldsymbol{\Lambda}}  \boldsymbol{\Lambda}\right) \left\{ \left\| \tilde{\mathbf{K}}_{x} \right\|_{F}^{2} + \left\| \tilde{\mathbf{K}}_{r} \right\|_{F}^{2} + \left\| \tilde{ \boldsymbol{\Theta}} \right\|_{F}^{2} \right\} - \left\| \tilde{\mathbf{w}} \right\|^{2} + \underbrace{ \overline{\xi}_{fb}^{2}}_{\rho_{w}} \\
		& \quad + \underbrace{2 \overline{\xi}^{2} \dfrac{\lambda_{\text{max}}^{2}\left(\mathbf{P}\right)}{\lambda_{\text{min}}\left(\mathbf{Q}\right)}}_{\rho_{1}} + \underbrace{\dfrac{1}{\lambda_{\text{min}} \left(\hat{ \boldsymbol{\Lambda}}  \boldsymbol{\Lambda}\right)} \left\{ \left\| \left( \mathbf{B}^{\dagger} \tilde{\mathbf{A}} + \tilde{ \boldsymbol{\Lambda}} \mathbf{K}_{x}^{\top} \right)^{\top}  \boldsymbol{\Lambda} \right\|_{F}^{2} + \left\| \mathbf{K}_{r} \tilde{ \boldsymbol{\Lambda}}  \boldsymbol{\Lambda} \right\|_{F}^{2} + \left\|  \boldsymbol{\Theta} \tilde{ \boldsymbol{\Lambda}}  \boldsymbol{\Lambda} \right\|_{F}^{2} \right\}}_{\rho_{3}}.
	\end{aligned}
	\label{eq:V_dot_step2}
\end{equation}
The \textit{class}-$\mathcal{K}$ functions defined in $\eqref{eq:gamma_functions_gradient}$ are modified as follows.
\begin{equation}
	\begin{aligned}
		\alpha_{2} \left(\left\| \boldsymbol{\chi}\right\|\right) &= \text{min} \left( \lambda_{\text{min}} \left( \mathbf{P} \right), \lambda_{\text{min}} \left( \boldsymbol{\Lambda}_{s} \boldsymbol{\Lambda} \right),1\right) \left\|\boldsymbol{\chi}\right\|^{2}, \\
		\beta_{2} \left(\left\| \boldsymbol{\chi}\right\|\right) &= \text{max} \left( \lambda_{\text{max}} \left( \mathbf{P} \right), \lambda_{\text{max}} \left( \boldsymbol{\Lambda}_{s} \boldsymbol{\Lambda} \right),1\right) \left\|\boldsymbol{\chi}\right\|^{2},\\
		\gamma_{2} \left(\left\| \boldsymbol{\chi}\right\|\right) &= \text{min} \left( 0.5 \lambda_{\text{min}} \left( \mathbf{Q} \right), \lambda_{\text{min}} \left( \hat{\boldsymbol{\Lambda}} \boldsymbol{\Lambda} \right),1 \right) \left\|\boldsymbol{\chi}\right\|^{2}.
	\end{aligned} 
	\label{eq:gamma_functions_proposed}
\end{equation}
The Lyapunov function $V$ in $\eqref{eq:V_combined}$ satisfies $\alpha_{2} \left( \left\| \boldsymbol{\chi} \right\|\right) \leq V \leq \beta_{2} \left( \left\| \boldsymbol{\chi} \right\|\right)$, and its derivative along the error trajectories $\eqref{eq:tracking_error}$ and $\eqref{eq:controller_gain_estimation_error_dynamics}$ satisfies the following.
\begin{equation}
	\dot{V} \leq - \gamma_{2} \left(\left\| \boldsymbol{\chi}\right\|\right) + \rho_{1} + \rho_{3} + \rho_{w},
	\label{eq:V_dot_proposed}
\end{equation}
where $\rho_{1}$ and $\rho_{w}$ depend on nonparametric uncertainty bound, system constant parameters, and designed gains, while $\rho_{3}$ depends on the ideal controller gains $\mathbf{K}_{x}$ and $\mathbf{K}_{r}$, the ideal system parameter $\boldsymbol{\Theta}$, and the estimation errors $\tilde{\mathbf{A}}$ and $\tilde{\boldsymbol{\Lambda}}$.
These estimation errors evolve under the adaptation law $\eqref{eq:adaptation_law_system_parameter}$. Therefore, using Theorem $\ref{theorem:parameter_estimation_bounds}$ and Proposition $\ref{prop:parameter_estimation_error_bound}$, $\rho_{3}$ can be expressed in terms of the ultimately bounded $\tilde{\mathbf{w}}$, constant system parameters, and controller gains.
\begin{equation*}
	\rho_{3} \leq \dfrac{1}{\lambda_{\text{min}} \left(\hat{\boldsymbol{\Lambda}} \boldsymbol{\Lambda}\right)} \left\{ \left( 1 + \left\| \mathbf{K}_{x} \right\|_{F} \right)^{2} + \left\| \mathbf{K}_{r} \right\|_{F}^{2} + \left\| \boldsymbol{\Theta} \right\|_{F}^{2} \right\} \left\| \boldsymbol{\Lambda} \right\|_{F}^{2} \left\| \tilde{\mathbf{w}} \right\|^{2}.
\end{equation*}
The derivative $\dot{V}$ from $\eqref{eq:V_dot_proposed}$ can now be written compactly as follows.
\begin{equation*}
	\dot{V} \leq -k V + \rho, \quad \rho=\rho_{1} + \rho_{3} + \rho_{w}, \quad k = \dfrac{\text{min} \left( 0.5 \lambda_{\text{min}} \left( \mathbf{Q} \right), \lambda_{\text{min}} \left( \hat{\boldsymbol{\Lambda}} \boldsymbol{\Lambda} \right), 1 \right)}{\text{max} \left( \lambda_{\text{max}} \left( \mathbf{P} \right), \lambda_{\text{max}} \left( \boldsymbol{\Lambda}_{s} \boldsymbol{\Lambda} \right),1\right)},
\end{equation*}
whose solution, using \textit{Lemma A.3.2, \cite{farrell2006adaptive}}, is
\begin{equation*}
	V\left(t\right) \leq \left(V\left(t_{q}\right) - \dfrac{\rho}{k}\right)e^{-k \left(t-t_{q}\right)} + \dfrac{\rho}{k}, \quad \forall \, t>t_{q}.
\end{equation*}
Expressing the solution in terms of the norm of the combined error vector.
\begin{equation*}
	\left\| \boldsymbol{\chi}\left(t\right) \right\|^{2} \leq \dfrac{1}{\text{min} \left( \lambda_{\text{min}} \left( \mathbf{P} \right), \lambda_{\text{min}} \left( \boldsymbol{\Lambda}_{s} \boldsymbol{\Lambda} \right),1\right)} \left(V\left(t_{q}\right) - \dfrac{\rho}{k}\right)e^{-k \left(t-t_{q}\right)} + \dfrac{1}{\text{min} \left( \lambda_{\text{min}} \left( \mathbf{P} \right), \lambda_{\text{min}} \left( \boldsymbol{\Lambda}_{s} \boldsymbol{\Lambda} \right),1\right)} \dfrac{\rho}{k}, \quad \forall \, t>t_{q},
\end{equation*}
and substituting $V\left(t_{q}\right) \leq \beta_{2}\left(\left\| \boldsymbol{\chi}\left(t_{q}\right) \right\|\right)$, we obtain the following.
\begin{equation*}
	\left\| \boldsymbol{\chi}\left(t\right) \right\|^{2} \leq \kappa_{1} \left\| \boldsymbol{\chi}\left(t_{q}\right) \right\|^{2} e^{-k \left(t-t_{q}\right)} + \kappa_{2} \left(1-e^{-k\left(t-t_{q}\right)}\right) \rho,
\end{equation*}
where
\begin{equation*}
	\kappa_{1} = \dfrac{\text{max} \left( \lambda_{\text{max}} \left( \mathbf{P} \right), \lambda_{\text{max}} \left( \boldsymbol{\Lambda}_{s} \boldsymbol{\Lambda} \right),1 \right)}{\text{min} \left( \lambda_{\text{min}} \left( \mathbf{P} \right), \lambda_{\text{min}} \left( \boldsymbol{\Lambda}_{s} \boldsymbol{\Lambda} \right), 1 \right)}, \quad \kappa_{2} = \dfrac{1}{\text{min} \left( 0.5 \lambda_{\text{min}} \left( \mathbf{Q} \right), \lambda_{\text{min}} \left( \hat{\boldsymbol{\Lambda}} \boldsymbol{\Lambda} \right), 1 \right)} \kappa_{1}.
\end{equation*}
The exponential decay rate $k$ does not explicitly depend on the excitation level, though the time $t_q$ at which this rate becomes active does. This completes the proof.
\end{proof}

\begin{remark}
The exponential decay rate is given by
\begin{equation*}
	k = \dfrac{\text{min} \left( 0.5 \lambda_{\text{min}} \left( \mathbf{Q} \right),k_{\Lambda}, 1 \right)}{\text{max} \left( \lambda_{\text{max}} \left( \mathbf{P} \right), \lambda_{\text{max}} \left( \boldsymbol{\Lambda}_{s} \boldsymbol{\Lambda} \right),1\right)},
\end{equation*}
where $k_{\Lambda}$ is a positive lower bound on $\lambda_{\text{min}} \left( \hat{\boldsymbol{\Lambda}} \boldsymbol{\Lambda} \right)$ established via Assumption $\ref{assump:Lambda}$ and the conditioning in $\eqref{eq:condition_Lambda}$, ensuring $k$ is a positive constant independent of the time-varying estimate $\hat{ \boldsymbol{\Lambda}}$. The ultimate bound is
\begin{equation*}
	\rho_{ub} =  \kappa_{2} \left(1-e^{-k\left(t-t_{q}\right)}\right) \left( \rho_{1} + \rho_{3} + \rho_{w} \right)
\end{equation*}
where $\rho_{1}$ depends on the nonparametric uncertainty bound $\overline{\xi}$ and the Lyapunov matrices, $\rho_{w}$ depends on the transformed uncertainty bound $\overline{\xi}_{fb}$ from Theorem $\ref{theorem:algorithm1_lip}$, and $\rho_{3}$ depends on the ideal controller gains, system parameters, and the ultimately bounded estimation error $\left\| \tilde{\mathbf{w}} \right\|$. In the absence of nonparametric uncertainty, $\overline{\xi} = 0$ implies $\rho_{1} = \rho_{w} = 0$, and $\rho_{3} \to 0$ as $\left\| \tilde{\mathbf{w}} \right\| \to 0$, so that $\rho_{ub}\to  0$.
\end{remark}

\begin{remark}
The combined adaptation law established in Theorem $\ref{theorem:parameter_estimation_bounds}$ operates on a single orthogonal dataset constructed by Algorithm $\ref{alg:identification}$. While this guarantees exponential decay to the ultimate bound $\rho_{ub}$, the bound itself depends on the nonparametric uncertainty through $\rho_{w} = \overline{\xi}_{fb}^{2}$. Updating the parameter estimates using a single dataset is therefore not preferable when nonparametric uncertainties are significant, since the uncertainty realization in that dataset directly determines the ultimate bound. To address this, batch and recursive extensions of Algorithm $\ref{alg:identification}$ that systematically incorporate multiple datasets into the parameter update law can be utilized. These extensions should retain the linear-in-parameter structure established in Theorem $\ref{theorem:algorithm1_lip}$ and allow the effect of nonparametric uncertainties to be averaged across datasets, reducing their impact on the ultimate bound.
\end{remark}

\section{Numerical Simulations}
\label{section:Simulation_Results}

The proposed methodology is validated on three examples. Example 1 addresses parameter identification from a linear regression model and comprises two studies differing in the structure of the regressor: in the first, the regressor is a function of time alone, while in the second, it incorporates state variables governed by a known differential equation. These studies compare the proposed method against concurrent learning (CL), MRE, and DREM in terms of parameter convergence under the finite excitation condition. Example 2 applies the proposed combined adaptive controller to the longitudinal dynamics of an aircraft \cite{lavretsky2009combined} 

\subsection{Parameter Identification}

\textbf{Study 1.} Consider the following linear regression model \cite{ortega2020modified}.
\begin{equation*}
	y=\boldsymbol{\phi}^{\top}\mathbf{w}, \quad \boldsymbol{\phi} = \begin{bmatrix} 1 & \dfrac{\sin(t)+\cos(t)}{\sqrt{1+t}}-\dfrac{\sin(t)}{2\left(1+t\right)^{3/2}} \end{bmatrix}^{\top}, \quad \mathbf{w}=\begin{bmatrix} 1 & 2 \end{bmatrix}^{\top},
\end{equation*}
where $y$ is measurable and $\mathbf{w}$ is an unknown constant parameter vector. The regressor vector $\boldsymbol{\phi}$ satisfies the condition required by the dynamic regressor extension and mixing method for parameter convergence [see Proposition 2 and Remark 7 \cite{aranovskiy2016parameters}]. The parameter estimate is updated using $\eqref{eq:adaptation_law_system_parameter}$, with the memory term constructed separately using concurrent learning \cite{chowdhary2013concurrent}, MRE \cite{roy2017uges}, DREM \cite{aranovskiy2016parameters}, and the proposed method, each with learning rate $\gamma_l > 0$. Initial estimates are set to $\hat{\mathbf{w}}(0) = \begin{bmatrix} 0 & 0 \end{bmatrix}^{\top}$. Two values of the learning rate are studied. At $\gamma_l = 1$, Figure \ref{fig:pee_norm_gl_1} shows that DREM fails to drive the estimation error norm to zero within the simulation horizon, a consequence of the small-determinant scaling issue noted in Section \ref{section:Introduction}. The learning rate is therefore increased to $\gamma_l = 10$ to provide a fair comparison at higher excitation. At this gain, Figure \ref{fig:pee_norm_gl_10} shows that the proposed method achieves the fastest exponential convergence among all methods. The time evolution of individual parameter estimates in both figures confirms that the proposed method converges to the true values $\mathbf{w} = [1, 2]^\top$ more rapidly and without overshoot.

\begin{figure}[h]
	\includegraphics[width=0.48\textwidth]{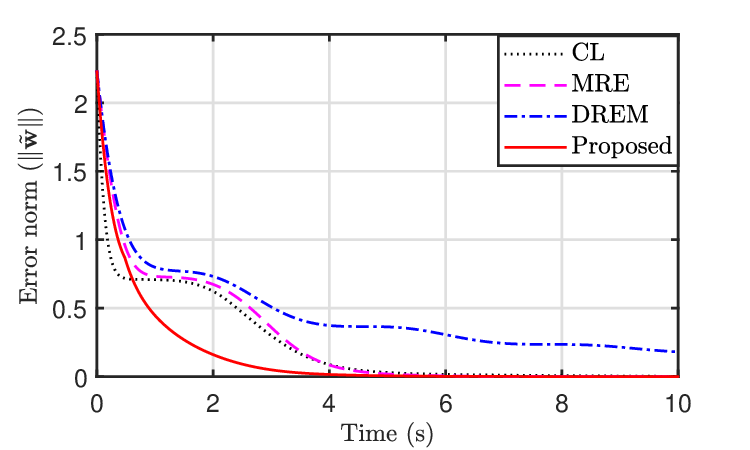}
	\includegraphics[width=0.48\textwidth]{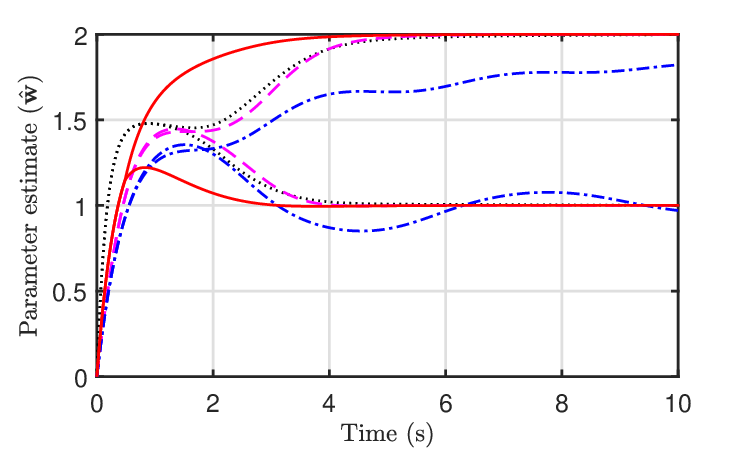}
	\caption{Comparison of the norm of the parameter estimation error and individual parameter estimate of the proposed method with different methods: concurrent learning \cite{chowdhary2013concurrent}, MRE \cite{roy2017uges}, DREM \cite{aranovskiy2016parameters} with $\gamma_{l}=1$.}
	\label{fig:pee_norm_gl_1}
\end{figure}

\begin{figure}[h]
	\includegraphics[width=0.48\textwidth]{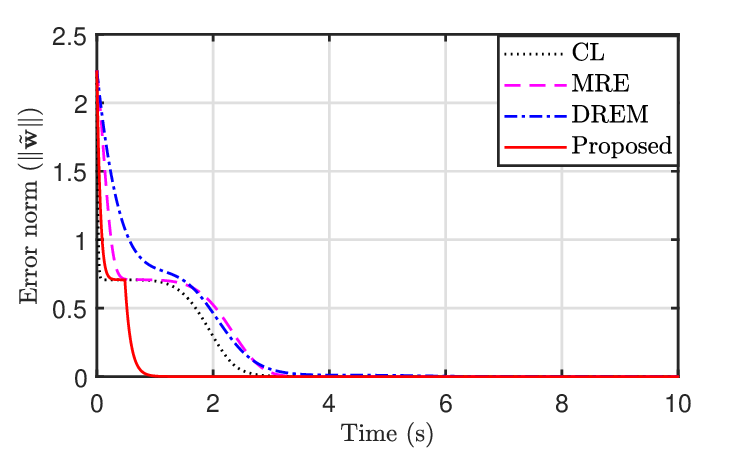}
	\includegraphics[width=0.48\textwidth]{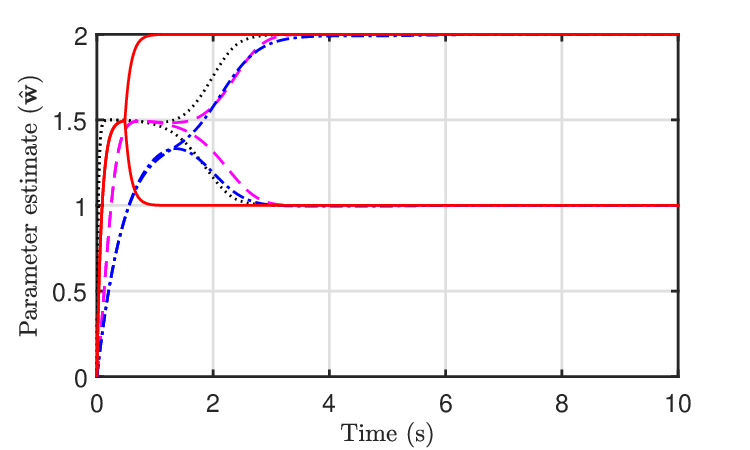}
	\caption{Comparison of the norm of the parameter estimation error and individual parameter estimate of the proposed method with different methods: concurrent learning \cite{chowdhary2013concurrent}, MRE \cite{roy2017uges}, DREM \cite{aranovskiy2016parameters} with $\gamma_{l}=10$.}
	\label{fig:pee_norm_gl_10}
\end{figure}

\textbf{Study 2.} The linear regression model is extended to incorporate a state-dependent regressor, providing a more practically relevant excitation scenario. The modified model is as follows.
\begin{equation*}
	y_{z}=\boldsymbol{\phi}_{z}^{\top} \mathbf{w}_{z}, \quad \boldsymbol{\phi}_{z} = \begin{bmatrix} \mathbf{z}^{\top} & \boldsymbol{\phi}^{\top}	\end{bmatrix}^{\top}, \quad y_{z}=y+\dot{z}_{2}, \quad \mathbf{w}_{z} = \begin{bmatrix} -1 & -1.4 & \mathbf{w}^{\top} \end{bmatrix}^{\top},
\end{equation*}
where $\mathbf{w}_{z}$ is an unknown parameter vector and $\mathbf{z}$ is the state vector governed by the following dynamics.
\begin{equation*}
	\dot{\mathbf{z}} = \begin{bmatrix} 0 & 1 \\ -1 & -1.4 \end{bmatrix} \mathbf{z} + \begin{bmatrix} 0 \\ 1 \end{bmatrix}, \quad \mathbf{z} = \begin{bmatrix} z_{1} & z_{2}	\end{bmatrix}^{\top} \in \mathbb{R}^{2}.
\end{equation*}
For this study, it is assumed that $\dot{z}_{2}$ is measurable, and consequently, $y_{z}$ is also measurable. Using $\gamma_{l}=1$, Figure \ref{fig:pee_norm_gl_1_st_2} shows that only the proposed method drives the estimation error norm exponentially to the origin. The remaining methods fail to achieve convergence at this gain.

To provide a fair comparison, tuned learning rates are applied to each method: 
$\gamma_{l}=\dfrac{200}{\lambda_{\text{max}}\left(\boldsymbol{\Psi}_{m}\right)}$ for CL, $\gamma_{l}=70$ for MRE, and $\gamma_{l}=10^{7}$ for DREM. For CL, the minimum and maximum eigenvalues of the coefficient matrix are of order $10^{-15}$ and $8$, respectively. The severe ill-conditioning requires an excessively large learning rate to compensate for the small minimum eigenvalue, which induces stiffness in the parameter update differential equation and renders CL numerically impractical for this example.

Except for the concurrent learning method, all methods achieve convergence. The parameter estimation error norm for all methods with tuned learning rates is shown in Figure $\ref{fig:norm_gl_req}$. It is to be noted that the learning rate is kept same $\gamma_{l} = 1$ for the proposed method. MRE and DREM achieve convergence at their respective tuned rates; results are shown in Figure \ref{fig:norm_gl_req}. The proposed method achieves convergence with $\gamma_{l} = 1$, unchanged from the baseline comparison. For DREM, the large learning rate is admissible because the coefficient matrix is an identity matrix scaled by $(\det)^2$, with $(\det)^2 \approx 10^{-7}$ here, so $\gamma_{l}=10^{7}$ achieves adequate convergence. However, this determinant depends sensitively on the filter frequency, so time-varying gains would be required to maintain performance across operating conditions.

The results of this study highlight a fundamental distinction between the proposed method and existing approaches. Concurrent learning, MRE, and DREM all exhibit sensitivity to the level of regressor excitation: their convergence rates depend directly on the eigenvalues of their respective coefficient matrices, which in turn depend on the excitation level of the regressor. Consequently, achieving satisfactory parameter convergence requires careful tuning of the learning rate for each method, and a gain that works well under one excitation level may be inadequate or destabilizing under another, as demonstrated by the large learning rates required in this study. The proposed method, by contrast, yields an identity coefficient matrix in the parameter estimation error dynamics under the finite excitation condition, making the convergence rate independent of the regressor excitation level and determined solely by the learning rate $\gamma_l$. This eliminates the need for excitation-aware gain tuning and ensures consistent convergence performance across varying excitation conditions, as confirmed by the results at $\gamma_l = 1$ throughout this study.

\begin{figure}[h]
	\centering
	\includegraphics[width=0.48\textwidth]{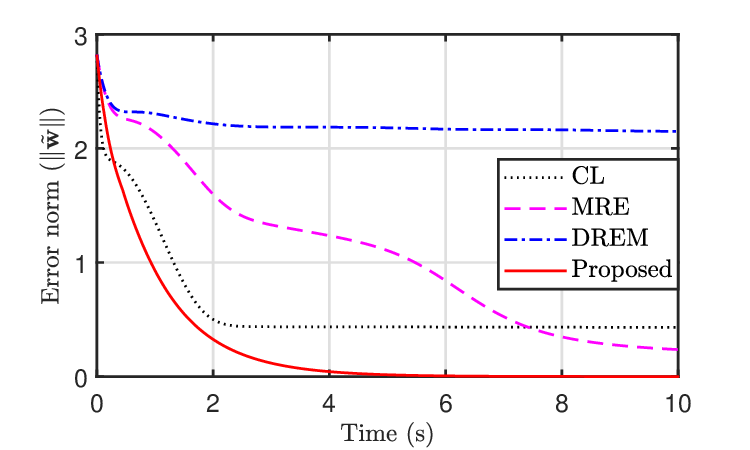}
	\caption{Time evolution of the norm of the parameter estimation error using different adaptation methods with $\gamma_{l}=1$.}
	\label{fig:pee_norm_gl_1_st_2}
\end{figure}

\begin{figure}[h]
	\centering
	\includegraphics[width=0.48\textwidth]{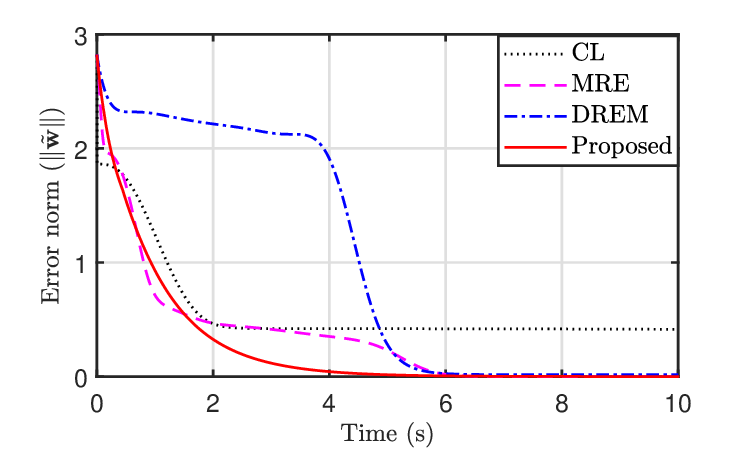}
	\caption{Time evolution of the norm of the parameter estimation error using different adaptation methods with tuned learning rates while $\gamma_{l}=1$ for the proposed method.}
	\label{fig:norm_gl_req}
\end{figure}



\subsection{Aircraft Longitudinal Control}
Consider the following longitudinal dynamics of an aircraft as presented in \cite{lavretsky2009combined}.
\begin{equation*}
	\begin{bmatrix} \dot{\alpha} \\ \dot{q}	\end{bmatrix} = \begin{bmatrix} -1.0189 & 0.9051 \\ 0.8223 & -1.0774	\end{bmatrix} \begin{bmatrix} \alpha \\ q \end{bmatrix} + \begin{bmatrix} -0.0022 \\ -0.1756 \end{bmatrix} \delta_{e},
\end{equation*}
where $\alpha$ is the angle of attack (in radians), $q$ is the pitch rate (in radians/second), and $\delta_{e}$ is the elevator deflection (in degrees), which serves as the control input to the system. 

Three kinds of matched uncertainties are introduced to the longitudinal dynamics of the aircraft system \cite{lavretsky2009combined}. A linear-in-state uncertainty $\boldsymbol{\phi}_{1}\left(\alpha,q\right)$, a constant control effectiveness uncertainty $\Lambda$, and a nonlinear-in-state uncertainty $\boldsymbol{\phi}_{2}\left(\alpha,q\right)$. The modified dynamics is as follows.
\begin{equation*}
	\begin{bmatrix} \dot{\alpha} \\ \dot{q}	\end{bmatrix} = \begin{bmatrix} -1.0189 & 0.9051 \\ 0.8223 & -1.0774	\end{bmatrix} \begin{bmatrix} \alpha \\ q \end{bmatrix} + \begin{bmatrix} -0.0022 \\ -0.1756 \end{bmatrix} \Lambda \left( \delta_{e} + \boldsymbol{\phi}_{1}^{\top} \boldsymbol{\theta}_{1} + \boldsymbol{\phi}_{2}^{\top} \boldsymbol{\theta}_{2}  \right),
\end{equation*} 
where 
\begin{equation*}
	\boldsymbol{\phi}_{1} = \begin{bmatrix} \alpha \\ q \end{bmatrix}, \quad \boldsymbol{\phi}_{2} = \begin{bmatrix} e^{-\frac{\left(\alpha-6^{\circ}\right)}{2\sigma^{2}}} & e^{-\frac{\left(\alpha-4^{\circ}\right)}{2\sigma^{2}}} & e^{-\frac{\left(\alpha-2^{\circ}\right)}{2\sigma^{2}}} & e^{-\frac{\alpha}{2\sigma^{2}}} & e^{-\frac{\left(\alpha+2^{\circ}\right)}{2\sigma^{2}}} & e^{-\frac{\left(\alpha+4^{\circ}\right)}{2\sigma^{2}}} & e^{-\frac{\left(\alpha+6^{\circ}\right)}{2\sigma^{2}}} \end{bmatrix}^{\top},
\end{equation*}
with $\sigma=0.0233$, and the unknown constants are
\begin{equation*}
	\Lambda=0.5, \quad \boldsymbol{\theta}_{1} = \begin{bmatrix} -4.6836 & -9.8197 \end{bmatrix}^{\top}, \quad \boldsymbol{\theta}_{2} = \begin{bmatrix} 0.1 & 0.1 & 0.1 & 0.1 & 0.1 & 0.1 & 0.1 \end{bmatrix}^{\top}.
\end{equation*}
To facilitate command angle of attack tracking, the aircraft dynamics is modified to include the integral of tracking error as follows.
\begin{equation*}
	\begin{bmatrix}	\dot{e}_{I} \\ \dot{\alpha} \\ \dot{q} \end{bmatrix} = \begin{bmatrix} 0 & 1 & 0 \\ 0 & -1.0189 & 0.9051 \\ 0 & 0.8223 & -1.0774 \end{bmatrix} \begin{bmatrix} e_{I} \\ \alpha \\ q \end{bmatrix} + \begin{bmatrix} 0 \\ -0.0022 \\ -0.1756	\end{bmatrix} \Lambda \left( \delta_{e} + \boldsymbol{\phi}_{1}^{\top} \boldsymbol{\theta}_{1} + \boldsymbol{\phi}_{2}^{\top} \boldsymbol{\theta}_{2}  \right) + \begin{bmatrix} -1 \\ 0 \\ 0	\end{bmatrix} \alpha_{\text{cmd}},
\end{equation*}
where $\dot{e}_{I}=\alpha-\alpha_{\text{cmd}}$, and $\alpha_{\text{cmd}}$ is a bounded, time-varying commanded angle of attack. The corresponding reference model $\eqref{eq:reference_system}$ is as follows.
\begin{equation*}
	\begin{bmatrix}	\dot{e}_{Ir} \\ \dot{\alpha}_{r} \\ \dot{q}_{r} \end{bmatrix} = \begin{bmatrix} 0 & 1 & 0 \\ -0.0220 & -1.0428 & 0.8918 \\ -1.7560 & -1.0880 & -2.1413	\end{bmatrix} \begin{bmatrix} e_{Ir} \\ \alpha_{r} \\ q_{r} \end{bmatrix} + \begin{bmatrix} -1 \\ 0 \\ 0	\end{bmatrix} \alpha_{\text{cmd}}.
\end{equation*}
The baseline control law, $\delta_{e}=\mathbf{k}_{x}^{\top} \begin{bmatrix} e_{I} & \alpha & q \end{bmatrix}^{\top}$, satisfies the matching condition $\eqref{eq:matching_condition}$ with the nominal gain  $\mathbf{k}_{x}=\begin{bmatrix} 10 & 10.8786 & 6.0589 \end{bmatrix}^{\top}$ when $\Lambda=1$. To accommodate uncertainties, the control law is augmented.
\begin{equation*}
	\delta_{e} = \mathbf{k}_{x}^{\top} \begin{bmatrix} e_{I} & \alpha & q \end{bmatrix}^{\top} - \hat{\mathbf{k}}^{\top} \begin{bmatrix} e_{I} & \alpha & q \end{bmatrix}^{\top} - \boldsymbol{\phi}_{1}^{\top} \hat{\boldsymbol{\theta}}_{1} + \boldsymbol{\phi}_{2}^{\top} \hat{\boldsymbol{\theta}}_{2},
\end{equation*}
where $\hat{\mathbf{k}}$ compensates for deviations in the nominal gain due to the uncertainty in the control effectiveness constant $\Lambda$. The estimates  $\hat{\mathbf{k}}$, $\hat{\boldsymbol{\theta}}_{1}$, and $\hat{\boldsymbol{\theta}}_{2}$ are updated using the adaptation laws in $\eqref{eq:controller_gain_update_law_proposed}$. All controller and adaptation gains are selected as given in \cite{lavretsky2009combined}, except for the memory adaptation gain, which is set to unity in  $\eqref{eq:adaptation_law_system_parameter}$, $\underline{\lambda}=0.25$. The algorithm parameters are $\delta_{1}=0.25$ and $\delta_{2}=0.005$. The time evolution of the angle of attack and the required elevator deflection are shown in Figure $\ref{fig:aoa_delta_e}$. Once the finite excitation condition is verified online, the proposed method achieves precise angle of attack tracking with reduced oscillations in the elevator deflection, attributed to improved parameter estimation under the combined adaptation law. Unlike in \cite{lavretsky2009combined}, where parameter estimates vary with each command transition, the proposed memory-based approach maintains stable estimates across such transitions, resulting in smaller transients in the elevator deflection profile in response to changes in the reference signal. The norms of the parameter estimation errors $\left\| \tilde{\Lambda} \right\|$, $\left\| \tilde{\boldsymbol{\theta}}_{1} \right\|$, $\left\| \tilde{\boldsymbol{\theta}}_{2} \right\|$, and the controller gain estimation error $\left\| \tilde{\mathbf{k}} \right\|$ are shown in Figures  $\ref{fig:parameter_norm}$ and $\ref{fig:gain_error_norm}$, respectively. The tracking error norm shown in Figure $\ref{fig:tracking_error_norm}$ confirms superior closed-loop performance with the proposed combined adaptive controller. All estimation errors converge to the origin under the proposed combined adaptive controller, confirming that the finite excitation condition is sufficient for exponential parameter convergence in the absence of nonparametric uncertainties.


\begin{figure}[h]
	\centering
	\includegraphics[width=0.48\textwidth]{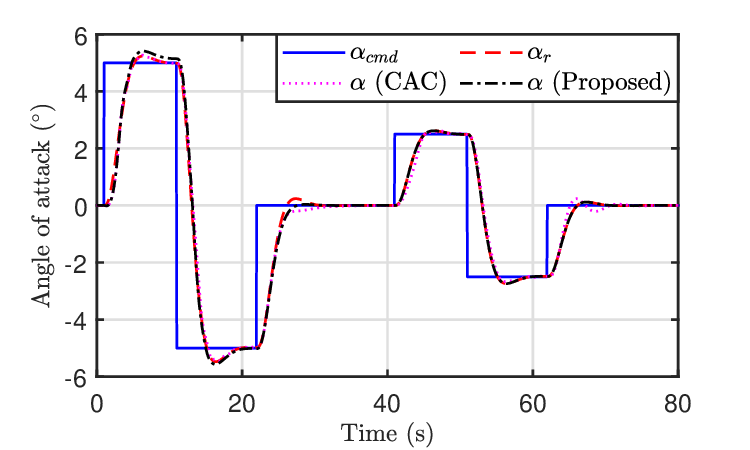}
	\includegraphics[width=0.48\textwidth]{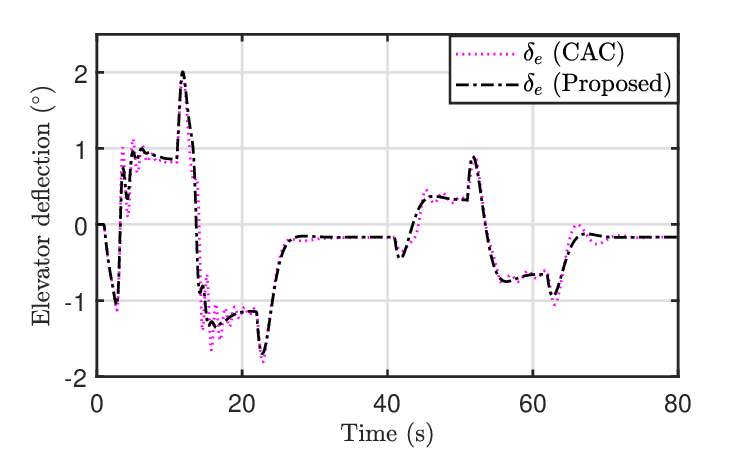}
	\caption{Time evolution of the angle of attack and the elevator deflection using the combined adaptive controller \cite{lavretsky2009combined} and the proposed method.}
	\label{fig:aoa_delta_e}
\end{figure}

\begin{figure}[h]
	\centering
	\includegraphics[width=0.48\textwidth]{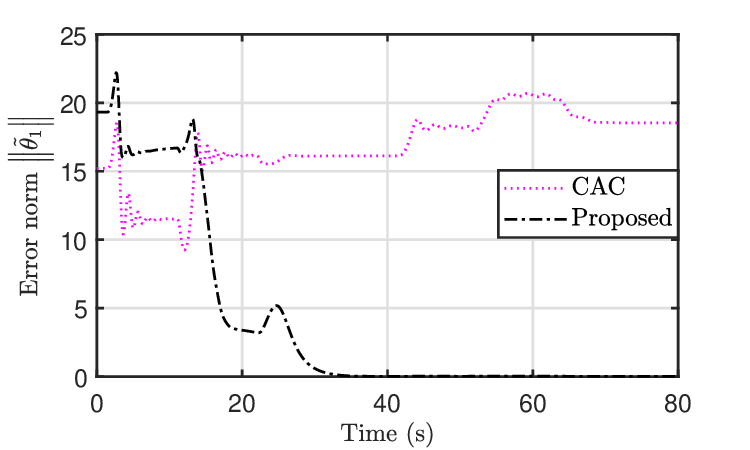}
	\includegraphics[width=0.48\textwidth]{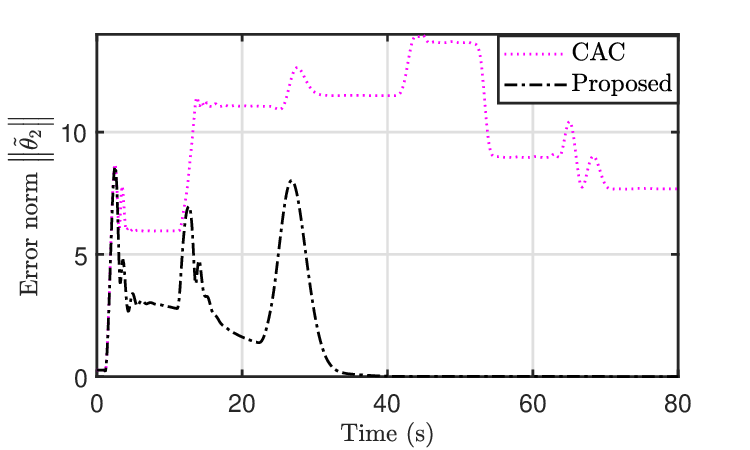}
	\caption{Time evolution of the parameter estimation errors norm using the combined adaptive controller \cite{lavretsky2009combined} and the proposed method.}
	\label{fig:gain_error_norm}
\end{figure}

\begin{figure}[h]
	\centering
	\includegraphics[width=0.48\textwidth]{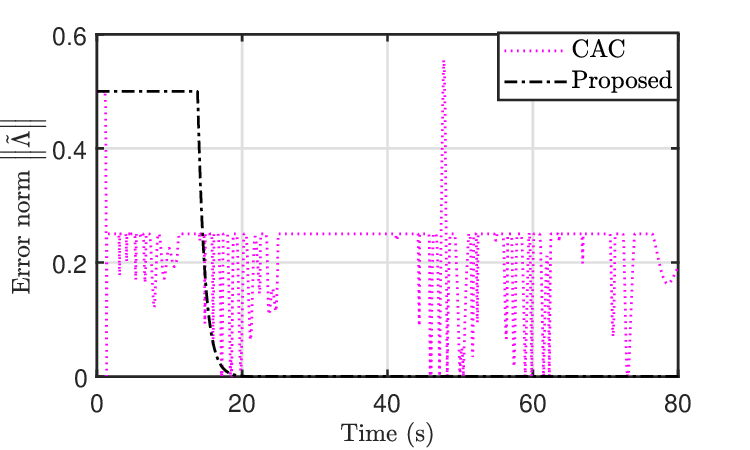}
	\includegraphics[width=0.48\textwidth]{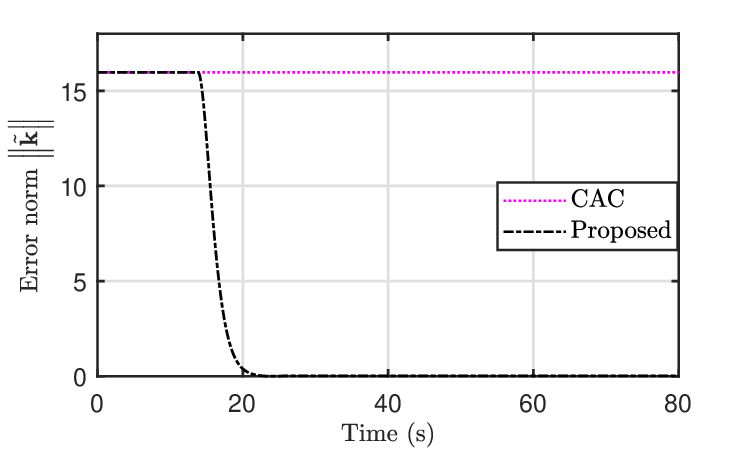}
	\caption{Time evolution of the parameter and controller gain estimation errors norm using the combined adaptive controller \cite{lavretsky2009combined} and the proposed method.}
	\label{fig:parameter_norm}
\end{figure}

\begin{figure}[h]
	\centering
	\includegraphics[width=0.48\textwidth]{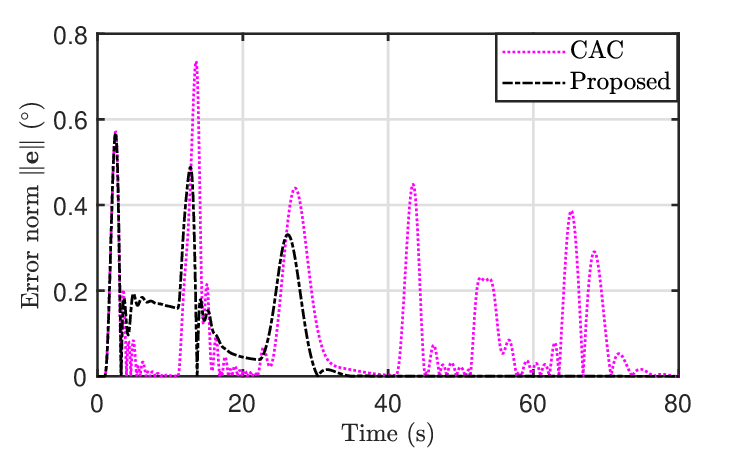}
	\caption{Time evolution of the tracking error norm using the combined adaptive controller \cite{lavretsky2009combined} and the proposed method.}
	\label{fig:tracking_error_norm}
\end{figure}

\section{Conclusion}
\label{section:conclusion}
This paper presented a robust combined adaptive control framework for a class of nonlinear multi-input multi-output systems with an unknown diagonal control effectiveness matrix, unknown system parameters, and bounded nonparametric uncertainties. A new algorithm was developed to construct a memory term via the Modified Gram-Schmidt orthogonalization, yielding an identity coefficient matrix in the parameter estimation error dynamics under the finite excitation condition. This structure eliminated the need for time-varying adaptation gains and enabled the derivation of an explicit ultimate bound on the parameter estimation error. A combined adaptation law was proposed in which controller gain updates are activated online upon verification of the finite excitation condition, guaranteeing exponential decay of the combined tracking and estimation errors to a neighborhood of the origin with a rate independent of regressor excitation. 

\section*{Appendix}
\subsection*{A - Proof of Proposition $\ref{proposition:direct_adaptive}$}
\begin{proof}
Consider the Lyapunov function candidate in terms of the tracking error and the estimation errors of the controller gains.
\begin{equation}
	V_{e} = \mathbf{e}^{\top} \mathbf{P} \mathbf{e} + \tr(\tilde{\mathbf{K}}_{x}^{\top} \tilde{\mathbf{K}}_{x}  \boldsymbol{\Lambda}_{s}  \boldsymbol{\Lambda}) + \tr( \tilde{\mathbf{K}}_{r}^{\top} \tilde{\mathbf{K}}_{r}  \boldsymbol{\Lambda}_{s}  \boldsymbol{\Lambda}) + \tr(\tilde{ \boldsymbol{\Theta}}^{\top} \tilde{ \boldsymbol{\Theta}}  \boldsymbol{\Lambda}_{s}  \boldsymbol{\Lambda}),
	\label{eq:V_gradient}
\end{equation}
where $ \boldsymbol{\Lambda}_{s}  \boldsymbol{\Lambda}$ is a positive definite diagonal matrix [\textit{Assumption} $\ref{assump:Lambda}$]. The time derivative of the Lyapunov function along the error trajectory $\eqref{eq:tracking_error}$ is
\begin{equation*}
	\begin{aligned}
		\dot{V}_{e} =& \mathbf{e}^{\top} \left( \mathbf{A}_{r}^{\top} \mathbf{P} + \mathbf{P} \mathbf{A}_{r} \right) \mathbf{e} + 2 \mathbf{e}^{\top} \mathbf{P} \mathbf{B}  \boldsymbol{\Lambda} \tilde{\mathbf{K}}_{x}^{\top} \mathbf{x} + 2 \mathbf{e}^{\top} \mathbf{P} \mathbf{B}  \boldsymbol{\Lambda} \tilde{\mathbf{K}}_{r}^{\top} \mathbf{r} - 2 \mathbf{e}^{\top} \mathbf{P} \mathbf{B}  \boldsymbol{\Lambda} \tilde{ \boldsymbol{\Theta}}^{\top} \boldsymbol{\phi}(\mathbf{x}) \\
		& + 2 \mathbf{e}^{\top} \mathbf{P} \boldsymbol{\xi} + 2 \tr(\tilde{\mathbf{K}}_{x}^{\top} \dot{\tilde{\mathbf{K}}}_{x}  \boldsymbol{\Lambda}_{s}  \boldsymbol{\Lambda}) + 2 \tr(\tilde{\mathbf{K}}_{r}^{\top} \dot{\tilde{\mathbf{K}}}_{r}  \boldsymbol{\Lambda}_{s}  \boldsymbol{\Lambda}) + 2 \tr(\tilde{ \boldsymbol{\Theta}}^{\top} \dot{\tilde{ \boldsymbol{\Theta}}}  \boldsymbol{\Lambda}_{s}  \boldsymbol{\Lambda}).
	\end{aligned}
\end{equation*}
The cross terms $2 \mathbf{e}^{\top} \mathbf{P} \mathbf{B}  \boldsymbol{\Lambda} \tilde{\mathbf{K}}_{x}^{\top} \mathbf{x}$ and $2 \tr(\tilde{\mathbf{K}}_{x}^{\top} \dot{\tilde{\mathbf{K}}}_{x}  \boldsymbol{\Lambda}_{s}  \boldsymbol{\Lambda})$ cancel upon substituting the update law $\eqref{eq:controller_gain_update_law_gradient}$ into the trace terms, using the identity $\text{tr} \left( \mathbf{m}_{1} \mathbf{m}_{2}^{\top} \right) = \mathbf{m}_{1}^{\top} \mathbf{m}_{2}$ (Preliminaries, identity 5) and the fact that $\boldsymbol{\Lambda}_{s}  \boldsymbol{\Lambda}$ is a diagonal positive definite matrix (Assumption 2); identical cancellations hold for the $\tilde{\mathbf{K}}_{r}$ and $\tilde{\boldsymbol{\Theta}}$ terms, yielding the simplified expression below.
\begin{equation*}
	\dot{V}_{e} = - \mathbf{e}^{\top} \mathbf{Q} \mathbf{e} + 2 \mathbf{e}^{\top} \mathbf{P} \boldsymbol{\xi} - 2 \tr(\tilde{\mathbf{K}}_{x}^{\top} \hat{\mathbf{K}}_{x}  \boldsymbol{\Lambda}_{s}  \boldsymbol{\Lambda}) - 2 \tr(\tilde{\mathbf{K}}_{r}^{\top} \hat{\mathbf{K}}_{r}  \boldsymbol{\Lambda}_{s}  \boldsymbol{\Lambda}) - 2 \tr(\tilde{ \boldsymbol{\Theta}}^{\top} \hat{ \boldsymbol{\Theta}}  \boldsymbol{\Lambda}_{s}  \boldsymbol{\Lambda}).
\end{equation*}
Applying the matrix inequalities discussed in Section $\ref{section:Preliminaries}$ and the scalar inequality $2ab \leq \epsilon a^{2} + b^{2}/\epsilon$ for any $a,b \in \mathbb{R}$ and $\epsilon = \lambda_{\text{min}} \left(  \boldsymbol{\Lambda}_{s}  \boldsymbol{\Lambda} \right)$, we further obtain
\begin{equation}
	\begin{aligned}
		\dot{V}_{e} \leq & - \dfrac{1}{2} \lambda_{\text{min}} \left( \mathbf{Q} \right) \left\| \mathbf{e} \right\|^{2} - \lambda_{\text{min}} \left(  \boldsymbol{\Lambda}_{s}  \boldsymbol{\Lambda} \right) \left( \left\| \tilde{\mathbf{K}}_{x} \right\|_{F}^{2} + \left\| \tilde{\mathbf{K}}_{r} \right\|_{F}^{2} + \left\| \tilde{ \boldsymbol{\Theta}} \right\|_{F}^{2} \right) \\
		& + \underbrace{2 \overline{\xi}^{2} \dfrac{\lambda_{\text{max}}^{2} \left( \mathbf{P} \right)}{\lambda_{\text{min}} \left( \mathbf{Q} \right) }}_{\rho_{1}} + \underbrace{\dfrac{ \lambda_{\text{max}}^{2} \left(  \boldsymbol{\Lambda}_{s}  \boldsymbol{\Lambda} \right)}{\lambda_{\text{min}} \left(  \boldsymbol{\Lambda}_{s}  \boldsymbol{\Lambda} \right)} \left( \left\| \mathbf{K}_{x} \right\|_{F}^{2} + \left\| \mathbf{K}_{r} \right\|_{F}^{2} + \left\|  \boldsymbol{\Theta} \right\|_{F}^{2} \right)}_{\rho_{2}}.
	\end{aligned}
	\label{eq:Vdot_gradient}
\end{equation}
Let us define a few \textit{class}-$\mathcal{K}$ functions.
\begin{equation}
	\begin{cases}
		\alpha_{1} \left(\left\| \boldsymbol{\chi}_{e}\right\|\right) = \text{min} \left( \lambda_{\text{min}} \left( \mathbf{P} \right), \lambda_{\text{min}} \left(  \boldsymbol{\Lambda}_{s}  \boldsymbol{\Lambda} \right)\right) \left\|\boldsymbol{\chi}_{e}\right\|^{2}, & \\
		\beta_{1} \left(\left\| \boldsymbol{\chi}_{e}\right\|\right) = \text{max} \left( \lambda_{\text{max}} \left( \mathbf{P} \right), \lambda_{\text{max}} \left(  \boldsymbol{\Lambda}_{s}  \boldsymbol{\Lambda} \right)\right) \left\|\boldsymbol{\chi}_{e}\right\|^{2}, & \\
		\gamma_{1} \left(\left\| \boldsymbol{\chi}_{e}\right\|\right) = \text{min} \left( 0.5 \lambda_{\text{min}} \left( \mathbf{Q} \right), \lambda_{\text{min}} \left(  \boldsymbol{\Lambda}_{s}  \boldsymbol{\Lambda} \right) \right) \left\|\boldsymbol{\chi}_{e}\right\|^{2},
	\end{cases} \boldsymbol{\chi}_{e} = \begin{bmatrix} \mathbf{e}^{\top} & vec^{\top}\left(\tilde{\mathbf{K}}_{x}\right) & vec^{\top}\left(\tilde{\mathbf{K}}_{r}\right) & vec^{\top}\left(\tilde{ \boldsymbol{\Theta}}\right) \end{bmatrix}^{\top},
	\label{eq:gamma_functions_gradient}
\end{equation}
where $vec\left(\cdot\right)$ denotes the column-wise vectorization of a matrix, and $\boldsymbol{\chi}_{e}$ represents the combined error vector. The Lyapunov function $\eqref{eq:V_gradient}$ satisfies $\alpha_{1} \left(\left\| \boldsymbol{\chi}_{e}\right\|\right) \leq V_{e} \leq \beta_{1} \left(\left\| \boldsymbol{\chi}_{e}\right\|\right)$, and using $\eqref{eq:Vdot_gradient}$ and $\eqref{eq:gamma_functions_gradient}$, we obtain
\begin{equation*}
	\dot{V}_{e} \leq -\gamma_{1} \left(\left\| \boldsymbol{\chi}_{e}\right\|\right) + \rho_{1} + \rho_{2}.
\end{equation*}
This implies that all closed-loop errors are uniformly ultimately bounded, with an ultimate bound of $\rho_{1}+\rho_{2}$, [\textit{Theorem 13.1}, \cite{khalil2002nonlinear}]. Here, $\rho_{1}$ depends on the upper-bound of the nonparametric uncertainty $\overline{\xi}$, while $\rho_{2}$ depends on the ideal values of the controller gains $\mathbf{K}_{x}$ and $\mathbf{K}_{r}$, and the system parameter $\boldsymbol{\Theta}$. In the absence of nonparametric uncertainty $\boldsymbol{\xi}$, $\rho_{1}$ becomes zero, and the ultimate bound shrinks to $\rho_{2}$.
\end{proof}

\subsection*{B - Proof of Proposition $\ref{prop:parameter_estimation_error_bound}$}
\begin{proof}
From $\eqref{eq:system1}$ and $\eqref{eq:system_lip}$, $\mathbf{w} = vec\left(\begin{bmatrix} \mathbf{B}^{\dagger} \mathbf{A} &  \boldsymbol{\Lambda} &  \boldsymbol{\Lambda}  \boldsymbol{\Theta}^{\top} \end{bmatrix}^{\top}\right)$, and $\hat{\mathbf{w}}$ and $\tilde{\mathbf{w}}$ are obtained as follows.
\begin{equation*}
	\hat{\mathbf{w}} = vec\left(\begin{bmatrix} \mathbf{B}^{\dagger} \hat{\mathbf{A}} & \hat{ \boldsymbol{\Lambda}} & \hat{ \boldsymbol{\Lambda}} \hat{ \boldsymbol{\Theta}}^{\top} \end{bmatrix}^{\top}\right) \text{ and } \tilde{\mathbf{w}} = vec\left(\begin{bmatrix} \mathbf{B}^{\dagger} \tilde{\mathbf{A}} & \tilde{ \boldsymbol{\Lambda}} & \hat{ \boldsymbol{\Lambda}} \tilde{ \boldsymbol{\Theta}}^{\top} + \tilde{ \boldsymbol{\Lambda}}  \boldsymbol{\Theta}^{\top} \end{bmatrix}^{\top}\right).
\end{equation*}
Simplifying, $\hat{ \boldsymbol{\Lambda}} \hat{ \boldsymbol{\Theta}}^{\top} -  \boldsymbol{\Lambda}  \boldsymbol{\Theta}^{\top} = \hat{ \boldsymbol{\Lambda}} \hat{ \boldsymbol{\Theta}}^{\top} - \left(\hat{ \boldsymbol{\Lambda}} - \tilde{ \boldsymbol{\Lambda}}\right)  \boldsymbol{\Theta}^{\top} = \hat{ \boldsymbol{\Lambda}} \tilde{ \boldsymbol{\Theta}}^{\top} + \tilde{ \boldsymbol{\Lambda}}  \boldsymbol{\Theta}^{\top}$. Using the property that the Frobenius norm equals the Euclidean norm of the vectorized matrix.
\begin{equation*}
	\left\| \begin{bmatrix} \mathbf{B}^{\dagger} \tilde{\mathbf{A}} & \tilde{ \boldsymbol{\Lambda}} & \hat{ \boldsymbol{\Lambda}} \tilde{ \boldsymbol{\Theta}}^{\top} + \tilde{ \boldsymbol{\Lambda}}  \boldsymbol{\Theta}^{\top} \end{bmatrix} \right\|_{F} = \left\| \tilde{\mathbf{w}} \right\|,
\end{equation*}
which can be further simplified as following using $\left\| \begin{bmatrix} \mathbf{Q}_{1} & \mathbf{Q}_{2} \end{bmatrix} \right\|_{F}^{2} = \left\| \mathbf{Q}_{1} \right\|_{F}^{2} + \left\| \mathbf{Q}_{2} \right\|_{F}^{2}$
\begin{equation*}
	\left\| \mathbf{B}^{\dagger} \tilde{\mathbf{A}} \right\|_{F}^{2} + \left\| \tilde{ \boldsymbol{\Lambda}} \right\|_{F}^{2} + \left\| \hat{ \boldsymbol{\Lambda}} \tilde{ \boldsymbol{\Theta}}^{\top} + \tilde{ \boldsymbol{\Lambda}}  \boldsymbol{\Theta}^{\top} \right\|_{F}^{2} = \left\| \tilde{\mathbf{w}} \right\|^{2}.
\end{equation*}
Therefore, the following inequalities directly follow from the above relation
\begin{equation*}
	\begin{aligned}
		\left\| \mathbf{B}^{\dagger} \tilde{\mathbf{A}} \right\|_{F}^{2} \leq \left\| \tilde{\mathbf{w}} \right\|^{2} & \implies \left\| \mathbf{B}^{\dagger} \tilde{\mathbf{A}} \right\|_{F} \leq \left\| \tilde{\mathbf{w}} \right\|, \\
		\left\| \tilde{\boldsymbol{\Lambda}} \right\|_{F}^{2} \leq \left\| \tilde{\mathbf{w}} \right\|^{2} & \implies \left\| \tilde{\boldsymbol{\Lambda}} \right\|_{F} \leq \left\| \tilde{\mathbf{w}} \right\|.
	\end{aligned}
\end{equation*}
This completes the proof.
\end{proof}

\subsection*{C - Proof of Proposition $\ref{prop:controller_gain_estimation_error_dynamics}$}
\begin{proof}
The third term in $\dot{\hat{\mathbf{K}}}_{x}$ is expanded by substituting the equivalent expression for $\mathbf{A}_{r}$ from the matching condition $\eqref{eq:matching_condition}$.
\begin{equation*}
	\left( \mathbf{B}^{\dagger} \mathbf{A}_{r} - \mathbf{B}^{\dagger} \hat{\mathbf{A}} \right)^{\top} - \hat{\mathbf{K}}_{x} \hat{\boldsymbol{\Lambda}} = \left( \mathbf{B}^{\dagger} \left( \mathbf{A} + \mathbf{B} \boldsymbol{\Lambda} \mathbf{K}_{x}^{\top} \right) - \mathbf{B}^{\dagger} \hat{\mathbf{A}} \right)^{\top} - \hat{\mathbf{K}}_{x} \hat{\boldsymbol{\Lambda}}.
\end{equation*}
Substituting $\mathbf{A} = \hat{\mathbf{A}} - \tilde{\mathbf{A}}$ and $ \boldsymbol{\Lambda} = \hat{ \boldsymbol{\Lambda}} - \tilde{ \boldsymbol{\Lambda}}$ leads to the following.
\begin{equation*}
	\left( \mathbf{B}^{\dagger} \mathbf{A}_{r} - \mathbf{B}^{\dagger} \hat{\mathbf{A}} \right)^{\top} - \hat{\mathbf{K}}_{x} \hat{ \boldsymbol{\Lambda}} = \left( \mathbf{B}^{\dagger} \left( \hat{\mathbf{A}} - \tilde{\mathbf{A}} + \mathbf{B} \hat{ \boldsymbol{\Lambda}} \mathbf{K}_{x}^{\top} - \mathbf{B} \tilde{ \boldsymbol{\Lambda}} \mathbf{K}_{x}^{\top} \right) - \mathbf{B}^{\dagger} \hat{\mathbf{A}} \right)^{\top} - \hat{\mathbf{K}}_{x} \hat{ \boldsymbol{\Lambda}}.
\end{equation*}
Simplifying and collecting terms involving parameter estimation errors.
\begin{equation*}
	\left( \mathbf{B}^{\dagger} \mathbf{A}_{r} - \mathbf{B}^{\dagger} \hat{\mathbf{A}} \right)^{\top} - \hat{\mathbf{K}}_{x} \hat{ \boldsymbol{\Lambda}} = - \tilde{\mathbf{K}}_{x} \hat{ \boldsymbol{\Lambda}} + \left( -\mathbf{B}^{\dagger} \tilde{\mathbf{A}} - \tilde{ \boldsymbol{\Lambda}} \mathbf{K}_{x}^{\top} \right)^{\top}.
\end{equation*}
Substituting this expression into $\dot{\hat{\mathbf{K}}}_{x}$ results in the parameter error dynamics $\eqref{eq:controller_gain_estimation_error_dynamics}$. Similarly, the third term in the $\dot{\hat{\mathbf{K}}}_{r}$ is simplified as follows.
\begin{equation*}
	\begin{aligned}
		\left( \mathbf{B}^{\dagger} \mathbf{B}_{r} \right)^{\top} - \hat{\mathbf{K}}_{r} \hat{ \boldsymbol{\Lambda}} &= \left( \hat{ \boldsymbol{\Lambda}} \mathbf{K}_{r}^{\top} - \tilde{ \boldsymbol{\Lambda}} \mathbf{K}_{r}^{\top} \right)^{\top} - \hat{\mathbf{K}}_{r} \hat{ \boldsymbol{\Lambda}}, \\
		&= - \tilde{\mathbf{K}}_{r} \hat{ \boldsymbol{\Lambda}} - \mathbf{K}_{r} \tilde{ \boldsymbol{\Lambda}}.
	\end{aligned}
\end{equation*}
Substituting this into $\dot{\hat{\mathbf{K}}}_{r}$ gives the stated dynamics for $\tilde{\mathbf{K}}_{r}$. Finally, since  $\hat{ \boldsymbol{\Lambda}}_{ \boldsymbol{\Theta}}$ is the estimate of $ \boldsymbol{\Lambda}  \boldsymbol{\Theta}^{\top}$, it can be written as $\left(\hat{ \boldsymbol{\Lambda}} - \tilde{ \boldsymbol{\Lambda}}\right)  \boldsymbol{\Theta}^{\top}$. Substituting this into the third term of $\dot{\hat{ \boldsymbol{\Theta}}}$, we obtain the following.
\begin{equation*}
	\begin{aligned}
		\hat{ \boldsymbol{\Lambda}}_{ \boldsymbol{\Theta}}^{\top} - \hat{ \boldsymbol{\Theta}} \hat{ \boldsymbol{\Lambda}} &=  \boldsymbol{\Theta} \hat{ \boldsymbol{\Lambda}} -  \boldsymbol{\Theta} \tilde{ \boldsymbol{\Lambda}} - \hat{ \boldsymbol{\Theta}} \hat{ \boldsymbol{\Lambda}}, \\
		&= - \tilde{ \boldsymbol{\Theta}} \hat{ \boldsymbol{\Lambda}} -  \boldsymbol{\Theta} \tilde{ \boldsymbol{\Lambda}},
	\end{aligned}
\end{equation*}
which leads to the desired dynamics for $\tilde{ \boldsymbol{\Theta}}$. This completes the proof.
\end{proof}

\bibliographystyle{refsort} 
\bibliography{ReferenceList}

\end{document}